\newcommand\smallO{
	{{\scriptscriptstyle\mathcal{O}}}
}
\newcommand{\pa}{\partial}
\newcommand{\eps}{\varepsilon}
\newcommand{\bR}{\mathbb{R}^d}
\newcommand{\bRp}{[0,\infty)}
\newcommand{\bRfp}{\mathbb{R}^d \times [0,\infty)}
\newcommand{\bS}{\mathbb{S}^{d-1}}
\newcommand{\md}{\mathrm{d}}
\newcommand{\F}{\mathbb{F}}
\newcommand{\G}{\mathbb{G}}
\newcommand{\Q}{\mathbb{Q}}
\newcommand{\Wr}{\mathbb{W}_\rho}
\newcommand{\Fr}{\mathbb{F}_\rho}
\newcommand{\la}{\langle}
\newcommand{\ra}{\rangle}
\newcommand{\Etot}{E^{\la \ra}}
\newcommand{\Vs}{\hat{V} \cdot \sigma}
\newcommand{\bs}{\bar{s}_{ij}}
\newcommand{\buR}{\tilde{b}_{ij}^{ub}(R)}
\newcommand{\blR}{\tilde{b}_{ij}^{lb}(R)}
\newcommand{\buRr}{\tilde{b}_{ij}^{ub}(r,R)}
\newcommand{\blRr}{\tilde{b}_{ij}^{lb}(r,R)}
\newcommand{\dpo}{d_{ij}(r,R)}
\newcommand{\dm}{d_i(R)}
\newcommand{\kub}{\max_{1 \leq i, j  \leq P}(\kappa_{ij}^{ub}) }
\newcommand{\ru}{\rho_{ij}^{ub}}
\newcommand{\nb}{ \left\| b_{ij} \right\|_{L^1_{\bS}} }
\newcommand{\ks}{\bar{k}_*}
\newcommand{\kss}{k_*}
\newcommand{\mP}{\mathrm{m}}
\newcommand{\mM}{\mathrm{m}}
\newcommand{\Cpov}{\mathcal{C}^{ij}_k}
\newcommand{\mfi}{\mathfrak{m}^i}
\newcommand{\mfj}{\mathfrak{m}^j}
\newcommand{\mfl}{\mathfrak{m}^\ell}
\newcommand{\g}{\gamma_{ij}}
\newcommand{\gi}{\bar{\bar{\gamma}}_{i}}
\newcommand{\gj}{\bar{\bar{\gamma}}_{j}}
\newcommand{\gl}{\bar{\bar{\gamma}}_{\ell}}
\newcommand{\gw}{\bar{\gamma}}
\newcommand{\gm}{\bar{\bar{\gamma}}}
\newcommand{\As}{A_{\star}}
\newcommand{\Cs}{C_{\star}}
\newcommand{\Ek}{E_{k}}
\newcommand{\Eks}{E_{\kss}}
\newcommand{\Lqi}{L^1_{q,i}}
\newcommand{\Lq}{L^1_{q}}
\newtheorem{theorem}{Theorem}[section]
\newtheorem{corollary}{Corollary}
\newtheorem{lemma}[theorem]{Lemma}
\newtheorem{proposition}[theorem]{Proposition}
\newtheorem{definition}[theorem]{Definition}
\newtheorem{remark}{Remark}
\title[]
{The Cauchy problem for Boltzmann bi-linear systems:\\
The mixing of monatomic and polyatomic gases}
\author[Ricardo J. Alonso, Irene M. Gamba and Milana Pavi\'c-\v Coli\'{c}]{}
\email{ricardo.alonso@qatar.tamu.edu}
\email{gamba@math.utexas.edu}
\email{milana.pavic@dmi.uns.ac.rs}
\begin{document}
	
\maketitle

\bibliographystyle{abbrv}

\medskip

\centerline{\scshape Ricardo J. Alonso}
{\footnotesize
	\centerline{Texas A\&M}
	\centerline{Division of Arts and Sciences}
	\centerline{Education City, Doha, Qatar }
}

\medskip
\centerline{\scshape Irene M. Gamba}
{\footnotesize
	\centerline{Department of Mathematics and Oden Institute of Computational Engineering and Sciences}
	\centerline{University of Texas at Austin}
	\centerline{201 E. 24th Street, POB 4.102}
	\centerline{1 University Station (C0200)
		Austin, Texas, 78712-1229,
		USA}
}

\medskip

\centerline{\scshape Milana Pavi\'c-\v Coli\'{c}}
{\footnotesize
	\centerline{Department of Mathematics and Informatics}
	\centerline{Faculty of Sciences, University of Novi Sad}
	\centerline{Trg Dositeja Obradovi\'ca 4, 21000 Novi Sad, Serbia}
}

\medskip
{\footnotesize
	\centerline{Applied and Computational Mathematics}
	\centerline{RWTH Aachen University }
	\centerline{ Schinkelstr. 2, 52062
		Aachen, Germany }
}

\begin{abstract}

From a unified vision of vector valued solutions in weighted  Banach spaces, this manuscript establishes the existence and uniqueness for space homogeneous Boltzmann bi-linear systems with conservative collisional forms arising in complex  gas dynamical structures.  This broader vision is directly applied to dilute multi-component gas mixtures  composed of both monatomic and polyatomic gases. Such models can be viewed as extensions of scalar Boltzmann binary elastic flows, as much as monatomic gas mixtures with disparate masses and single  polyatomic gases, providing a unify approach for vector valued solutions in weighted vector Banach spaces.

Novel aspects of this work include developing the extension of a general ODE theory in vector valued weighted Banach spaces, precise lower bounds for the collision frequency in terms of the weighted Banach norm, energy identities and the consequently angular or compact manifold averaging lemmas which provide coerciveness resulting into global in time stability, a new combinatorics estimate for $p$-binomial forms producing sharper estimates for the $k$-moments of bi-linear collisional forms. These techniques enable the Cauchy problem improvement that resolves the model with initial data corresponding to strictly positive and bounded initial vector valued mass and total energy, in addition to only $2^+$ moment determined by the hard potential rates discrepancy, a result comparable in generality to the classical Cauchy theory of the scalar homogeneous Boltzmann equation. 
\end{abstract}

\noindent
\textbf{Keywords.} System of Boltzmann equations; Compact Manifold  Averaging; Statistical moment estimates for bi-linear integral forms;  Multi-component gas mixtures.

\smallskip
\noindent
\textbf{AMS subject classifications.} 35Q20, 76P05, 82C40.


\section{Introduction}
The goal of the present paper is to establish the rigorous result on existence and uniqueness for the coupled system of space homogeneous Boltzmann equations modelling a mixture composed of $P\geq1$ species of monatomic and polyatomic gases proposed in \cite{LD-ch-ens}.  The idea of such kinetic model is to describe a state of each gas component with its own single-particle distribution function. Fixing one mixture component, say $\mathcal{A}_i$, for any $i\in \left\{1, \dots, P\right\}$, the main mechanism driving the change of the corresponding distribution function is its interactions with all other mixture constituents through Boltzmann-like bi-linear collision forms, leading to a coupled system governing the dynamics of the gas mixture as a whole.\\ 

A core ingredient of the Boltzmann flow is the bi-linear collision form or operator, an integral operator which describes the mutual species-species interactions. In the current setting of a mixture of monatomic and polyatomic molecules there are four types of collision operators depending on whether the fixed species $\mathcal{A}_i$ is monatomic or polyatomic and whether the collision partner of such component belongs to a monatomic or polyatomic mixture component.  We prove estimates on statistical moments of the vector valued collision operator that accounts for different possible interactions among species, which then allow to prove \textit{a priori} estimates on the system solution.  The collision operator's kernels are assumed to be of hard potential form for the total collisional energy for at least one species-species interaction per each mixture component allowing Maxwell interactions as well.   The scattering part of such kernels is assumed integrable, as much as the part related to the energy exchange variables for polyatomic mixture components. Such a family of collision kernels is shown to be relevant in physics and engineering  applications that involve polyatomic gas interactions. Applying a general ODE theory in Banach spaces, the Cauchy problem is resolved for the initial data corresponding to strictly positive and bounded initial species' mass and bounded mixture's total energy, and a $2^+$ moment determined by the hard potential rates discrepancy.  The presented result unify approaches for the classical single monatomic Boltzmann equation and recently obtained results in the case of monatomic mixtures and single polyatomic gases.\\  

The kinetic theory of polyatomic gases and mixtures has recently become an active field of research and the rigorous theory has been developed in certain  physical contexts that can be understood as a special case of the gas mixture model analysed in this paper.  For the system of Boltzmann equations describing a mixture of  solely monatomic gases, which can be seen as a sub-system of the present model when all $P$ species are monatomic,  in the linearized setting, well-posedness, stability, compactness, energy method and hypocoercivity-related issues were studied  in \cite{bri-dau, bri, bon-bou-bri-gre, nous, nicl-1, nous-LY, dau-jun-mou-zam}. In the spatial homogeneous case, questions about well-posedness  and regularity for  the system of non-linear Boltzmann equations, were addressed in \cite{MPC-IG-ARMA, Erica-pspde} with integrable angular part and in \cite{Alonso-Orf} for an angular part modelling long-range interactions (the so-called non-cutoff scattering). \\

Polyatomic gases bring an another level of difficulty. An underlying physical effect is the internal energy exchange during the collision, apart from the usual translational energy of the relative motion of the colliding particles. The microscopic internal energy can be modelled as discrete or continuous variable, leading to the two branches in the kinetic theory of polyatomic gases: the semi-classical \cite{Kusto-book, Gio, Groppi-Spiga} and the continuous \cite{LD-Toulouse, LD-Bourgat, DesMonSalv} approaches. A general framework unifying these two approaches was recently presented in \cite{Bisi-Borsoni-Groppi}.\\

In this paper, we focus on the continuous kinetic approach that uses  the Borgnakke-Larsen procedure for the collision parametrization \cite{Bor-Larsen},  which makes the model accessible both from the  rigorous analytical and computational points of view. From the particle perspective, this  parametrization can be interpreted as a DSMC algorithm for sampling particles' internal energy exchanges \cite{MPC-Dj-T-O}.  The corresponding Boltzmann equation for a single polyatomic gas ($P=1$) or a polyatomic gas mixture ($P>1$) can be also seen as a subsystem of our present model when all species are polyatomic. Recently, a compactness result has been obtained for the linearized polyatomic Boltzmann operator in \cite{Brull-Comp, Brull-Comp-2, nicl-2} and in \cite{Bors-Bou-Salv-comp} for the model of resonant collisions \cite{Bou-Salv-res}.  The global well-posedness for bounded mild solutions near global equilibria on torus is established in \cite{Duan-Li-poly}.  For the space homogeneous setting and the full non-linear Boltzmann operator, well-possedness and $L^1$ regularity were tackled in \cite{MPC-IG-poly}. In particular, a form of the collision kernel which corresponds to hard potentials in both relative velocity and microscopic internal energy is proposed. It is shown to be highly physically relevant, as it provides transport coefficients that match experimental data \cite{MPC-Dj-S, MPC-Dj-T} for polytropic or calorically perfect polyatomic gases, and contains as a special case the collision kernel used in DSMC method with the VHS cross-section \cite{MPC-Dj-T-O}. \\

Motivated by the success in the analysis of space homogeneous problems for  separately  monatomic mixtures \cite{MPC-IG-ARMA, Erica-pspde, Alonso-Orf} and a single polyatomic gas \cite{MPC-IG-poly}, the aim of this paper is to establish  the existence and uniqueness theory  for a system of Boltzmann equations describing a mixture that involves  both monatomic and polyatomic gases.  We consider the model with different types of collision operators  proposed in \cite{LD-ch-ens} and slightly modify it, in order to work with the  $L^1$ plain space in the energy variable. This setting corresponds to the non-weighted one as described in \cite{MPC-Dj-S},  and coincides with the model used in \cite{nicl-3}. Moreover, it reduces to \cite{LD-Bourgat} for the single polyatomic gas model, analysed  in \cite{MPC-IG-poly}.  The approach to prove existence and uniqueness is based on an abstract ODE theory \cite{Martin-ODE}, first proposed by Bressan \cite{Bressan} in the context of scalar kinetic equations. The method was recently revised in \cite{Alonso-cooling,IG-Alonso-BAMS}, and was successfully used not only in \cite{MPC-IG-ARMA, Erica-pspde, MPC-IG-poly}, but also in dissipative kinetic problems \cite{Alonso-cooling} and the weak wave turbulence models for stratified flows \cite{IG-wave-turbulence}.\\

The aforementioned approach is quite general in the context of kinetic operators with integrable kinetic kernels where gain and loss collision operators can be treated independently; the path to be followed is similar in all the these problems consisting of some key steps. The implementation of such steps vary from case to case, however in the current case, due to the complexity of the underlying interactions happening in the general systems considered here, the arguments are more intricate  leading to new ideas in the mathematical treatment which focus on the essential mechanisms of energy transfer between pairwise collisions.  The first notable improvement is capturing the general structure of the energy pairwise collision interchange given in Lemma \ref{Lemma: En Iden}.  Second, based on such general structure, we exploit the natural occurring averaging in the gain collision operators to show the dissipative character of each pairwise interaction.  This character manifests in the fact that higher statistical moments of pairwise collisions are uniformly controlled over time.  For the classical Boltzmann case this result is known as the Povzner Lemma and dates back to \cite{Bob-Moment-ineq, IG-Bob-Panf-inelastic, LD-mom-meth, Wenn}.  We address this result here more precisely as Compact Manifold (or angular) Averaging Lemma, presented and proved in Lemma \ref{Lemma: Averaging} for the general models considered here using a more sophisticated method of proof based on the decomposition of the pairwise interaction domain.  This method is particularly useful for pairwise polyatomic interactions where the averaging effect occurs in the scattering angle and the internal energy exchange variables in a complex manner.  The Compact Manifold  Averaging Lemma is complemented with a coerciveness estimate for the loss collision operator, done in Lemma \ref{Lemma bounds on Bij tilde}.  A dominant effect of the pairwise loss term with respect to the gain in terms of higher moments is deduced from these considerations, given in Lemma \ref{Lemma: coll op estimate}.  A final preparatory step consists in finding \textit{a priori} higher moment estimates of the Boltzmann system solution. Namely, a moment ODI satisfied by a suitable Banach space moment-norm of the solution is derived, in Lemma \ref{Lemma moment ODI}, which yields generation and propagation of statistical moments properties used later to implement an abstract ODI argument, here presented in Section \ref{Sec: exi uni}.\\ 

The analysis is performed under  a quite general assumption on the collision kernel or transition probability -- it is assumed to be of the Maxwell-hard potential form for the collisional total energy (which reduces to the relative speed when only monatomic interactions are involved) with possible different rates $\gamma_{ij}\geq0$ satisfying $\max_j \gamma_{ij} >0$.  In addition,  the  angular part is assumed integrable and the collision kernel is assumed to be bounded from above and below by integrable partition functions of the Borgnakke-Larsen procedure. This assumption is general enough to cover  already established theory for monatomic mixture and single polyatomic models and is compatible with the analysis of the linearized Boltzmann polyatomic operator performed in \cite{Brull-Comp-2, nicl-2, nicl-3, Duan-Li-poly}.\\

The existence and uniqueness of the system solution is proven  for initial data corresponding to strictly positive and bounded species mass (zero order species moment) and  mixture total energy  (second order mixture moment), which are all conserved quantities for the Boltzmann flow in the absence of chemical reactions (energy is interchanged between species though). Moreover, a mixture moment of the order $(2+\max_{ij}\{\gamma_{ij}\} - \min_{i}\max_{j}\{\gamma_{ij}\})^+$ is required to be bounded.  The approach developed in this paper has an optimal value of $2^+$ when $\max_{ij}\{\gamma_{ij}\} = \min_{i}\max_{j}\{\gamma_{ij}\}$, which improves results of \cite{MPC-IG-poly}.  Another significant improvement is to incorporate arguments of \cite{Alonso-Orf} and to allow the range of $\gamma_{ij}$ to be  $[0,2]$ with at least one strictly positive  $\gamma_{ij}$  for each component $i$, which extends the previous results  of \cite{MPC-IG-ARMA, Erica-pspde}  in the monatomic mixture case that require all $\gamma_{ij} > 0$.  All the analysis is performed for general integrable angular scattering.\\

We stress that the physical framework of multi-component gas mixtures composed of both monatomic and polyatomic gases  considered in this paper  is highly relevant in applications since, for instance, air itself is a mixture of monatomic (such as Ar, O, N) and polyatomic (such as O$_2$, N$_2$, CO$_2$) components.  Thus, it is necessary to transcend the classical Boltzmann equation that models an ideal gas composed of identical structureless particles.  This work is a step in that direction. \\

The paper is organized as follows. In Section \ref{Sec: Mixture model}, the system of Boltzmann equations describing a mixture of monatomic and polyatomic gases is presented. The notation and functional spaces are introduced in Section \ref{Sec: notation}, while assumptions on the collision kernel are listed in Section  \ref{Sec: Assump Bij}. Section \ref{Sec: Est coll op} deals with estimates on the vector valued  collision operator. Namely, we prove various energy identities and estimates for different types of species-species interactions in Section \ref{Section: en id} that allow to prove the Averaging Lemma in Section \ref{Section: Av lemma}, yielding estimates on statistical moments for the collision operator firstly written in a bi-linear form in Section \ref{Sec: bi-linear form coll op} and then in the vector valued form in Section \ref{Sec: estimates vector coll op}. This study allows to derive polynomial moments \textit{a priori} estimates  on the system solution in Section \ref{Sec: poly mom}. Finally, existence and uniqueness theory is established in Section \ref{Sec: exi uni} and some  technical results and general theorems used in the paper are listed in Appendix \ref{Sec: App}.

\section{System of Boltzmann equations modelling a gas mixture composed of monatomic and polyatomic gases}\label{Sec: Mixture model}

We consider a mixture of $M$ monatomic and $(P-M)$ polyatomic gases.  Each monatomic component  $\mathcal{A}_i$, $i=1,\dots,M$, is   described with the distribution function $f_i(t,v)\geq0$ depending on time $t>0$ and molecular velocity $v\in \bR$. Polyatomic gases are modelled based on the continuous internal energy approach \cite{LD-Bourgat, LD-Toulouse, DesMonSalv} which amounts  to assume that a polyatomic component $\mathcal{A}_i$, $i=M+1,\dots,P$, of the mixture is described with the distribution function $f_i(t,v,I)\geq0$ depending on time $t>0$, molecular velocity $v \in \bR$ and also molecular microscopic internal energy $I \in \bRp$.   Assuming that distribution functions change due to mutual interactions, the dynamics of the mixture is characterized by the system of Boltzmann equations, here written in the space homogeneous setting,
\begin{equation}\label{BS}
	\begin{split}
	&\partial_tf_i(v) = \sum_{j=1}^{P} Q_{ij}(f_i, f_j)(v), \qquad i=1,\dots,M, \\
	&\partial_tf_i(v, I) = \sum_{j=1}^{P} Q_{ij}(f_i, f_j)(v, I), \qquad i=M+1,\dots,P,
	\end{split}
\end{equation}
where for the brevity we omit to write dependence on $t$. Note that \eqref{BS} contains    four collision operators that describe collisions of various type of gases and, as such, are different in nature. Namely, for the fixed monatomic species $\mathcal{A}_i$, $i\in \{1,\dots,M\}$, the pairwise interaction can be mono-mono when $j\in \left\{1, \dots,M\right\}$ or mono-poly for $j\in \left\{M+1, \dots, P\right\}$. Similarly, when the species $\mathcal{A}_i$ is polyatomic i.e. $i\in\{M+1,\dots,P\}$, we distinguish  poly-mono interaction when $j\in \left\{1, \dots,M\right\}$  or poly-poly for  $j\in \left\{M+1, \dots, P\right\}$. The form of  corresponding four collision operators  was introduced in \cite{LD-ch-ens} following the approach given in \cite{LD-Toulouse, DesMonSalv}, called the weighted setting in \cite{MPC-Dj-S}. In this paper, motivated by the rigorous analysis for a single polyatomic gas \cite{MPC-IG-poly}, we will follow \cite{LD-Bourgat} and rewrite the collision operators of \cite{LD-ch-ens} in the non-weighted setting in the spirit of \cite{MPC-Dj-S}.

The system of Boltzmann-like equations \eqref{BS} can be written in a vector form by introducing the vector valued distribution function $\F$ and the vector valued collision operator $\Q(\F)$ 
\begin{equation}\label{F Q vector}
		\F = \left[ \begin{matrix}  \Big[f_i(t, v)\Big]_{i=1,\dots,M} \\  \Big[f_i(t, v, I)\Big]_{i=M+1,\dots,P} \end{matrix} \right], \qquad 	
		\Q(\F) = \left[ \begin{matrix}  \Big[ \sum_{j=1}^{P} Q_{ij}(f_i, f_j)(v)\Big]_{i=1,\dots,M} \\  \Big[ \sum_{j=1}^{P} Q_{ij}(f_i, f_j)(v, I) \Big]_{i=M+1,\dots,P} \end{matrix} \right].
\end{equation}
Therefore, the system \eqref{BS}   in the vector valued form, and together with initial data reads
\begin{equation}\label{BS vector}
	\partial_t \F = \Q(\F), \qquad \F(0)=\F_0.
\end{equation}
The next section introduces the collision operator $\Q(\F)$ for each of  four possible binary interactions between molecules of monatomic and polyatomic species.

\subsection{Collision operators for interaction between  monatomic gases} 
For  $i,j\in \left\{ 1, \dots, M \right\}$,   the collision operator in a bilinear form for distribution functions  $f(t, v)\geq0$ and $g(t,v)\geq 0$ describing species $\mathcal{A}_i$ and $\mathcal{A}_j$, respectively, reads
\begin{equation}\label{m-m Q}
Q_{ij}(f,g)(v) = \int_{\bR} \int_{\bS} \left( f(v')g(v'_*)-f(v)g(v_*) \right)  \mathcal{B}_{ij}(v,v_*,\sigma) \, \md \sigma \, \md v_*,
\end{equation}
where $v'$ and $v'_*$ are pre-collisional velocities expressed as functions of the post-collisional ones $v$, $v_*$ and a parameter $\sigma$ in the center-of-mass framework defined by   vectors of the center of mass velocity  $V$, relative velocity $u$ and  reduced mass $\mu_{ij}$,
\begin{equation}\label{V-u}
	V= \frac{m_i v + m_j v_*}{m_i + m_j}, \qquad u=v-v_*, \qquad 	\mu_{ij} = \frac{m_i m_j}{m_i + m_j},
\end{equation}
namely,
\begin{equation}\label{m-m primed velocity}
	v' = V + \frac{m_j}{m_i + m_j} |u| \sigma, \qquad 	v'_* = V - \frac{m_i}{m_i + m_j} |u| \sigma,
\end{equation}
while the collision kernel   $\mathcal{B}_{ij} $ satisfies the following microreversibility assumptions
\begin{equation}\label{m-m cross}
\mathcal{B}_{ij} := 	\mathcal{B}_{ij}(v, v_*, \sigma) = 	\mathcal{B}_{ij}(v', v'_*, \sigma')  =	\mathcal{B}_{ji}(v_*, v, -\sigma) \geq 0.
\end{equation}
Equations \eqref{m-m primed velocity} are $\sigma-$parametrization of the conservation laws of momentum and kinetic energy of a colliding pair of molecules, namely,
\begin{equation}\label{m-m coll CL}
m_i v' + m_j v'_* = m_i v + m_j v_*, \qquad \frac{m_i}{2} |v'|^2 + \frac{m_j}{2} |v'_*|^2 = \frac{m_i}{2} |v|^2 + \frac{m_j}{2} |v_*|^2,
\end{equation}
or equivalently
\begin{equation}\label{m-m coll CL 2}
	V' = V, \qquad |u'|=|u|.
\end{equation}
The kinetic energy can be represented in the center-of-mass framework, 
\begin{equation}\label{m-m en}
	E_{ij} = \frac{\mu_{ij}}{2} |u|^2,
\end{equation}
which is by \eqref{m-m coll CL 2} a conserved quantity.  

The collision operator weak form is carefully  explained in \cite{Pavan-B-G}, i.e. for any suitable test functions $\omega(v)$ and $\chi(v)$,
\begin{multline}\label{m-m weak form}
	\int_{\bR} Q_{ij}(f,g)(v) \, \omega(v) \, \md v +  \int_{\bR} Q_{ji}(g,f)(v) \, \chi(v) \, \md v\\
	=   \int_{(\bR)^2} \int_{\bS } \left\{  \omega(v') + \chi(v'_*) - \omega(v) - \chi(v_*) \right\} \, f(v) \, g(v_*)  \mathcal{B}_{ij}(v, v_*, \sigma) \, \md \sigma \,\md v_* \, \md v.
\end{multline}

\subsection{Collision operators  for interaction between  polyatomic gases} Let $i,j\in \left\{ M+1, \dots,  P \right\}$, i.e. we consider two colliding polyatomic molecules. Let one be of mass $m_i$ and velocity-internal energy $(v', I')$ and the another one of mass $m_j$ and velocity-internal energy $(v'_*, I'_*)$, that change to $(v, I)$ and $(v_*, I_*)$ (with same masses), respectively, after the collision. We assume that collisions are elastic in the sense that momentum and the total (kinetic + microscopic internal) energy are conserved during the collision,
 \begin{equation}\label{p-p coll CL}
 	m_i v' + m_j v'_*  = m_i v + m_j v_*, \qquad \frac{m_i}{2} |v'|^2 + I' +  \frac{m_j}{2} |v'_*|^2 + I'_* = \frac{m_i}{2} |v|^2 + I +  \frac{m_j}{2} |v_*|^2 + I_*.
 \end{equation}
These laws can be rewritten in the center-of-mass framework \eqref{V-u}, namely,
\begin{equation}\label{p-p V E}
	V' = V, \quad E'_{ij} = E_{ij}, 
\end{equation}  
with the energy
\begin{equation}\label{p-p en}
E_{ij} = \frac{\mu_{ij}}{2} |u|^2 + I +  I_*.
\end{equation}  
In order to express pre--collisional velocities and internal energies in the original particle framework, the so-called  Borgnakke-Larsen procedure \cite{Bor-Larsen, LD-Bourgat} is used. The idea is to parametrize equations \eqref{p-p V E} with the scattering direction $\sigma\in\bS$ and  energy exchange variables $R, r \in[0,1]$. More precisely,  first split the kinetic and internal energy part of the total energy with $R \in [0,1]$,
\begin{equation*}
\frac{\mu_{ij}}{2}  \left|u'\right|^2 = R E, \quad  I' + I'_* =  (1-R) E.
\end{equation*}
Then, the kinetic part is distributed among particles with $\sigma \in \bS$ and total internal energy is split on particles' internal energies with   $r\in [0,1]$,
\begin{equation}\label{p-p coll rules}
	\begin{alignedat}{2}
		v'  &= V + \frac{m_j}{m_i + m_j} \sqrt{\frac{2 \, R  \, E_{ij}}{\mu_{ij}}} \sigma,  \qquad
		I' &&=r (1-R) E_{ij}, \\
		v'_{*} &= V - \frac{m_i}{m_i + m_j} \sqrt{\frac{2 \,  R \,  E_{ij}}{\mu_{ij}}} \sigma,
		\qquad I'_* &&= (1-r)(1-R) E_{ij}. 
	\end{alignedat}
\end{equation}
For the convenience, we introduce the primed parameters as well, $\sigma' \in \bS$, $r', R' \in [0,1]$,
\begin{equation}\label{p-p primed param}
\sigma'=\hat{u}=\frac{u}{\left|u\right|}, \qquad R'=\frac{\mu_{ij} \left|u\right|^2}{2 E_{ij}}, \qquad	r'=\frac{I}{I+I_*} = \frac{I}{E_{ij}-\frac{\mu_{ij}}{2}\left| u \right|^2}.
\end{equation}
The transformation $\mathcal{T}_{pp}: (v, v_*, I, I_*, r, R, \sigma) \mapsto (v', v'_*, I', I'_*, r', R', \sigma')$ is an involution and its Jacobian, which will deeply influence the structure of collision operator, is computed in \cite{DesMonSalv},
\begin{equation}\label{p-p Jac}
J_{\mathcal{T}_{pp}} : = \left| \frac{\pa(v', v'_{*},  I', \sigma', r' R')}{\pa (v,v_*,I_*,\sigma, r, R)  } \right| = \frac{(1-R) \sqrt{R}}{(1-R') \sqrt{R'}}.
\end{equation}
The collision operator for  distributions functions $f:= f(t,v,I)\geq 0$ and $g:= g(t,v,I)\geq0$  describing species $\mathcal{A}_i$ and $\mathcal{A}_j$, respectively,  used in this paper is a variant of the one proposed in \cite{DesMonSalv}, 
\begin{multline}\label{p-p coll operator}
	Q_{ij}(f,g)(v,I) = \int_{\bRfp} \int_{\bS \times [0,1]^2} \left\{ f(v',I')g(v'_*,I'_*) \left(\frac{I }{I' }\right)^{\alpha_i } \left(\frac{ I_*}{ I'_*}\right)^{ \alpha_j}   - f(v,I)g(v_*, I_*) \right\} \\ \times \mathcal{B}_{ij}(v, v_*, I, I_*, \sigma, r,  R) \, \dpo  \, \md \sigma \, \md r \, \md R \, \md v_* \, \md I_*,
\end{multline}
where the pre-collisional quantities $v'$, $v'_*$, $I'$ and $I'_*$  are defined in \eqref{p-p coll rules}, $\alpha_i, \alpha_j >-1$ are constants related to the specific heats of the polyatomic gas \cite{MPC-Dj-T}, parameters' function given by
\begin{equation}\label{fun r R}
\dpo =	  r^{\alpha_i}(1-r)^{\alpha_j} \, (1-R)^{\alpha_i + \alpha_j+1} \, \sqrt{R},
\end{equation}
and the collision kernel  $	\mathcal{B}_{ij}$ satisfies the microreversibility assumptions 
\begin{equation}\label{p-p Bij micro}
\mathcal{B}_{ij} :=	\mathcal{B}_{ij}(v, v_*, I, I_*, \sigma, r, R) = 	\mathcal{B}_{ij}(v', v'_*, I', I'_*, \sigma', r', R') = \mathcal{B}_{ji}(v_*, v, I_*, I, -\sigma, 1-r, R) \geq 0,
\end{equation}
corresponding to the interchange of pre and post-collisional molecules, and the interchange of colliding  molecules. It is important to notice that the choice of functions \eqref{fun r R} depends on the weight factor in the collision operator \eqref{p-p coll operator} because it ensures its invariance, since the factor
\begin{equation*}
\left(	r (1-R) I \right)^{\alpha_i}  \left(	(1-r)(1-R) I_* \right)^{\alpha_j}  
\end{equation*}
is invariant with respect to the changes described in \eqref{p-p Bij micro}. Together with the Jacobian \eqref{p-p Jac}, this implies the invariance of the measure 
\begin{equation*}
I^{\alpha_i} I_*^{\alpha_j} \dpo \, \md \sigma \, \md r \, \md R \, \md v_* \, \md I_* \, \md v \, \md I,
\end{equation*}
with respect to changes from \eqref{p-p Bij micro}. These considerations yield the well defined weak form, 
\begin{multline}\label{p-p weak form}
		 \int_{\bRfp} Q_{ij}(f,g)(v, I) \, \omega(v,I) \, \md v\, \md I +  \int_{\bRfp} Q_{ji}(g,f)(v, I) \, \chi(v,I) \, \md v\, \md I  \\
	  =  \int_{(\bRfp)^2} \int_{\bS \times [0,1]^2} \left\{  \omega(v', I') + \chi(v'_*, I'_*) - \omega(v, I) - \chi(v_*, I_*) \right\} \, f(v, I) \, g(v_*, I_*) 
	\\  \times \mathcal{B}_{ij}(v, v_*, I, I_*, r,  \sigma, R) \, \dpo \, \md \sigma \, \md r \, \md R \, \md v_* \, \md I_* \, \md v \, \md I,
\end{multline}
for any suitable test functions $\omega(v,I)$ and $\chi(v,I)$. 

\subsection{Collision operators  for interaction between  monatomic and polyatomic gases} \label{Sec: mon-poly interaction}
Let   $\mathcal{A}_j$,  $j\in \left\{ 1, \dots, M \right\}$, be a  monatomic component of the mixture described with the distribution function $g(t,v) \geq 0$ and     $\mathcal{A}_i$, $i\in \left\{ M+1, \dots, P \right\}$, the polyatomic component characterized by the distribution function $f(t,v,I) \geq 0$. 

Since  molecules differ in nature, the corresponding study of molecular collisions will depend on  which molecule the internal energy is associated to -- whether to the molecule of interest or the partner in collision. This raises the definition of two different collision operators, 
\begin{enumerate}
	\item[(i)] collision operator $Q_{ij}(f,g)(v,I)$ describing   the influence of a monatomic component  $\mathcal{A}_j$  on the polyatomic one $\mathcal{A}_i$ and acting on $(v,I)$ pair (poly-mono interaction),
	\item[(ii)] collision operator $Q_{ji}(g,f)(v)$  describing   the influence of a polyatomic component $\mathcal{A}_i$ on the monatomic one $\mathcal{A}_j$ and acting on $v$ only (mono-poly interaction).
\end{enumerate}
\subsubsection{Case (i) -- study of poly-mono interaction} 
We consider a pair of colliding molecules, one  polyatomic  molecule of mass $m_i$ and velocity-internal energy $(v',  I')$ which collides with the  monatomic  partner of mass $m_j$ and velocity $v'_*$.  After the collision, they belong to the same species so masses do not change, but the velocity--internal energy pair of a polyatomic molecule become $(v, I)$, while the velocity of the monatomic molecule changes to $v_*$. This collision is assumed elastic, in the sense that momentum and total energy are preserved during the collision, 
\begin{equation}\label{p-m coll CL}
	m_i v' + m_j v'_*  = m_i v + m_j v_*, \qquad \frac{m_i}{2} |v'|^2 + I' +  \frac{m_j}{2} |v'_*|^2  = \frac{m_i}{2} |v|^2 + I +  \frac{m_j}{2} |v_*|^2.
\end{equation}
For this setting we introduce the center-of-mass framework \eqref{V-u} together with the energy 
\begin{equation}\label{p-m en}
 E_{ij} =  \frac{\mu_{ij}}{2} |u|^2 + I.
\end{equation}  
Then, laws \eqref{p-m coll CL} are equivalent to
\begin{equation}\label{p-m V E}
	V'= V, \quad   E'_{ij} = E_{ij}.
\end{equation}  
These laws are parametrized with the angular parameter $\sigma \in \bS$ and a parameter $R\in[0,1]$,
\begin{equation}\label{p-m coll rules}
	\begin{split}
	v'  = V + \frac{m_j}{m_i + m_j} \sqrt{\frac{2 \, R  \, E_{ij}}{\mu_{ij}}} \sigma,  \qquad
	v'_{*} = V - \frac{m_i}{m_i + m_j} \sqrt{\frac{2 \,  R \,  E_{ij}}{\mu_{ij}}} \sigma,\qquad
	I' = (1-R)\, E_{ij}\,.
	\end{split}
\end{equation}
We complement these equations with the definition of primed parameters $\sigma' \in \bS$,  $ R' \in [0,1]$,
\begin{equation}\label{p-m coll rules par}
	\sigma' = \hat{u} = \frac{u}{|u|}, \qquad	R' = \frac{\mu_{ij}|u|^2}{2 \, E_{ij}}.
\end{equation}

\subsubsection{Case (ii) --  study of mono-poly interaction}  Let consider the   counterpart problem for the \textit{Case (i)}. Now we fix the monatomic molecule with mass $m_j$ and velocity $w'$ which changes to $v$ due to  a collision with the  polyatomic molecule partner   of mass $m_i$ and velocity-internal energy pair $(w'_*, I'_*)$ that changes to $(v_*, I_*)$ after the collision. During the collision, the following conservation laws hold
\begin{equation}\label{m-p coll CL}
	m_j w' + m_i w'_*  = m_j v + m_i v_*, \qquad \frac{m_j}{2} |w'|^2 + \frac{m_i}{2} |w'_*|^2 + I'_*= \frac{m_j}{2} |v|^2 + \frac{m_i}{2} |v_*|^2 + I_*.
\end{equation}
Introducing the center-of-mass reference framework with \eqref{V-u} and the center of mass velocity  $W$ which differs from  $V$ by the mass  interchange $m_i \leftrightarrow m_j$,
\begin{equation}\label{W}
	W= \frac{m_j v + m_i v_*}{m_i + m_j},
\end{equation}
\eqref{m-p coll CL} can be rewritten  as
\begin{equation}\label{m-p V E}
	W' = W, \quad E'_{ji} = E_{ji}:= \frac{\mu_{ji}}{2} |u|^2 + I_*.
\end{equation}  
Similarly as in \eqref{p-m coll rules}, these equations are parametrized with $\sigma \in \bS$ and   $R\in[0,1]$,
\begin{equation}\label{m-p coll rules}
		w' = W + \frac{m_i}{m_i + m_j} \sqrt{\frac{2 \, R  \, E_{ji}}{\mu_{ji}}} \sigma, \qquad
		w'_{*} = W  - \frac{m_j}{m_i + m_j} \sqrt{\frac{2 \,  R \,  E_{ji}}{\mu_{ji}}} \sigma, \qquad
		I'_{*} = (1-R)\, E_{ji}.
\end{equation}
For convenience, we also express primed parameters $(\sigma', R')$ in terms of non-primed quantities, 
\begin{equation}\label{m-p coll rules par}
	\sigma' = \hat{u} =  \frac{u}{|u|}, \qquad	R' = \frac{\mu_{ji}|u|^2}{2 \, E_{ji}}.
\end{equation}

\begin{lemma}\label{Lemma mono-poly invariance}
Let $i\in \left\{ M+1, \dots, P \right\}$  and	$j\in \left\{ 1, \dots, M \right\}$. Let $\alpha_i > -1$ and define
\begin{equation}\label{psi m-p}
 \dm = (1-R)^{\alpha_i} \, \sqrt{R}.
\end{equation}
Consider  transformations 
\begin{alignat}{2}
&\mathcal{T}_{pm}: (v,v_*,I,\sigma,R) & \rightarrow & \ (v', v'_{*},  I', \sigma', R'), \quad \text{defined by \eqref{p-m coll rules}--\eqref{p-m coll rules par}}, \label{coll tr p-m}\\
&\mathcal{T}_{mp}: (v,v_*,I_*,\sigma,R) \ &\rightarrow & \ (w', w'_*,  I'_{*}, \sigma', R'), \quad \text{defined by \eqref{m-p coll rules}--\eqref{m-p coll rules par}}. \label{coll tr m-p}
\end{alignat}
\begin{description}
	\item[\textit{Part (i)}]  Transformations $\mathcal{T}_{pm}$ and $\mathcal{T}_{mp}$ are involutions and theirs  Jacobians, respectively,  $J_{\mathcal{T}_{pm}} : = \left| \frac{\pa(v', v'_{*},  I', \sigma',R')}{\pa (v,v_*,I_*,\sigma,R)  } \right|$ and $J_{\mathcal{T}_{mp}} : = \left| \frac{\pa(w', w'_{*},  I'_{*}, \sigma', R')}{\pa (v,v_*,I_*,\sigma,R)}  \right|$, are
	\begin{equation}\label{m-p Jac}
	J_{\mathcal{T}_{pm}} = J_{\mathcal{T}_{mp}}  = \frac{\sqrt{R}}{\sqrt{R'}},
	\end{equation}
	 where $R'$ is understood  as \eqref{p-m coll rules par} for $\mathcal{T}_{pm}$ and as     \eqref{m-p coll rules par} for $\mathcal{T}_{mp}$. 
	 	\item[\textit{Part (ii.a)}] The  following measure is invariant under the collision transformation $\mathcal{T}_{pm}$,
	 \begin{equation}
	   I^{\alpha_i}  \, \dm \, \md R \,  \md \sigma  \, \md v_*  \, \md v  \, \md I,
	 \end{equation}
		\item[\textit{Part (ii.b)}] The  following measure is invariant under the collision transformation $\mathcal{T}_{mp}$,
		\begin{equation}
		I_*^{\alpha_i} \, \dm \, \md R \,  \md \sigma  \, \md v_* \,  \md I_* \, \md v.
		\end{equation}
\end{description}

\end{lemma}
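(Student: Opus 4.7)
The plan is to prove both claims for $\mathcal{T}_{pm}$ and transcribe them to $\mathcal{T}_{mp}$ via the symmetry that interchanges $V \leftrightarrow W$ and $I \leftrightarrow I_*$. The core idea is to pass to the center-of-mass frame: $(v,v_*) \mapsto (V,u)$ has unit Jacobian, and by \eqref{p-m V E} both $V$ and $E_{ij}$ are conserved by $\mathcal{T}_{pm}$, so everything reduces to analyzing the map $(u,I,\sigma,R) \mapsto (u',I',\sigma',R')$ on fibers of fixed $V$ and $E_{ij}$. Decomposing $u = |u|\hat u$ via $\md u = |u|^{d-1}\,\md|u|\,\md\hat u$ and exploiting the swap $\hat{u'} = \sigma$, $\sigma' = \hat u$ (unit Jacobian on unit vectors), the analysis comes down to the scalar triple $(|u|,I,R)\leftrightarrow(|u'|,I',R')$.

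For Part (i), the involution is verified by applying $\mathcal{T}_{pm}$ twice: invariance of $V$ and $E_{ij}$ is automatic from \eqref{p-m V E}, then $R'' = \mu_{ij}|u'|^2/(2E_{ij}) = R$ and $\sigma'' = \hat{u'} = \sigma$, while $R' = \mu_{ij}|u|^2/(2E_{ij})$ forces $I = (1-R')E_{ij}$ so that $I'' = (1-R')E_{ij} = I$; plugging into \eqref{p-m coll rules} then returns $v$ and $v_*$. For the Jacobian I route through intermediate variables $(E,R',R)$ with $E = \tfrac{\mu_{ij}}{2}|u|^2 + I$. Two short $2\times 2$ determinant computations yield
\begin{equation*}
\Bigl|\det\tfrac{\partial(E,R')}{\partial(|u|,I)}\Bigr| = \tfrac{\mu_{ij}|u|}{E},\qquad \Bigl|\det\tfrac{\partial(|u'|,I')}{\partial(E,R)}\Bigr| = \tfrac{|u'|}{2R},
\end{equation*}
whose product, combined with $|u||u'| = (2E/\mu_{ij})\sqrt{RR'}$, collapses to $\sqrt{R'/R}$. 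Multiplying by the Cartesian-to-spherical factor $(|u'|/|u|)^{d-1} = (R/R')^{(d-1)/2}$ produces in dimension $d=3$ the claimed $J_{\mathcal{T}_{pm}} = \sqrt{R}/\sqrt{R'}$; the identical computation with $I$ replaced by $I_*$ handles $\mathcal{T}_{mp}$.

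For Parts (ii.a)--(ii.b), combining the Jacobian above with the transformation of the $I^{\alpha_i}\dm$ factor sends the measure to $I'^{\alpha_i}(1-R')^{\alpha_i}\sqrt{R'}\cdot(\sqrt{R}/\sqrt{R'})\,\md R\,\md \sigma\,\md v_*\,\md v\,\md I$, so invariance reduces to the pointwise identity $[I'(1-R')]^{\alpha_i} = [I(1-R)]^{\alpha_i}$. This follows from
\begin{equation*}
I'(1-R') = (1-R)E_{ij}(1-R') = I(1-R),
\end{equation*}
using $I' = (1-R)E_{ij}$ and $I = (1-R')E_{ij}$ from \eqref{p-m coll rules}--\eqref{p-m coll rules par}. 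Part (ii.b) is the same with $I, E_{ij}$ replaced by $I_*, E_{ji}$. The only real obstacle is bookkeeping across the three parametrizations, which the pivot change of variables $(|u|,I) \leftrightarrow (E,R')$ makes transparent.
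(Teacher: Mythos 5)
The paper states Lemma~\ref{Lemma mono-poly invariance} without proof; the Jacobian for the analogous poly-poly case is merely quoted from \cite{DesMonSalv}, and the poly-mono/mono-poly case is presented as a variant whose verification is left to the reader. Your argument fills this in correctly and along the expected lines.

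The decomposition you use — center-of-mass change $(v,v_*)\mapsto(V,u)$ with unit Jacobian, then spherical coordinates $u=|u|\hat u$, the swap $\hat{u'}=\sigma$, $\sigma'=\hat u$ on angular variables, reduction to the scalar triple $(|u|,I,R)\leftrightarrow(|u'|,I',R')$, and the pivot through $(E,R',R)$ — is exactly the structure one needs here, and all the computations check out. Both $2\times2$ determinants are correct ($\mu_{ij}|u|/E$ and $|u'|/(2R)$), the product with the relation $|u||u'|=(2E/\mu_{ij})\sqrt{RR'}$ gives $\sqrt{R'/R}$ for the scalar part, and the factor $(|u'|/|u|)^{d-1}=(R/R')^{(d-1)/2}$ from the radial change of variables brings it to $\sqrt{R/R'}$ when $d=3$, matching \eqref{m-p Jac}. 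The involution check correctly exploits that $I=(1-R')E_{ij}$ follows from \eqref{p-m coll rules par}. For Parts (ii.a) and (ii.b), the pointwise identity $I(1-R)=(1-R')(1-R)E_{ij}=I'(1-R')$ combined with the Jacobian reduces invariance to the trivial $\sqrt{R'}\cdot\sqrt{R}/\sqrt{R'}=\sqrt{R}$, and the transcription of the $\mathcal{T}_{pm}$ argument to $\mathcal{T}_{mp}$ via $V\leftrightarrow W$, $I\leftrightarrow I_*$ is legitimate since the two parametrizations \eqref{p-m coll rules} and \eqref{m-p coll rules} are formally identical under that exchange. The proof is complete and correct.
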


\subsubsection{Interchange of the collision reference}
The aforementioned considerations concern relations connecting pre-- and post--collisional quantities.  It remains  to study a transformation describing the interchange of a collision reference in the case of mixed poly-mono and mono-poly interactions. Take a monatomic molecule of mass $m_j$, $ j\in \{1,\dots,M\}$,  and velocity $v_*$ and  a polyatomic molecule of mass $m_i$, $   i\in \{M+1,\dots,P\}$, and velocity-internal energy $(v,I)$, so the poly-mono interaction. Collision reference  interchange is a transformation constructed such that  if  $(v_*, v, I) \leftrightarrow (v, v_*, I_*)$ then the same change should hold before the collision i.e. $(v'_*, v', I') \leftrightarrow (w', w'_*, I'_*)$. 
Thus, we are led to consider the transformation
\begin{equation}\label{interchange of particles}
	\mathcal{I}: (v,v_*,I,\sigma,R)  \leftrightarrow (v_*,v,I_*, -\sigma,R).
\end{equation}
Then,   for the fixed $i \in\{M+1,\dots,P\}$ and $j \in\{1,\dots,M\}$,  $  \mathcal{T}_{pm}(	\mathcal{I}(v,v_*,I,\sigma,R)) \allowbreak = (w'_*, w', I'_*, -\sigma', R')$ and  $ \mathcal{T}_{mp}(	\mathcal{I}(v,v_*,I,\sigma,R)) \allowbreak = (v'_*, v', I', -\sigma', R')$. 

\subsubsection{Case (i) -- collision operator for poly-mono interaction}  The influence of a monatomic component  $\mathcal{A}_j$ described by distribution function $g(t, v) \geq 0$  on the polyatomic one $\mathcal{A}_i$ with distribution function $f(t,v,I) \geq 0$ is captured with the collision operator
\begin{multline}\label{p-m coll operator}
	Q_{ij}(f,g)(v,I) = \int_{\bR} \int_{\bS \times [0,1]} \left\{ f(v',I')g(v'_*) \left(\frac{I}{I'}\right)^{\alpha_i}  - f(v,I)g(v_*) \right\} \\ \times \mathcal{B}_{ij}(v, v_*, I, \sigma, R) \,  \dm \, \md \sigma \, \md R \, \md v_*,
\end{multline}
where the measure $\dm$ is given in \eqref{psi m-p} and the collision kernel  satisfies the following micro-reversibility properties
\begin{equation}\label{p-m cross}
	\mathcal{B}_{ij} :=	\mathcal{B}_{ij}(v, v_*, I, \sigma, R) = 	\mathcal{B}_{ij}(v', v'_*, I', \sigma', R') = \mathcal{B}_{ji}(v_*, v, I, -\sigma, R) \geq 0,
\end{equation}
with $v'$, $v'_*$, $I'$ as defined in \eqref{p-m coll rules}.

\subsubsection{Case (ii) -- collision operator for mono-poly interaction}  The influence of a polyatomic component $\mathcal{A}_i$ characterized via distribution function $f(t, v, I) \geq 0$ on the monatomic gas component $\mathcal{A}_j$ with distribution function $g(t, v) \geq 0$  is described  with the collision operator
\begin{multline}\label{m-p coll operator}
	Q_{ji}(g,f)(v) = \int_{\bRfp} \int_{\bS \times [0,1]} \left\{ g(w')f(w'_*,I'_*) \left(\frac{I_*}{I'_*}\right)^{\alpha_i} - g(v)f(v_*,I_*) \right\} 
	\\ \times  \mathcal{B}_{ji}(v,v_*,I_*, \sigma,R) \, \dm \, \md \sigma \, \md R \, \md v_* \, \md I_*,
\end{multline}
where  the measure $\dm$ is given in \eqref{psi m-p}  and the collision kernel  is assumed to satisfy the micro-reversibility properties
\begin{equation}\label{m-p micro rev}
	\mathcal{B}_{ji}:=	\mathcal{B}_{ji}(v, v_*, I_*, \sigma, R) = 	\mathcal{B}_{ji}(w', w'_*, I'_*, \sigma', R'), \ \text{and} \
		\mathcal{B}_{ji}(v_*, v, I, -\sigma, R) =	\mathcal{B}_{ij}(v, v_*, I, \sigma, R),
\end{equation}
and 
$w'$, $w'_*$, $I'_*$ are given in \eqref{m-p coll rules}.

\subsubsection{Weak form of the collision operators  for poly-mono \& mono-poly interactions}
Lemma \ref{Lemma mono-poly invariance} ensures a well-defined weak form of the collision operators  \eqref{p-m coll operator} and \eqref{m-p coll operator}.  The conservative form of the weak formulation is obtained when all operators in the interaction of two species are considered simultaneously. Indeed, for any suitable test functions $ \omega(v,I) $ and $\chi(v)$, the following weak form is obtained \cite{LD-ch-ens},
\begin{align}
&	 \int_{\bRfp} Q_{ij}(f,g)(v, I) \, \omega(v,I) \, \md v\, \md I + \int_{\bR} Q_{ji}(g,f)(v) \, \chi(v) \md v \nonumber \\
&  = \int_{\bR \times \bR \times \bRp} \int_{\bS \times [0,1]} \left\{  \omega(v', I') + \chi(v'_*) - \omega(v, I) - \chi(v_*) \right\} \, f(v, I) \, g(v_*) \label{weak vs v I} \\
& \qquad \qquad \qquad \qquad \qquad \qquad \qquad \times \mathcal{B}_{ij}(v, v_*, I, \sigma, R) \, \dm \, \md \sigma \, \md R \, \md v_* \, \md v \, \md I \nonumber \\ 
&  = \int_{\bR \times \bR \times \bRp} \int_{\bS \times [0,1]} \left\{ \chi(w') + \omega(w'_*, I'_*)  - \chi(v) -  \omega(v_*, I_*)  \right\} \, f(v_*, I_*) \, g(v) \label{weak vs v Is} \\
& \qquad \qquad \qquad \qquad \qquad \qquad \qquad \times \mathcal{B}_{ji}(v, v_*, I_*, \sigma, R) \, \dm \, \md \sigma \, \md R \, \md v_* \, \md v \, \md I_*, \nonumber 
\end{align}
where the involved quantities are detailed throughout the Section \ref{Sec: mon-poly interaction}.  

\subsection{Weak form of the vector valued collision operator} 
Take a suitable vector valued test function $X$
\begin{equation}\label{test}
	X = \left[ \begin{matrix}  \Big[\chi_i(v)\Big]_{i=1,\dots,M} \\  \Big[\chi_i( v, I)\Big]_{i=M+1,\dots,P} \end{matrix} \right].
\end{equation}
Collecting collision operator weak forms for each pair of species, stated above in \eqref{m-m weak form}, \eqref{weak vs v I} and \eqref{p-p weak form}, the conservative weak form of the vector valued collision operator defined in \eqref{F Q vector} reads
\begin{equation}\label{weak form general}
	\begin{split}
&\sum_{i=1}^M   \int_{\bR} \left[ \Q(\F)(v) \right]_i \, \chi_i(v) \md v +  \sum_{i=M+1}^P  \int_{\bRfp} \left[ \Q(\F)(v, I) \right]_i  \, \chi_i(v, I) \, \md v \, \md I
		\\
		& =  \sum_{i, j =1}^M  \int_{\bR} Q_{ij}(f_i, f_j)(v) \,  \chi_i(v) \md v  +    \sum_{i, j=M+1}^P  \int_{\bRfp} Q_{ij}(f_i, f_j)(v, I) \, \chi_i(v, I) \, \md v \, \md I
		\\
		& +  \sum_{j=1}^M  \sum_{i=M+1}^P \left( \int_{\bR} Q_{ji}(f_j, f_i)(v) \,  \chi_j(v)\,\md v +   \int_{\bRfp} Q_{ij}(f_i, f_j)(v, I) \,  \chi_i(v, I) \, \md v \, \md I \right).
	\end{split}
\end{equation}
Note that  conservation laws of particles' energies \eqref{m-m coll CL}, \eqref{p-p coll CL} and \eqref{p-m coll CL} imply for the choice $\chi_{i}(\cdot) = \la \cdot \ra_i^2$, 
\begin{equation}\label{weak form energy}
\sum_{i=1}^M   \int_{\bR} \left[ \Q(\F)(v) \right]_i \,  \la v \ra_i^2 \, \md v +  \sum_{i=M+1}^P  \int_{\bRfp} \left[ \Q(\F)(v, I) \right]_i  \,  \la v, I \ra_i^2 \, \md v \, \md I=0.
\end{equation}
If $\chi_{i}(\cdot) =1$,  then the following conservation per each species holds
\begin{equation}\label{weak form mass}
  \int_{\bR} \left[ \Q(\F)(v) \right]_i \,   \md v  =0, \ i=1,\dots,M, \ \text{and} \   \int_{\bRfp} \left[ \Q(\F)(v, I) \right]_i \, \md v \, \md I=0, \ i=M+1,\dots,P.
\end{equation}

\section{Notation and Functional spaces}\label{Sec: notation}

A natural space to solve the space homogeneous Boltzmann equation is the space of integrable functions appropriately weighed. In the present setting, we define Lebesgue brackets 
	\begin{equation}\label{brackets}
	\begin{split}
		\la v \ra_i &= \sqrt{1 + \frac{m_i}{2 \, \mM}  |v|^2 }, \quad \text{for} \ i =1,\dots, M, \\
		\la v, I \ra_i &= \sqrt{1 + \frac{m_i}{2\, \mP}  |v|^2  + \frac{1}{ \mP}  I  },  \quad \text{for} \ i =M+1,\dots, P,
\qquad \text{with} \quad	\mP =	\sum_{\ell=1}^{P} m_\ell.
	\end{split}
\end{equation}
The Banach space associated to the mixture constituent $\mathcal{A}_i$ is defined by
 \begin{equation}\label{Lqi}
 \begin{split}
\Lqi &= \left\{ f: 
\left\| f \right\|_{\Lqi} := \int_{\bR} \left| f(v) \right| \la v \ra_i^q  \ \md v < \infty   \right\}, \quad i \in \left\{ 1, \dots, M \right\}, \\
\Lqi &= \left\{ f: 
\left\| f \right\|_{\Lqi} := \int_{\bRfp} \left| f(v,I) \right| \la v,I \ra_i^q  \ \md v \, \md I < \infty   \right\}, \quad i \in \left\{ M+1, \dots, P \right\},
 \end{split}
 \end{equation}
for any $q\geq 0$. The same notion is introduced for the vector valued Banach spaces associated to the whole mixture,
  \begin{equation}\label{Lq}
 		\Lq = \left\{ \F= \left[f_i \right]_{i=1,\dots,P}:  \left\| \F \right\|_{\Lq} := 	 \sum_{i=1}^P  \left\| f_i \right\|_{\Lqi}   < \infty   \right\}.
 \end{equation}
A closely related concept is the one of polynomial moments. 
\begin{definition}
	 The $i$-th polynomial moment of order $q\geq 0$  for a suitable function  $f(t,v)$ for $i =1,\dots, M$, and $f(t,v,I)$ for $i =M+1,\dots, P$, is defined   by
	\begin{equation}\label{poly moment i}
			\begin{split}
			\mathfrak{m}^i_q[f](t) & = \int_{\bR} f(t,v) \, \la v \ra_i^{q} \, \md v,  \quad \text{for} \ i =1,\dots, M, \\
			\mathfrak{m}^i_q[f](t) & = \int_{ \bRfp} f(t,v,I) \, \la v,I \ra_i^{q} \, \md v \, \md I,  \quad \text{for} \ i =M+1,\dots, P\,.
		\end{split}
	\end{equation}
\end{definition} 
Note that when $f \geq 0$, the notion of polynomial moment coincides with $L^1$ norms. Moreover, for any $i$ whether $i \in \left\{ 1, \dots, M \right\}$ or $i \in \left\{ M + 1, \dots, P \right\}$, the monotonicity property holds
\begin{equation}\label{monotonicity of norm}
	\mathfrak{m}^i_{q_1}[f](t) \leq	\mathfrak{m}^i_{q_2}[f](t) , \text{whenever} \ 0\leq q_1 \leq q_2.
\end{equation}
\begin{definition}\label{def moment}
	{The  polynomial moment} of order $q\geq 0$ for a suitable vector valued  function \eqref{F Q vector}
	 is defined with
\begin{equation}\label{moment vector}
	\begin{split}
		\mathfrak{m}_q[\F](t)  = \sum_{i=1}^P 	\mathfrak{m}^i_{q}[f_i](t)   =	 \sum_{i=1}^M \int_{\bR} f_i(t, v) \la v \ra_i^{q} \, \md v +  \sum_{i=M+1}^P \int_{\bRfp} f_i(t, v,I) \la v, I \ra_i^{q} \, \md v \, \md I\,.
	\end{split}
\end{equation}
\end{definition} 
The $i$-th polynomial  moment of order 0, $\mathfrak{m}_0^i[f]$, has a physical intuition of  $\mathcal{A}_i$-th species number density described with the distribution function $f\geq 0$, whereas  polynomial moment of the second order, $\mathfrak{m}_2[\F]$, is physically interpreted as the sum of   number density and  total specific energy  of the mixture described with the vector valued distribution function $\F\geq0$. Note that by conservative properties of the weak form \eqref{weak form mass} and \eqref{weak form energy} these quantities are conserved for the Boltzmann flow.

\section{Assumptions on the collision kernels }\label{Sec: Assump Bij}
In this section, we summarize the assumptions we impose on the collision kernels  $\mathcal{B}_{ij}$, $i,j\in\left\{1, \dots, P \right\}$, that depend on the nature of the interactions. Our aim is to cover as many models as possible, and thus assumptions are formulated in a rather general manner involving upper and lower bounds on $\mathcal{B}_{ij}$. 

The main reason for this approach is to build a flexible strategy valid for a wide range of collision kernels suitable for interactions involving polyatomic gases.  Our motivation comes from the analysis of the Boltzmann equation modelling a single polyatomic gas \cite{MPC-IG-poly}, which successfully found its application in engineering modelling, as in \cite{MPC-Dj-T} for gas transport coefficients that match experimentally measured values, suggesting that the collision kernel model is appropriate. Of course, for interactions involving only monatomic molecules, our assumption reduces to  the frequently  used model of hard potentials. \\

For any pair of indices $i, j \in \left\{ 1, \dots, P \right\}$, the rate $\gamma_{ij}$ is supposed to have the following properties
\begin{equation}\label{gamma ij assumpt}
	\gamma_{ij} = \gamma_{ji},\quad	\gamma_{ij} \in [0,2] \ \ \text{and, additionally,} \ \ \forall \ i \in \left\{ 1, \dots, P \right\} \quad  \max_{1 \leq j \leq P } \gamma_{ij} =: \gi >0.  
\end{equation}
We also denote
\begin{equation}\label{gamma w}
\gw := \min_{1 \leq i \leq P } \gi   >0, \qquad \gm :=  \max_{1 \leq i \leq P } \gi   >0.
\end{equation} 

\subsection{Mono-mono interactions}\label{m-m cross ass}
The collision kernels  $\mathcal{B}_{ij}(v, v_*, \sigma)$, $i, j \in \left\{ 1, \dots, M \right\}$ introduced in \eqref{m-m cross} describing interactions between monatomic molecules are  assumed to take the following form 
\begin{equation}\label{m-m ass B}
	\mathcal{B}_{ij}(v, v_*, \sigma) = b_{ij}(\hat{u}\cdot \sigma)  \tilde{\mathcal{B}}_{ij}(v, v_*),
\end{equation}
where the angular part is assumed non-negative, integrable and symmetric with respect to the interchange $i\leftrightarrow j$, 
\begin{equation}\label{bij integrable}
b_{ij}(\hat{u}\cdot \sigma)   \in L^1(\bS), \quad b_{ij}(\hat{u}\cdot \sigma) = b_{ji}(\hat{u}\cdot \sigma) \geq 0, \quad u = v-v_*, \quad \hat{u}= \frac{u}{|u|},
\end{equation}
and $\tilde{\mathcal{B}}_{ij}(v, v_*)$ is the usual model of hard potentials (up to a multiplicative constant),
\begin{equation}\label{m-m ass B tilde}
	\tilde{\mathcal{B}}_{ij}(v, v_*)  
	= \left( \frac{\mu_{ij}}{2 \, \mP}  \right)^{\gamma_{ij}/2} |v-v_*|^{\gamma_{ij}},
\end{equation}
with $\gamma_{ij}$ from \eqref{gamma ij assumpt}.
\subsection{Poly-mono \& mono-poly interactions} \label{m-p cross ass}
For $i \in \left\{ M+1, \dots, P \right\}$ and $ j \in \left\{ 1, \dots, M \right\}$, the collision kernels  $	\mathcal{B}_{ij}(v, v_*, I, \sigma, R)$ defined by \eqref{p-m cross} are supposed to satisfy the following bounds
\begin{equation}\label{p-m ass B}
 b_{ij}(\hat{u}\cdot \sigma) \,  \tilde{b}_{ij}^{lb}(R) \,  \tilde{\mathcal{B}}_{ij}(v, v_*,I) \leq 	\mathcal{B}_{ij}(v, v_*, I, \sigma, R) \leq   b_{ij}(\hat{u}\cdot \sigma) \,  \tilde{b}_{ij}^{ub}(R) \,  \tilde{\mathcal{B}}_{ij}(v, v_*,I),
\end{equation}
where the angular part $ b_{ij}(\hat{u}\cdot \sigma)$ is assumed to be  as in \eqref{bij integrable}, and non-negative functions $ \tilde{b}_{ij}^{lb}(R)$,  $ \tilde{b}_{ij}^{ub}(R)$   are assumed to have the following integrability 
properties,
\begin{equation}\label{bij R integrable}
\tilde{b}_{ij}^{lb}(R), \ \tilde{b}_{ij}^{ub}(R)  \in L^1([0,1]; \dm\, \md R), 
\end{equation}
where $d_i(R) $ is from \eqref{psi m-p}.
The velocity-internal energy part is assumed to have the following form
\begin{equation}\label{p-m ass B tilde}
	\tilde{\mathcal{B}}_{ij}(v, v_*,I)  
	= \left( \frac{1}{\mP} \left( \frac{\mu_{ij}}{2} |v-v_*|^2 + I   \right) \right)^{\gamma_{ij}/2},
\end{equation}
with $\gamma_{ij}$ from \eqref{gamma ij assumpt}. 

\subsection{Poly-poly interactions} \label{p-p cross ass} 
For indices  $i, j \in \left\{ M+1, \dots, P \right\}$, which describe interactions between  polyatomic molecules  solely,
the collision kernels   $\mathcal{B}_{ij}(v, v_*, I, I_*, \sigma, r,  R)$ are assumed to satisfy the following bounds 
\begin{equation}\label{p-p ass B} 
	b_{ij}(\hat{u}\cdot \sigma) \,  \tilde{b}_{ij}^{lb}(r, R) \,  \tilde{\mathcal{B}}_{ij}(v, v_*,I, I_*) \leq 	\mathcal{B}_{ij}(v, v_*, I, I_*, \sigma, r,  R) \leq   b_{ij}(\hat{u}\cdot \sigma) \,  \tilde{b}_{ij}^{ub}(r, R) \,  \tilde{\mathcal{B}}_{ij}(v, v_*,I, I_*),
\end{equation}
where the angular part $b_{ij}(\hat{u}\cdot \sigma)$ is supposed  as in \eqref{bij integrable}, and non-negative  functions $\tilde{b}_{ij}^{lb}(r, R)$, $\tilde{b}_{ij}^{ub}(r, R)$ are assumed to have the following 
integrability properties
\begin{equation}\label{bij r R integrable}
	\tilde{b}_{ij}^{lb}(r, R), \ \tilde{b}_{ij}^{ub}(r, R)  \in L^1([0,1]^2; \,  \dpo \, \md r \, \md R),
\end{equation}
where the function $\dpo$ was introduced in \eqref{fun r R}. The velocity-internal energy part $\tilde{\mathcal{B}}_{ij}(v, v_*,I, I_*)$ takes the following form
\begin{equation}\label{p-p ass B tilde} 
 \tilde{\mathcal{B}}_{ij}(v, v_*,I, I_*) 
 = \left( \frac{1}{\mP} \left( \frac{\mu_{ij}}{2} |v-v_*|^2 + I + I_*  \right) \right)^{\gamma_{ij}/2},
\end{equation}
with $\g$ from \eqref{gamma ij assumpt}. 

\subsection{Remarks}\label{Remark tr prob}

Note that all three assumptions \eqref{m-m ass B tilde}, \eqref{p-m ass B tilde}  and \eqref{p-p ass B tilde} on the form of $\tilde{\mathcal{B}}_{ij}$ can be written concisely as
\begin{equation}\label{tilde Bij 3}
	\tilde{\mathcal{B}}_{ij}  = \left( \frac{E_{ij}}{\mP} \right)^{\gamma_{ij}/2},
\end{equation}
where the energy $E_{ij}$ is to be understood as \eqref{m-m en}, \eqref{p-m en} or \eqref{p-p en} depending on indices $i, j$. Moreover, such a form of $	\tilde{\mathcal{B}}_{ij}$ is micro-reversible itself, by \eqref{m-m coll CL 2}, \eqref{p-m V E} and \eqref{p-p V E}.

\smallskip
\noindent
Integrability properties \eqref{bij integrable} and \eqref{bij R integrable} led us to define constants
\begin{equation}\label{kappa p-m}
\left(	\begin{matrix}
		\kappa_{ij}^{lb} \\[5pt]
			\kappa_{ij}^{ub}
	\end{matrix} \right)
 = 	\int_{\bS \times[0,1]} b_{ij}(\hat{u}\cdot \sigma) \,   
\left( \begin{matrix}
 	\blR \\[5pt]
 \buR
 \end{matrix} \right)
\, \dm \, \md \sigma \, \md R, \quad i \in \left\{ M+1, \dots, P \right\},  j \in \left\{ 1, \dots, M \right\}\,,
\end{equation}
and by \eqref{bij r R integrable},
\begin{equation}\label{kappa p-p}
	\left(	\begin{matrix}
		\kappa_{ij}^{lb} \\[5pt]
		\kappa_{ij}^{ub}
	\end{matrix} \right)
	= 	\int_{\bS \times[0,1]^2} b_{ij}(\hat{u}\cdot \sigma) \,   
	\left( \begin{matrix}
		\blRr \\[5pt]
		\buRr
	\end{matrix} \right)
\dpo \, \md \sigma \, \md r \, \md R, \quad i, j \in \left\{ M+1, \dots, P \right\}.
\end{equation}
For convenience, we also introduce the constant  for monatomic interaction which actually reduces to the $L^1$ norm od the angular part, i.e.
\begin{equation}\label{kappa norm b}
	\kappa_{ij}^{lb} = 	\kappa_{ij}^{ub} = \nb, \quad \text{when} \  i, j \in \left\{ 1, \dots, M \right\}.
\end{equation}

Note that discrepancy in constants 	$\kappa_{ij}^{lb}$  and 	$\kappa_{ij}^{ub}$   is due to estimates on the parts of the collision kernels that is concerned with parameters  $R$ and $r, R$ for mono-poly \& poly-mono and poly-poly interactions, respectively. This difference disappears when, for instance, $\tilde{b}_{ij}^{lb} = \tilde{b}_{ij}^{ub}  =1$ i.e. for the choice $\mathcal{B}_{ij}=b_{ij}(\hat{u}\cdot\sigma) \, \tilde{\mathcal{B}}_{ij}$, which is a possible choice due to  micro-reversibility properties of  $\tilde{\mathcal{B}}_{ij}$ defined by \eqref{tilde Bij 3} .

\subsection{An example of the collision kernel } 
Besides the model $\mathcal{B}_{ij}=b_{ij}(\hat{u}\cdot\sigma) \, \tilde{\mathcal{B}}_{ij}$, with $ \tilde{\mathcal{B}}_{ij}$ from \eqref{tilde Bij 3}, mentioned in the above Remark \ref{Remark tr prob}, which for a single species polyatomic gas  corresponds  to the Model 1 in \cite{MPC-IG-poly}, we bring an another example corresponding to   Models 2 and 3 in \cite{MPC-IG-poly}, showed  to be successful in providing physical intuition of the single Boltzmann model for a polyatomic gas \cite{MPC-Dj-S, MPC-Dj-T}.  For $i \in \left\{ M+1, \dots, P \right\}$ and $ j \in \left\{ 1, \dots, M \right\}$, consider 
\begin{equation*}
	\mathcal{B}_{ij}(v, v_*, I, \sigma, R) 
	\\ = b_{ij}(\hat{u}\cdot\sigma) \left( R^{\gamma_{ij}/2}|v-v_*|^{\gamma_{ij}} + \left(\frac{(1-R) I}{\mP}\right)^{\gamma_{ij}/2}  \right),
\end{equation*}
while for  $i, j \in \left\{ M+1, \dots, P \right\}$ take
\begin{multline*}
	\mathcal{B}_{ij}(v, v_*, I, I_*, \sigma, r, R) 
	\\ = b_{ij}(\hat{u}\cdot\sigma) \left( R^{\gamma_{ij}/2}|v-v_*|^{\gamma_{ij}} + \left(\frac{r(1-R) I}{\mP}\right)^{\gamma_{ij}/2} + \left(\frac{(1-r)(1-R) I_*}{\mP}\right)^{\gamma_{ij}/2} \right).
\end{multline*}
Then, assumptions  \eqref{p-m ass B},  \eqref{p-p ass B}  are satisfied, for example,   by choosing
\begin{equation*}
	b_{ij}^{lb}(R) = \min\left\{  \frac{2\, \mP}{\mu_{ij}} R , 1-R \right\}^{\g/2}, \qquad 	b_{ij}^{ub}(R)  = 2^{1-\g/2}\max\left\{  \frac{2\, \mP}{\mu_{ij}} R , 1-R \right\}^{\g/2},
\end{equation*}
for  $i \in \left\{ M+1, \dots, P \right\}$ and $ j \in \left\{ 1, \dots, M \right\}$, and
	\begin{equation*}
		\begin{split}
	b_{ij}^{lb}(r, R) & = \min\left\{  \frac{2\, \mP}{\mu_{ij}} R, r(1-R), (1-r)(1-R) \right\}^{\g/2}, \\ 	
	b_{ij}^{ub}(r, R) &= 3^{1-\g/2} \max\left\{  \frac{2\, \mP}{\mu_{ij}} R, r(1-R), (1-r)(1-R) \right\}^{\g/2},
\end{split}
\end{equation*}
for $i, j \in \left\{ M+1, \dots, P \right\}$.

\section{Estimates on the collision operator}\label{Sec: Est coll op}

The first step in the well-posedness proof is to show dissipative character of the gain operator's $k$-th moment  reached by averaging post-collisional velocities or velocity-energy pairs. The original  technique was introduced by Bobylev in \cite{Bob-Moment-ineq}, which uses decomposition of the post-collisional velocities in the center-of-mass framework and relies on symmetries built in the model of a single monatomic gas.   This idea was lately used for different frameworks, as for inelastic collisions \cite{IG-Bob-Panf-inelastic}, granular gases \cite{Alonso-Lods-Granular}, more general cross sections \cite{IG-Panf-Vill, Lu-Mouhot-1, IG-Alonso-Task-Pavl, MPC-MT-Kac}. Recently, the authors developed an averaging tool for monatomic gas mixtures \cite{MPC-IG-ARMA, Erica-pspde, Alonso-Orf} and single polyatomic gases \cite{MPC-IG-poly}, which relies on the representation of post-collisional velocities (and internal energies for polyatomic interactions) in a convex combination form of the energies in the center-of-mass framework. 

Exploiting this idea, we will first represent post-collisional quantities in a suitable convex combination form   in the upcoming Section \ref{Section: en id}, which will be the basis for the averaging over the space of parameters in Section \ref{Section: Av lemma}, crucial to show dissipation  of the gain operator. 

\subsection{Energy Identities and Estimates}\label{Section: en id}

We first introduce notation that will be used in this section. The parameter $s_{ij} \in (0,1)$ convexly splits the sum of masses $m_i + m_j$,
\begin{equation}\label{sij}
	s_{ij} = \frac{m_i}{m_i + m_j} \quad \Rightarrow \quad 1- s_{ij} = \frac{m_j}{m_i + m_j} = s_{ji}.
\end{equation}
Its minimum will play an important role in computations,
\begin{equation}\label{sij min}
	\bar{s}_{ij} = \min\left\{ s_{ij}, s_{ji}\right\} \in (0,1).
\end{equation}

For the two colliding molecules, the total energy (kinetic or kinetic+internal) during collision, which is a conserved quantity, will be written in brackets form of \eqref{brackets}, 
\begin{equation}\label{E tot gen}
	\Etot_{ij} = \la \cdot \ra_i^2  + \la \cdot \ra_j^2 = \la \cdot' \ra_i^2  + \la \cdot' \ra_j^2,
\end{equation}
where the argument in the brackets can be either velocity or velocity-internal energy pair depending on whether $i$ or $j$ belong to $\left\{ 1, \dots, M \right\}$ or $\left\{ M+ 1, \dots, P \right\}$. This total energy in brackets form can be written in the center-of-mass reference framework \eqref{V-u}, 
\begin{equation}\label{en tot}
	\Etot_{ij} = 2 + \frac{m_i + m_j}{2 \,  \mP } |V|^2 + \frac{E_{ij}}{ \mP},
\end{equation}
where  $E_{ij}$ is \eqref{m-m en} for $i, j \in \left\{ 1, \dots, M \right\}$, \eqref{p-p en} for $i, j \in \left\{ M+ 1, \dots, P \right\}$ and \eqref{p-m en} if $i \in \left\{ M+ 1, \dots, P \right\}$ and $j \in \left\{ 1, \dots, M \right\}$. 

The goal of the upcoming lemma  is to express the total energy \eqref{en tot} in convex components in order to represent each primed bracket $\la \cdot' \ra_i^2$ and $ \la \cdot' \ra_j^2$ separately in terms of non-primed quantities.

\begin{lemma}[Energy Identity Lemma]\label{Lemma: En Iden}
The following energy identities and estimates hold, depending on the nature of particle interactions,
\begin{description}
\item[\textit{Part (i) - mono-mono interactions}] Let $i, j \in \left\{ 1, \dots, M \right\}$ and the primed velocities be defined in \eqref{m-m primed velocity}.  Then, there exist non-negative functions $p_{ij}$, $q_{ij}$ and $\lambda_{ij}$  which depend only on velocities $v$, $v_*$ and parameter $s_{ij}$ from \eqref{sij} and satisfy $p_{ij} + q_{ij} = 1 $, such that the following representation holds,
	\begin{equation}\label{m-m en id lemma}
		\la v' \ra_i^2 = \Etot_{ij} \,  p_{ij} + \lambda_{ij} \Vs, \qquad 	\la v'_* \ra_j^2 = \Etot_{ij} \,  q_{ij} - \lambda_{ij} \Vs, 
	\end{equation}
and the following estimate 
\begin{equation}\label{m-m en id lemma est}
\la v' \ra_i^2,  \	\la v'_* \ra_j^2  \leq   \left( 1 - \bs  \left(  1 - |\Vs| \right)  \right)  \Etot_{ij},
\end{equation}
where $\Etot_{ij} =  \la v \ra_i^2  + \la v_* \ra_j^2 $ and $\bs$ is given in \eqref{sij min}.
\item[\textit{Part (ii) - poly-poly interactions}]  Let $i,j \in \left\{M+1, \dots, P \right\}$ and the primed velocity - internal energy pairs be defined  in \eqref{p-p coll rules}. There exist non-negative functions $\tilde{p}_{ij}$, $\tilde{q}_{ij}$,	$\tilde{t}_{ij}$ and $\lambda_{ij}$ which depend on velocities $v$, $v_*$, internal energies $I$, $I_*$, energy exchange  variable $R$ and mass ratio $s_{ij}$ from \eqref{sij} such that the following convexity property holds $	\tilde{p}_{ij} + 	\tilde{q}_{ij} + 	\tilde{t}_{ij} =1$ and representation of the primed velocity - internal energy pairs,
\begin{equation}\label{p-p en id lemma}
	\la v', I' \ra_i^2 = \Etot_{ij} \left( \tilde{p}_{ij} + r \, \tilde{t}_{ij}  \right) + \lambda_{ij} \Vs,   \qquad 	\la v'_*, I'_* \ra_j^2 = \Etot_{ij} \left( \tilde{q}_{ij} + (1-r) \tilde{t}_{ij}  \right) - \lambda_{ij} \Vs,
\end{equation}
where $\Etot_{ij} = \la v, I \ra_i^2  + \la v_*, I_* \ra_j^2$. Moreover, the following estimate holds
\begin{equation}\label{p-p en id lemma est}
	\begin{split}
		\la v', I' \ra_i^2 & \leq \Etot_{ij} \left( 1- \tilde{q}_{ij} (1-|\Vs|) - \tilde{t}_{ij} (1-r) \right),  \\
		\la v'_*, I'_* \ra_j^2 & \leq   \Etot_{ij} \left( 1- \tilde{p}_{ij} (1-|\Vs|) - \tilde{t}_{ij} r \right),
	\end{split}
\end{equation}
where the involved terms satisfy
\begin{equation}\label{p-p en id lemma est 2}
	( \tilde{p}_{ij} + \tilde{t}_{ij} ), \ 	( \tilde{q}_{ij} + \tilde{t}_{ij} )  \geq \bs,
\end{equation}
with $\bs$ from \eqref{sij min}.
\item[\textit{ Part (iii) - poly-mono \& mono-poly interactions }]  Consider $i\in \left\{ M+1, \dots, P \right\}$ and $   j \in \left\{ 1,  \allowbreak \dots,  \allowbreak M \right\}$.  Let the primed velocity-internal energy pair $(v', I')$ and the primed velocity  $v'_*$ be defined as in \eqref{p-m coll rules}. There exist non-negative functions  $\tilde{p}_{ij}$, $\tilde{q}_{ij}$,	$\tilde{t}_{ij}$ and $\lambda_{ij}$ depending on $v$, $v_*$, $I$, $R$ and $s_{ij}$ from \eqref{sij} such that it holds the convexity property $	\tilde{p}_{ij} + 	\tilde{q}_{ij} + 	\tilde{t}_{ij} =1$ and the representation 
\begin{equation}\label{p-m en id lemma}
	\la v', I' \ra_i^2 =  \left( \tilde{p}_{ij} +  \tilde{t}_{ij} \right)  \Etot_{ij} + \lambda_{ij} \Vs ,  \qquad 	\la v'_* \ra_j^2 =   \tilde{q}_{ij} \, \Etot_{ij}  - \lambda_{ij} \Vs,
\end{equation}
with $\Etot_{ij} = \la v, I \ra_i^2  + \la v_* \ra_j^2$. Moreover, the following estimate holds
\begin{equation}\label{p-m en id lemma est}
	\la v', I' \ra_i^2, \  \la v'_* \ra_j^2  \leq   \left( 1 - \bs R \left(  1 - |\Vs| \right)  \right)  \Etot_{ij},
\end{equation}
with $\bs$ from \eqref{sij min}.
\end{description}

\end{lemma}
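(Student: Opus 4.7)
The plan is to treat all three parts through a common template: expand each primed bracket using its explicit parametrization in the center-of-mass frame, identify a decomposition $\la\cdot'\ra^2 = (\text{convex weight})\,\Etot_{ij} + \lambda_{ij}\Vs$ by matching the non-cross contributions term-by-term against the three pieces of $\Etot_{ij}$ displayed in \eqref{en tot}, and then obtain the upper bound by combining positivity of the kinetic part $\tfrac{m_j}{2\mP}|v'_*|^2\geq 0$ with a short case analysis on the sign of $\Vs$. A useful structural simplification is that the conservation laws \eqref{m-m coll CL}, \eqref{p-p coll CL}, \eqref{p-m coll CL} together with the definition of the brackets give $\la v'\ra_i^2 + \la v'_*\ra_j^2 = \Etot_{ij}$ (and the poly-poly, poly-mono analogues), so only one primed bracket needs to be computed directly -- the other follows by subtraction, which enforces the normalisations $p_{ij}+q_{ij}=1$ and $\tilde p_{ij}+\tilde q_{ij}+\tilde t_{ij}=1$ automatically.

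For Part (i), substituting \eqref{m-m primed velocity} into $\la v'\ra_i^2 = 1 + \tfrac{m_i}{2\mP}|v'|^2$ and applying the elementary mass identity $m_i(1-s_{ij})^2/\mu_{ij}=1-s_{ij}$ collapses the $|u|^2$-contribution into a clean multiple of $E_{ij}/\mP$. Setting $\hat V := V/|V|$ (with an arbitrary unit vector chosen when $V=0$, harmless since the cross coefficient vanishes there) makes the mixed term equal to $\lambda_{ij}\Vs$ with $\lambda_{ij}=(\mu_{ij}/\mP)|u||V|$. The three remaining non-cross pieces -- the constant, the $|V|^2$-piece and the $E_{ij}$-piece -- turn out to be respectively the fractions $1/2$, $s_{ij}$ and $1-s_{ij}$ of the corresponding pieces of $\Etot_{ij}$, so defining $p_{ij}$ as their sum divided by $\Etot_{ij}$ realises $p_{ij}$ as a convex combination of $\{1/2,s_{ij},1-s_{ij}\}$; in particular $p_{ij},q_{ij}\in[\bs,1-\bs]$.

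Parts (ii) and (iii) follow the same scheme after substituting \eqref{p-p coll rules} or \eqref{p-m coll rules}, the only new feature being that the Borgnakke-Larsen procedure introduces a purely internal contribution of weight $1-R$ (further split by $r$ in Part (ii)) that is not distributed through $\sigma$. This forces a third convex weight $\tilde t_{ij}$ with $\tilde t_{ij}\Etot_{ij}=(1-R)E_{ij}/\mP$, while $\tilde p_{ij}$ and $\tilde q_{ij}$ are built exactly as $p_{ij}$, $q_{ij}$ of Part (i) but with $E_{ij}$ replaced by the kinetic share $R\,E_{ij}$. The auxiliary inequalities $\tilde p_{ij}+\tilde t_{ij},\tilde q_{ij}+\tilde t_{ij}\geq \bs$ in \eqref{p-p en id lemma est 2} reduce to a component-wise comparison with $\Etot_{ij}$, the only non-trivial piece being $1-s_{ij}R\geq \bs$, which holds because $s_{ij}R\leq s_{ij}\leq 1-\bs$; the analogous lower bound $\tilde q_{ij}\geq \bs R$ used in Part (iii) is obtained in the same fashion.

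The sharp bound on $\lambda_{ij}$ needed for \eqref{m-m en id lemma est}, \eqref{p-p en id lemma est} and \eqref{p-m en id lemma est} comes from expressing $\tfrac{m_j}{2\mP}|v'_*|^2 = q_{ij}\Etot_{ij}-1-\lambda_{ij}\Vs$ (with $q$ replaced by $\tilde q$ in Parts (ii), (iii)): positivity of this kinetic term, evaluated at $\Vs=1$, yields $\lambda_{ij}\leq \tilde q_{ij}\Etot_{ij}$ (and symmetrically $\lambda_{ij}\leq \tilde p_{ij}\Etot_{ij}$ for the companion bracket). A two-case argument on the sign of the cross term then closes each estimate: when $\pm\lambda_{ij}\Vs$ is non-positive the inequality is immediate from $p_{ij},q_{ij}\leq 1-\bs$ (respectively $\leq 1-\bs R$ in Part (iii), and by a direct cancellation in Part (ii)); in the opposite case the desired inequality reduces to $(\lambda_{ij}-c\,\Etot_{ij})|\Vs|\leq (q_{ij}-c)\Etot_{ij}$ with $c\in\{\bs,\bs R,\tilde q_{ij}\}$ depending on the part, which holds trivially when $\lambda_{ij}\leq c\,\Etot_{ij}$ and otherwise by $|\Vs|\leq 1$ together with $\lambda_{ij}\leq \tilde q_{ij}\Etot_{ij}$. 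The main obstacle is algebraic bookkeeping: the cross terms in the two primed brackets must share exactly the same coefficient $\lambda_{ij}$, which hinges on the mass identity $m_i(1-s_{ij})=m_j s_{ij}=\mu_{ij}$, and the convex-weight comparisons in Parts (ii) and (iii) need to be tracked carefully across the additional branches introduced by the parameters $R$ and $r$.
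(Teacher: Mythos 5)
Your proposal is correct and follows essentially the same strategy as the paper's proof: expand the primed brackets in the center-of-mass frame, isolate the cross term $\lambda_{ij}\Vs$, and write the remainder as a convex decomposition against $\Etot_{ij}$. The only differences are cosmetic bookkeeping. You decompose $\Etot_{ij}$ directly into its constant, $|V|^2$, and $E_{ij}$ contributions and read off the weights $1/2$, $s_{ij}$, $1-s_{ij}$ (further split by $R$ and $r$ in Parts (ii)--(iii)), whereas the paper packages the same information via the nested ratios $\Theta_{ij}$ and $\Sigma_{ij}$ in \eqref{m-m split}--\eqref{p-p split}; the resulting $p_{ij},q_{ij},\tilde p_{ij},\tilde q_{ij},\tilde t_{ij},\lambda_{ij}$ coincide. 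You obtain the key bounds $\lambda_{ij}\leq \tilde p_{ij}\Etot_{ij}$ and $\lambda_{ij}\leq\tilde q_{ij}\Etot_{ij}$ by evaluating the representation at $\Vs=1$ (and $r=1$ where relevant) and using non-negativity of the kinetic part, which is a cleaner route that never needs the closed form of $\lambda_{ij}$; the paper instead applies a weighted Young's inequality to its explicit square-root formula, arriving at the same bound. Finally, your two-case argument on the sign of $\Vs$ is an equivalent restatement of the paper's single-line substitution $\lambda_{ij}\Vs\leq\lambda_{ij}|\Vs|$, and your piecewise verification of $\tilde p_{ij}+\tilde t_{ij},\,\tilde q_{ij}+\tilde t_{ij}\geq\bs$ and $\tilde p_{ij},\tilde q_{ij}\geq\bs R$ reaches the same conclusions as the paper's factored estimates. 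No gaps.
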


\begin{proof}
First define function $\Theta_{ij}$ which depends on velocities or velocity-internal energy pairs solely, 
\begin{equation}
	\Theta_{ij} \Etot_{ij} = 1 +  \frac{m_i + m_j}{2 \, \mP } |V|^2 \quad \Rightarrow \quad \left( 1 -	\Theta_{ij}  \right)\Etot_{ij} = 1 + \frac{1}{ \mP} E_{ij}.
\end{equation}
The part $ \left( 1 -	\Theta_{ij}  \right)\Etot_{ij} $ is further split depending on the type of molecular interaction. 
\subsubsection*{Mono-mono interactions} This type of interactions was already studied in \cite{MPC-IG-ARMA}, Lemma 4.1. Defining 
\begin{equation}\label{m-m split}
	\begin{split}
		{p}_{ij} &= s_{ij} \Theta_{ij} + (1-s_{ij})  \left( 1 -	\Theta_{ij}  \right), \\
		{q}_{ij} &= (1-s_{ij}) \Theta_{ij} + s_{ij}   \left( 1 -	\Theta_{ij}  \right), \\	
		\lambda_{ij}  &= 2 \sqrt{s_{ij}(1-s_{ij})}  \sqrt{\Theta_{ij}  \Etot_{ij} -1  } \sqrt{ \left( 1 -	\Theta_{ij}  \right) \Etot_{ij} -1},
	\end{split}
\end{equation}
the representation \eqref{m-m en id lemma} and estimate \eqref{m-m en id lemma est} hold.

\subsubsection*{Poly-poly interactions} For $i,j \in \left\{M+1, \dots, P \right\}$, in order to split the term $\left( 1 -	\Theta_{ij} \ \right)\Etot_{ij}$, we introduce function $\Sigma_{ij}$ that, except on velocities and internal energies like for $\Theta_{ij} $, additionally depends on the  parameter $R$, 
\begin{equation}
	\Sigma_{ij} \left( 1 -	\Theta_{ij}  \right) \Etot_{ij} = 1 + \frac{1}{ \mP} R E_{ij} \quad \Rightarrow \quad  \left( 1 -	\Sigma_{ij}  \right) \left( 1 -	\Theta_{ij}  \right)\Etot_{ij} = \frac{1}{ \mP} (1-R) E_{ij}.
\end{equation}
We then define the following functions in terms of convex combination functions,
\begin{equation}\label{p-p split}
	\begin{split}
		\tilde{p}_{ij} &= s_{ij} \Theta_{ij} + (1-s_{ij}) \Sigma_{ij} \left( 1 -	\Theta_{ij}  \right), \\
		\tilde{q}_{ij} &= (1-s_{ij}) \Theta_{ij} + s_{ij} \Sigma_{ij} \left( 1 -	\Theta_{ij}  \right), \\	
		\tilde{t}_{ij} &= \left( 1 -	\Sigma_{ij}  \right) \left( 1 -	\Theta_{ij}  \right), \\
		\lambda_{ij}  &= 2 \sqrt{s_{ij}(1-s_{ij})} \sqrt{\Sigma_{ij} \left( 1 -	\Theta_{ij}  \right) \Etot_{ij} -1  }\sqrt{\Theta_{ij}  \Etot_{ij} -1}.
	\end{split}
\end{equation}
Taking the square of the primed velocities and internal energies from \eqref{p-p coll rules} and combining them into the bracket form \eqref{brackets},  for the  term $	\la v', I' \ra_i^2 $ it follows
\begin{equation*}
	\begin{split}
	\la v', I' \ra_i^2 & = \frac{m_i}{m_i + m_j} \left( 1 + \frac{m_i + m_j }{2 \mP} |V|^2 \right) + \frac{m_j}{m_i + m_j} \left( 1 + \frac{R E_{ij}}{ \mP} \right) +  \frac{r (1-R) E_{ij}}{ \mP} \\ &+ \frac{m_i m_j}{(m_i + m_j) \mP} \sqrt{\frac{2 R E_{ij}}{\mu_{ij}}} |V| \Vs,
	\end{split}
 \end{equation*}
and similarly for the counterpart $	\la v'_*, I'_* \ra_i^2$. Expressing 
\begin{equation*}
	\begin{split}
 &\sqrt{\frac{(m_i+m_j)}{2 \mP}} |V| = \sqrt{\Theta_{ij} \Etot_{ij} - 1}, \qquad \sqrt{\frac{\mu_{ij}}{m_i+m_j}} = \sqrt{s_{ij}(1-s_{ij})}, \\ 
 &\sqrt{\frac{R E_{ij}}{\mP}} = \sqrt{\Sigma_{ij} \left( 1 -	\Theta_{ij}  \right) \Etot_{ij} -1},
 \end{split}
\end{equation*} 
 the representation \eqref{p-p en id lemma} is obtained. Note that for a single polyatomic gas corresponding to the case $s_{ij}=1/2$, the representation  \eqref{p-p en id lemma} coincides with the one introduced in \cite{MPC-IG-poly}. 

To prove the second part, note that  by Young's inequality the following estimates on $\lambda_{ij}$ hold
\begin{equation}\label{lambda est}
	 \lambda_{ij}  \leq \tilde{p}_{ij} \Etot_{ij}, \qquad 	 \lambda_{ij}  \leq \tilde{q}_{ij}\Etot_{ij}.
\end{equation}
Then, from  \eqref{p-p en id lemma} the following estimates are straightforward, 
\begin{equation*}
	\begin{split}
		\la v', I' \ra_i^2 & \leq \Etot_{ij} \left( \tilde{p}_{ij} + r \, \tilde{t}_{ij} + \tilde{q}_{ij} |\Vs| \right) = \Etot_{ij} \left( 1- \tilde{q}_{ij} (1-|\Vs|) - \tilde{t}_{ij} (1-r) \right),  \\ 	
		\la v'_*, I'_* \ra_j^2 & \leq \Etot_{ij} \left( \tilde{q}_{ij} + (1-r) \tilde{t}_{ij} + \tilde{p}_{ij} |\Vs| \right) = \Etot_{ij} \left( 1- \tilde{p}_{ij} (1-|\Vs|) - \tilde{t}_{ij} r \right),
		\end{split}
\end{equation*}
yielding \eqref{p-p en id lemma est}. Estimate \eqref{p-p en id lemma est 2} follows from
\begin{equation*}
	( \tilde{p}_{ij} + \tilde{t}_{ij} ), \ 	( \tilde{q}_{ij} + \tilde{t}_{ij} )  \geq \min\left\{ s_{ij}, 1- s_{ij}, 1 \right\} \left(   \Theta_{ij} +  \Sigma_{ij} \left( 1 -	\Theta_{ij}  \right)  +  \left( 1 -	\Sigma_{ij}  \right) \left( 1 -	\Theta_{ij}  \right) \right) = \bs,
\end{equation*}
by \eqref{sij min}.
\subsubsection*{Poly-mono \& mono-poly interactions } For $i\in \left\{ M+1, \dots, P \right\}$ and $j\in \left\{ 1, \dots, M \right\}$,  functions $\tilde{p}_{ij}$, $\tilde{q}_{ij}$ and $\tilde{t}_{ij}$ are of the same form as \eqref{p-p split} for poly-poly interactions, except that the total energy in the center-of-mass framework $E_{ij}$ is understood as \eqref{p-m en}.

For the second part, we use the estimate on $\lambda_{ij}$ as in \eqref{lambda est}. Then from the representation \eqref{p-m en id lemma}, the following estimates hold, 
\begin{equation*}
	\begin{split}
	\la v', I' \ra_i^2 & \leq \Etot_{ij} \left(  \tilde{p}_{ij}  + \tilde{t}_{ij} + \tilde{q}_{ij} \, | \Vs | \right)  =\Etot_{ij} \left(  1 -  \tilde{q}_{ij} \left(  1 - | \Vs |  \right) \right), \\
		\la v'_*  \ra_j^2 & \leq \Etot_{ij} \left(  \tilde{q}_{ij}    + \tilde{p}_{ij} \, | \Vs | \right)  \leq   \Etot_{ij} \left(  1 -  \tilde{p}_{ij} \left(  1 - | \Vs |  \right) \right)\,.
	\end{split}
\end{equation*}
Note that
\begin{equation*}
	\tilde{p}_{ij}, 	\tilde{q}_{ij}  \geq \bs \left(  \Theta_{ij} +   \Sigma_{ij} \left( 1 -	\Theta_{ij}  \right) \right) = \bs \left(   1 -	\tilde{t}_{ij}  \right) \geq \bs \, R,
\end{equation*}
where the last inequality is due to
\begin{equation*}
	1 -	\tilde{t}_{ij}   = \frac{1}{\Etot_{ij}} \left( \Etot_{ij} - (1-R) \frac{E_{ij}}{\mM} \right) \geq R.
\end{equation*}
This implies \eqref{p-m en id lemma est}, which concludes the proof.
\end{proof}

\subsection{Compact Manifold  Averaging Lemma}\label{Section: Av lemma}
The next goal is to show that estimates on the primed quantities $\la \cdot' \ra_i^2$, $ \la \cdot' \ra_j^2$, proved in the previous lemma yield    decay properties of their $k-$th power $\la \cdot' \ra_i^k$, $ \la \cdot' \ra_j^k$, with respect to $k$ when averaged over a suitable compact domain, such as angular transitions and partition functions,  describing transition probability rates depending on the particles' interaction nature.  

To this end, we prove the following key lemma, highlighting  a novel  method of proof  flexible enough to conveniently adapt to all type of interactions satisfying the general pairwise energy transfer  identity structure of Lemma \ref{Lemma: En Iden}.

\begin{lemma}[Compact Manifold Averaging Lemma]\label{Lemma: Averaging} 
	With the notation of  Lemma \ref{Lemma: En Iden} and the assumptions on collision kernels  stated in Section \ref{Sec: Assump Bij}, there exist non-negative constants $\Cpov$ decreasing in $k\geq0$ and with $\lim_{k\rightarrow\infty}\Cpov=0$ or more precisely 
	\begin{equation}\label{C pov}
\Cpov = \smallO(k^{-\frac a2}), \quad \text{for all} \ \ a\in(0,1),
\end{equation}
	 such that  the following  estimates hold, depending on the nature of particle interactions. 
	\begin{description}
		\item[\textit{Part (i) - mono-mono interactions}] For $i, j \in \left\{ 1, \dots, M \right\}$,
		\begin{equation}\label{m-m av lemma}
		\int_{\bS}  \left( 1 - \bs  \left(  1 - |\Vs| \right)  \right)^k b_{ij}(\hat{u}\cdot\sigma) \, \md \sigma \leq \Cpov.
		\end{equation}
		\item[\textit{Part (ii) - poly-poly interactions}]  When $i,j \in \left\{M+1, \dots, P \right\}$,
		\begin{equation}\label{p-p av lemma}
			\begin{split}
			& \int_{\bS \times[0,1]^2}\left( 1- \tilde{q}_{ij} (1-|\Vs|) - \tilde{t}_{ij} (1-r) \right)^k b_{ij}(\hat{u}\cdot \sigma)  \, \buRr \,  \dpo \, \md \sigma \, \md r \, \md R \leq \Cpov\,, \\
			&\text{and}\\
			& \int_{\bS \times[0,1]^2} \left( 1- \tilde{p}_{ij} (1-|\Vs|) - \tilde{t}_{ij} r \right)^k b_{ij}(\hat{u}\cdot \sigma) \,  \buRr \,  \dpo \, \md \sigma \, \md r \, \md R \leq \Cpov\,. 
			\end{split}
		\end{equation}
		\item[\textit{Part (iii) - poly-mono \& mono-poly interactions}]  Consider $i\in \left\{ M+1, \dots, P \right\}$ and $   j \in \left\{ 1,  \allowbreak \dots,  \allowbreak M \right\}$.  Then, 
		\begin{equation}\label{p-m av lemma}
		\int_{\bS \times[0,1]}  \left( 1 - \bs R \left(  1 - |\Vs| \right)  \right) ^k b_{ij}(\hat{u}\cdot \sigma) \, \buR \, \dm \, \md \sigma \, \md R \leq \Cpov\,.
		\end{equation}
	\end{description}
As a consequence of the estimates \eqref{m-m av lemma},\eqref{p-p av lemma}, and \eqref{p-m av lemma}, there exists $\ks$, depending only on the angular part $b_{ij}(\hat{u}\cdot\sigma)$ and functions $b_{ij}^{ub}$ of  energy exchange variables, such that for all $i,j\in\left\{1, \dots, P \right\}$,
\begin{equation}\label{ks povzner}
	 \mathcal{C}_k^{ij} < \frac{ \kappa_{ij}^{lb} }{2}, \qquad \text{for} \;\; k \geq \ks,
\end{equation}	
with $\kappa_{ij}^{lb}$ provided in \eqref{kappa norm b}, \eqref{kappa p-p} and \eqref{kappa p-m}, respectively for each type of particles' interactions.
\end{lemma}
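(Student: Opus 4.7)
The plan is to combine the pointwise primed-bracket estimates of Lemma \ref{Lemma: En Iden} with absolute continuity of the finite weighted measures coming from $b_{ij}$, $\buR\,\dm$ and $\buRr\,\dpo$ granted by the integrability assumptions \eqref{bij integrable}, \eqref{bij R integrable} and \eqref{bij r R integrable}. At any point of the compact manifold of integration the factor $(1-\phi)^k$ obeys $(1-\phi)^k\le(1-\eps)^k$ wherever $\phi\ge\eps$, whereas on the ``near-degenerate'' set $\{\phi<\eps\}$ the weighted measure can be made arbitrarily small. Tuning a $k$-dependent cut-off $\eps=\eps(k)$ and balancing the two regimes will yield both the vanishing $\Cpov\to 0$ and the polynomial rate $\smallO(k^{-a/2})$ for each $a\in(0,1)$.

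For Part (i), I would parametrize $\sigma\in\bS$ using $\hat V$ as the north pole so that $\Vs=\cos\theta$ and $\md\sigma=\sin^{d-2}\theta\,\md\theta\,\md\tau$. Integration of $b_{ij}(\hat u\cdot\sigma)$ along the azimuthal variable $\tau\in\mathbb{S}^{d-2}$ produces a one-dimensional weight $B(\theta)$ that is still integrable against $\sin^{d-2}\theta\,\md\theta$, reducing \eqref{m-m av lemma} to $\int_0^\pi B(\theta)\bigl(1-\bs(1-|\cos\theta|)\bigr)^k\sin^{d-2}\theta\,\md\theta$; dominated convergence gives $\smallO(1)$. Splitting at $\theta_k=k^{-\alpha}$ with $\alpha$ slightly below $1/2$ yields the polynomial rate: on $[\theta_k,\pi-\theta_k]$ the factor $(1-c\theta_k^2)^k$ is exponentially small in $k^{1-2\alpha}$, while the polar caps contribute $\int_{|\theta|\le\theta_k}B(\theta)\sin^{d-2}\theta\,\md\theta\to 0$ by absolute continuity. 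Part (iii) is then handled by the same $\sigma$-analysis performed for each fixed $R>0$, followed by an integration over $R$ against the integrable weight $\buR\,\dm\,\md R$; the factor $\sqrt R$ in $\dm$ neutralizes the degeneracy at $R=0$ and an additional splitting in $R$ near $0$ preserves the claimed rate.

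The main obstacle is Part (ii), where the decay factor combines two independent terms $\tilde q_{ij}(1-|\Vs|)$ and $\tilde t_{ij}(1-r)$ acting on disjoint parameter directions. The key structural ingredient is the uniform lower bound \eqref{p-p en id lemma est 2}, $\tilde q_{ij}+\tilde t_{ij}\ge\bs$, which forces a decomposition of the pairwise interaction domain $\bS\times[0,1]^2=\mathcal D_\sigma\cup\mathcal D_r$, where $\mathcal D_\sigma=\{\tilde q_{ij}\ge\bs/2\}$ and $\mathcal D_r=\{\tilde t_{ij}\ge\bs/2\}$. On $\mathcal D_\sigma$ the integrand is dominated by $\bigl(1-(\bs/2)(1-|\Vs|)\bigr)^k$ and the $\sigma$-analysis of Part (i) delivers the bound, with $(r,R)$ as spectators integrated against the integrable weight $\buRr\,\dpo$. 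On $\mathcal D_r$ the integrand is dominated by $\bigl(1-(\bs/2)(1-r)\bigr)^k$, and a one-dimensional argument in $r$ near $r=1$ against the integrable factor $(1-r)^{\alpha_j}$ inside $\dpo$ gives the same decay. The second estimate in \eqref{p-p av lemma} is symmetric via $\tilde p_{ij}+\tilde t_{ij}\ge\bs$, swapping the roles of $(\tilde p_{ij},\tilde q_{ij})$ and $(r,1-r)$.

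Finally, monotonicity of $\Cpov$ in $k$ is immediate because each integrand $(1-\phi)^k$ with $0\le\phi\le 1$ is non-increasing in $k$, so the bounds can be taken monotone. The concluding claim $\Cpov<\kappa_{ij}^{lb}/2$ for $k\ge\ks$ then follows at once from $\Cpov\to 0$ together with the strict positivity of $\kappa_{ij}^{lb}$ granted by the lower-bound integrability of $b_{ij}$ and $\tilde b_{ij}^{lb}$ in Section \ref{Sec: Assump Bij}.
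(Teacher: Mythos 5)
Your proof is correct and rests on the same pillars as the paper's: the pointwise bounds of Lemma~\ref{Lemma: En Iden}, the coercivity $\tilde{p}_{ij}+\tilde{t}_{ij},\ \tilde{q}_{ij}+\tilde{t}_{ij}\geq\bs$ from \eqref{p-p en id lemma est 2}, an $\eps$-dependent decomposition of the compact parameter manifold into a ``good'' piece where the $k$-th power is exponentially small and ``bad'' pieces of vanishing weighted measure, and the balance $\eps=\eps(k)$. The genuine divergence is in Part~(ii). The paper performs a simultaneous $\eps$-split of $\bS\times[0,1]$ in $(\sigma,r)$ into three regions, and on the good region $\{|\Vs|\le1-\eps,\ r\le1-\eps\}$ applies \eqref{p-p en id lemma est 2} directly to get $1-\eps(\tilde{q}_{ij}+\tilde{t}_{ij})\le1-\bs\eps$; this is a single, uniform split and requires no case analysis. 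You instead first partition according to which of $\tilde{q}_{ij}$ or $\tilde{t}_{ij}$ exceeds $\bs/2$ — noting, correctly, that this is a condition on $R$ only, since $\tilde{q}_{ij},\tilde{t}_{ij}$ are independent of $\sigma$ and $r$ — and then reduce to a one-variable argument in $\sigma$ or in $r$ on each piece, integrating the remaining variables against the integrable weights. Both are valid; the paper's is marginally more compact and avoids the dichotomy, while yours makes the role of the coercivity bound and the mechanism (either $\sigma$-averaging or $r$-averaging must kick in) more transparent, and it reuses the Part~(i) one-dimensional analysis more literally. Two small caveats worth keeping in mind. In your $\theta$-parametrization of Part~(i), the azimuthal marginal $B(\theta)$ depends on the mutual angle $\hat u\cdot\hat V$, so the vanishing of the polar-cap integral must be shown uniformly in that angle; the paper does this by splitting $b_{ij}$ into a truncated-bounded part and a small-$L^1$ remainder and taking a supremum over $(\hat V,\hat u)$. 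Also, for general merely integrable $b_{ij}$ (and $b_{ij}^{ub}$), the bad-region contribution is only $o(1)$ as $\eps\to0$ — not $O(\eps)$ — so the quantitative rate $\smallO(k^{-a/2})$ is genuinely established only under boundedness of the kernels, as the paper's remark after the lemma concedes; your argument inherits exactly the same limitation, and the consequence \eqref{ks povzner} needs only $\Cpov\to0$, which both proofs do deliver in full generality.
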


\begin{proof} We prove each type of interaction separately. The idea of the proof is to split the domain of integration into sub-domains, one sub-domain, $A_\eps^0$, on which the term raised on power $k$ is strictly less than 1, guarantying the power decay in $k$, and its complement  whose measure will be  $\smallO(\eps)$. A suitable choice  of $\eps$ in terms of $k$ will allow to conclude the proof. 
	
\subsubsection*{Mono-mono interactions} Split the sphere $\bS$ into two sub-regions: 
\begin{equation*}
		A_\eps^0 = \left\{ \sigma \in \bS: |\Vs| \leq 1-\eps \right\}, \qquad
		A_\eps^1 = \left\{ \sigma \in \bS: |\Vs| >  1-\eps   \right\}, \quad  \eps>0.
\end{equation*}
In $A_\eps^0 $ the following inequality holds
\begin{equation*}
	1 - \bs  \left(  1 - |\Vs| \right) \leq  1 - \bs \eps  < 1, 
\end{equation*}
since both $\bs, \eps > 0$. Therefore, the averaging over this domain will ensure the power decay in $k$, i.e.  the left-hand side of \eqref{m-m av lemma} is estimated  as
\begin{equation}\label{m-m av lemma:e1}
\int_{\bS}  \left( 1 - \bs  \left(  1 - |\Vs| \right)  \right)^k b_{ij}(\hat{u}\cdot\sigma) \, \md \sigma \leq   \left( 1 - \bs \eps \right)^k \nb +  \int_{\bS}     \mathds{1}_{\!A_\eps^1 }\,   b_{ij}(\hat{u}\cdot \sigma)  \, \md \sigma. 
\end{equation}
It remains to show that the last term is of order $\smallO(\eps)$. To that end, consider the family of measurable functions $\Phi_\eps(\hat{V}, \hat{u}): \bS \times \bS \rightarrow \bRp$, for $\eps>0$ defined as 
\begin{equation}\label{m-m av lemma:e2}
0 \leq \Phi_\eps(\hat{V}, \hat{u}) :=  \int_{\bS} \mathds{1}_{| \Vs | >  1-\eps } \, b_{ij}(\hat{u}\cdot \sigma) \, \md \sigma\,.
\end{equation}
Let us prove that $\Phi_\eps(\hat{V}, \hat{u})$ converges uniformly (in the variables $\hat{V}, \hat{u}$) to zero as $\eps\rightarrow 0$.  To this end, fix $\delta>0$ and note that for any $K>0$
\begin{align*}
\Phi_\eps(\hat{V}, \hat{u}) &=  \int_{\bS} \mathds{1}_{|\Vs| >  1-\eps } \, b_{ij}(\hat{u}\cdot \sigma)\,\mathds{1}_{b_{ij}(\hat{u}\cdot \sigma) > K } \, \md \sigma + \int_{\bS} \mathds{1}_{ |\Vs| >  1-\eps } \, b_{ij}(\hat{u}\cdot \sigma)\,\mathds{1}_{b_{ij}(\hat{u}\cdot \sigma) \leq K } \, \md \sigma \\
&=:  \Phi^1_\eps(\hat{V}, \hat{u}) + \Phi^2_\eps(\hat{V}, \hat{u})\,.
\end{align*}
For $\Phi^1_\eps(\hat{V}, \hat{u})$ we use polar coordinates setting $\cos(\theta)=\hat{u}\cdot\sigma$ with $\hat{u}$ arbitrary but fixed.  Then, due to the monotone convergence theorem, there exists a sufficiently large $K:=K(\delta,b_{ij})$ such that
\begin{equation*}
0 \leq \Phi^1_\eps(\hat{V}, \hat{u}) \leq  |\mathbb{S}^{d-2}|\int^{1}_{-1} b_{ij}(\cos(\theta))\,\mathds{1}_{b_{ij}(\cos(\theta)) > K } \,\sin^{d-2}(\theta) \md \theta \leq \frac{\delta}{2}\,.
\end{equation*}
As for $\Phi^2_\eps(\hat{V}, \hat{u})$,
\begin{equation*}
0\leq\Phi^2_\eps(\hat{V}, \hat{u})\leq  K\int_{\bS} \mathds{1}_{|\Vs| >  1-\eps } \md \sigma = |\mathbb{S}^{d-2}|\,K\,\int^{1}_{-1}\mathds{1}_{|s| >  1-\eps } \md s \leq 2\,|\mathbb{S}^{d-2}|\,K\,\eps\,.
\end{equation*}
Thus, choosing $\eps<\frac{\delta}{4|\mathbb{S}^{d-2}|K}=:\eps_*(\delta,b_{ij})$ it holds that $\Phi^2_\eps(\hat{V}, \hat{u})\leq\frac{\delta}{2}$.  Consequently,
\begin{equation*}
0\leq \sup_{\hat{V}, \hat{u}}\Phi_\eps(\hat{V}, \hat{u}) \leq \delta\qquad \text{for any}\;\eps<\eps_*(\delta,b_{ij})\,.
\end{equation*}
In the sequel we simply write that $\sup_{\hat{V}, \hat{u}}\Phi_\eps(\hat{V}, \hat{u})=\smallO(\eps)$.  Returning to \eqref{m-m av lemma:e1}, we conclude that
\begin{equation}\label{m-m av lemma:e3}
\int_{\bS}  \left( 1 - \bs  \left(  1 - |\Vs| \right)  \right)^k b_{ij}(\hat{u}\cdot\sigma) \, \md \sigma \leq   \left( 1 - \bs \eps \right)^k \nb +  \smallO(\eps). 
\end{equation}
With the choice $\eps=k^{-a}$, with $a\in(0,1)$, it holds that
\begin{equation*}
\lim_{k\rightarrow\infty}\left( 1 - \bs\, \eps \right)^k \sim  \lim_{k\rightarrow\infty} e^{-\bs k^{1-a}} = 0\,,
\end{equation*}
and therefore,
\begin{equation}\label{m-m av lemma:e4}
\sup_{\hat{V}, \hat{u}}\int_{\bS}  \left( 1 - \bs  \left(  1 - |\Vs| \right)  \right)^k b_{ij}(\hat{u}\cdot\sigma) \, \md \sigma = \smallO\Big(k^{-a}\Big),\qquad a\in(0,1).
\end{equation}

\subsubsection*{Poly-poly interactions} For the first estimate in \eqref{p-p av lemma}, split the region $\bS \times[0,1]$ into three sub-regions: 
\begin{equation*}
	\begin{split}
		A_\eps^0 &= \left\{ (\sigma, r) \in \bS \times[0,1]: |\Vs| \leq 1-\eps,  \ r \leq  1- \eps  \right\},\\
		A_\eps^1 &= \left\{ (\sigma, r) \in \bS \times[0,1]: |\Vs| >  1-\eps   \right\}, \\
		A_\eps^2 &= \left\{ (\sigma, r) \in \bS \times[0,1]:   |\Vs| \leq 1-\eps, \ r >  1- \eps  \right\}.
	\end{split}
\end{equation*}
In $A_\eps^0$, thanks to \eqref{p-p en id lemma est 2},
\begin{equation*}
	\left( 1- \tilde{q}_{ij} (1-|\Vs|) - \tilde{t}_{ij} (1-r) \right) \leq  	\left( 1- \eps \left( \tilde{q}_{ij} + \tilde{t}_{ij} \right) \right) \leq 1- \bs \eps.
\end{equation*}
Denoting 
\begin{equation*}
	\rho_{ij}^{ub} = \int_{[0,1]^2}  \buRr \,  \dpo  \, \md r \, \md R,
\end{equation*}
the left-hand side of \eqref{p-p av lemma} becomes
\begin{multline*}
		 \int_{\bS \times[0,1]^2}\left( 1- \tilde{q}_{ij} (1-|\Vs|) - \tilde{t}_{ij} (1-r) \right)^k b_{ij}(\hat{u}\cdot \sigma) \,  \buRr \,  \dpo \, \md \sigma \, \md r \, \md R   \\
		\leq \left( 1- \bs \eps \right)^k \kappa_{ij}^{ub} + \rho_{ij}^{ub}  \Phi_\eps(\hat{V}, \hat{u}) 
		+ \nb \int_{[0,1]^2} \mathds{1}_{r >  1-\eps }  \,  \buRr \,  \dpo \, \md r \, \md R \\
		\leq  \left( 1- \bs \,\eps \right)^k \kappa_{ij}^{ub}  +  \smallO(\eps) \,,
\end{multline*} 
where for the latter integral we invoked the monotone convergence theorem.  The estimate follows, as in the previous case, choosing $\eps=k^{-a}$, with $a\in(0,1)$.

\smallskip
\noindent
For the second estimate in \eqref{p-p av lemma} we proceed similarly by considering the regions 
\begin{equation*}
	\begin{split}
		A_\eps^0 &= \left\{ (\sigma, r) \in \bS \times[0,1]: |\Vs| \leq 1-\eps,  \ r \geq  \eps  \right\},\\
		A_\eps^1 &= \left\{ (\sigma, r) \in \bS \times[0,1]: |\Vs| >  1-\eps   \right\},\\
		A_\eps^2 &= \left\{ (\sigma, r) \in \bS \times[0,1]: |\Vs| \leq 1-\eps,  \  r <   \eps  \right\}.
	\end{split}
\end{equation*}
Then, for $A_\eps^0$ the estimate \eqref{p-p en id lemma est 2} yields
\begin{equation*}
		 \left( 1- \tilde{p}_{ij} (1-|\Vs|) - \tilde{t}_{ij} r \right) \leq  	\left( 1- \eps \left( \tilde{p}_{ij} + \tilde{t}_{ij} \right) \right) \leq 1- \bs \eps,
\end{equation*}
and therefore
	\begin{multline*}
		 \int_{\bS \times[0,1]^2} \left( 1- \tilde{p}_{ij} (1-|\Vs|) - \tilde{t}_{ij} r \right)^k b_{ij}(\hat{u}\cdot \sigma) \,  \buRr \,  \dpo \, \md \sigma \, \md r \, \md R \\
		 	\leq \left( 1- \bs \eps \right)^k \kappa_{ij}^{ub} + \rho_{ij}^{ub} \Phi_\eps(\hat{V}, \hat{u}) 
		 + \nb \int_{[0,1]^2} \mathds{1}_{r <  \eps }  \,  \buRr \,  \dpo \, \md r \, \md R \\
		 \leq \left( 1- \bs \,\eps \right)^k \kappa_{ij}^{ub}  +  \smallO(\eps)\,. 
\end{multline*}
To conclude choose $\eps=k^{-a}$, with $a\in(0,1)$.

\subsubsection*{Poly-mono \& mono-poly interactions }  Split the region $\bS \times[0,1]$ into three sub-regions: 
\begin{equation*}
\begin{split}
A_\eps^0 &= \left\{ (\sigma, R) \in \bS \times[0,1]: |\Vs| \leq 1-\eps, \ R\geq \eps  \right\},\\
A_\eps^1 &= \left\{ (\sigma, R) \in \bS \times[0,1]: |\Vs| >  1-\eps   \right\},\\
A_\eps^2 &= \left\{ (\sigma, R) \in \bS \times[0,1]: |\Vs| \leq 1-\eps, \ R < \eps  \right\}.
\end{split}
\end{equation*}
Since in $A_\eps^0$  the following bound holds,
\begin{equation*}
	1 - \bs R \left(  1 - |\Vs| \right)  \leq 1 - \bs \, \eps^2,
\end{equation*}
the left-hand side of \eqref{p-m av lemma} can be estimated as 
\begin{multline*}
	\int_{\bS \times [0,1]}  \left( 1 - \bs R \left(  1 - |\Vs| \right)  \right)^k b_{ij}(\hat{u}\cdot \sigma) \, \buR \, \dm \, \md \sigma \, \md R \leq  \left( 1 - \bs \, \eps^2 \right)^k \kappa_{ij}^{ub} \\
	+ \ru \,\Phi_\eps(\hat{V}, \hat{u})  + \nb  \int_{[0,1]}  \mathds{1}_{R <  \eps } \,  \buR \, \dm \, \md R =  \left( 1 - \bs \, \eps^2 \right)^k \kappa_{ij}^{ub} + \smallO(\eps)\,,
\end{multline*}
where we denoted 
\begin{equation*}
	\rho_{ij}^{ub} = 	\int_{0}^1  \buR \, \dm \,  \md R.
\end{equation*}
We conclude taking $\eps=k^{-\frac a2}$, with $a\in(0,1)$.
\end{proof}
\begin{remark} A particular important case is when the kernels are all bounded $$b_{ij}(\hat{u}\cdot \sigma); \quad \tilde{b}^{ub}_{ij}(R) \ \text{and} \  \dm; \quad \tilde{b}^{ub}_{ij}(r, R) \ \text{and} \  \dpo \in L^{\infty}.$$
In such case there is an explicit rate for $ \mathcal{C}_k^{ij}$, namely, $\mathcal{C}_k^{ij}  \leq \frac{C}{\sqrt{k}}$.  Indeed, we analyse the most restrictive case of poly-mono \& mono-poly interactions. One splits $\bS \times[0,1]$ into the subregions
\begin{equation*}
\begin{split}
A_\eps &= \left\{ (\sigma, R) \in \bS \times[0,1]:  R\geq \eps  \right\},\\
A_\eps^c &= \left\{ (\sigma, R) \in \bS \times[0,1]: R < \eps  \right\}.
\end{split}
\end{equation*}
In $A_\eps$ 
\begin{equation*}
	1 - \bs R \left(  1 - |\Vs| \right)  \leq 1 - \bs \, \eps\left(  1 - |\Vs| \right)\,,
\end{equation*}
so that 
\begin{align*}
	\int_{A_\eps}  \left( 1 - \bs R \left(  1 - |\Vs| \right)  \right)^k b_{ij}(\hat{u}\cdot \sigma) \, \buR \, \psi_i(R) \, \sqrt{R} \, \md \sigma \, \md R \\
	 \leq C\int^{1}_{0}\left( 1 - \bs \, \eps\,s \right)^k {\rm d}s \leq \frac{C}{k\,\bs\,\eps}\,.
\end{align*}
Whereas, in $A_\eps^c$
\begin{align*}
\int_{A_\eps^c}  \left( 1 - \bs R \left(  1 - |\Vs| \right)  \right)^k b_{ij}(\hat{u}\cdot \sigma) \, \buR \, \psi_i(R) \, \sqrt{R} \, \md \sigma \, \md R \leq C\eps\,.
\end{align*}
The result follows minimising in $\eps$, that is choosing $\eps\sim\frac{1}{\sqrt{k}}$.

\end{remark}

\begin{corollary}\label{Lemma: G} With the notation of  lemmas \ref{Lemma: En Iden} and  \ref{Lemma: Averaging}  and the assumptions on collision kernels stated in Section \ref{Sec: Assump Bij}, the following estimates hold,
	\begin{description}
		\item[\textit{Part (i) - mono-mono interactions}] For $i, j \in \left\{ 1, \dots, M \right\}$ 
		\begin{equation}\label{m-m G}
		\mathcal{G}^+_{ij}(v, v_*) := \int_{\bS}  \left(  \la v' \ra_i^{k} + \la v'_*\ra_j^{k}  \right)  b_{ij}(\hat{u}\cdot\sigma) \, \md \sigma \leq 2 \, \Cpov  \left(  \la v \ra_i^{2} +   \la v_* \ra_j^{2}   \right)^{k/2}.
		\end{equation}
		\item[\textit{Part (ii) - poly-poly interactions}]  For $i,j \in \left\{M+1, \dots, P \right\}$ 
		\begin{equation}\label{p-p G}
			\begin{split}
					\mathcal{G}^+_{ij}(v, v_*, I, I_*) & :=  \int_{\bS \times[0,1]^2} \left(  \la v', I' \ra_i^{k} + \la v'_*, I'_* \ra_j^{k}  \right)   b_{ij}(\hat{u}\cdot \sigma) \,  \buRr \,  \dpo \, \md \sigma \, \md r \, \md R   \\
					& \leq 2 \, \Cpov  \left(  \la v, I \ra_i^{2} + \la v_*, I_* \ra_j^{2}  \right)^{k/2}.
			\end{split}		
		\end{equation}
		\item[\textit{Part (iii) - poly-mono \& mono-poly interactions}]  Consider $i\in \left\{ M+1, \dots, P \right\}$ and $   j \in \left\{ 1,  \allowbreak \dots,  \allowbreak M \right\}$.  
		\begin{equation}\label{p-m G}
			\begin{split}
			\mathcal{G}^+_{ij}(v, v_*, I) &:=	\int_{\bS \times[0,1]}  \left(  \la v', I' \ra_i^{k} + \la v'_*\ra_j^{k}  \right) b_{ij}(\hat{u}\cdot \sigma) \, \buR \, \dm \, \md \sigma \, \md R 
			\\
			& \leq 2 \, \Cpov  \left(  \la v, I \ra_i^{2} + \la v_*\ra_j^{2}  \right)^{k/2},
			\end{split}
		\end{equation}
	\end{description}
where the constant 	$\Cpov$ is characterized in Lemma \ref{Lemma: Averaging}.		
\end{corollary}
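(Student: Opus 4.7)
The strategy is to combine the pointwise energy estimates of Lemma \ref{Lemma: En Iden} with the compact manifold averaging estimates of Lemma \ref{Lemma: Averaging}, after noting that the conserved total energy $\Etot_{ij}$ appearing on the right-hand side of the pointwise bounds does not depend on the collision parameters $\sigma$, $r$, $R$ and hence can be pulled out of the integrals.

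For \textbf{Part (i)} (mono-mono), I would raise \eqref{m-m en id lemma est} to the power $k/2$, obtaining the pointwise majorization
\begin{equation*}
\la v'\ra_i^{k} + \la v'_*\ra_j^{k} \ \leq\  2\,\bigl(1-\bs(1-|\Vs|)\bigr)^{k/2}\bigl(\Etot_{ij}\bigr)^{k/2}.
\end{equation*}
Since $\Etot_{ij}=\la v\ra_i^2+\la v_*\ra_j^2$ is independent of $\sigma$, it factors out of the integral and one is left with
\begin{equation*}
\int_{\bS}\bigl(1-\bs(1-|\Vs|)\bigr)^{k/2}\,b_{ij}(\hat u\cdot\sigma)\,\md\sigma,
\end{equation*}
which is bounded by $\mathcal{C}^{ij}_{k/2}$ courtesy of \eqref{m-m av lemma} applied with exponent $k/2$. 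Relabeling this constant as $\Cpov$ (the corollary treats $\Cpov$ as a generic symbol of the same type as in Lemma \ref{Lemma: Averaging}, still satisfying the decay $\smallO(k^{-a/2})$) yields \eqref{m-m G}.

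\textbf{Part (iii)} (poly-mono / mono-poly) is handled identically: estimate \eqref{p-m en id lemma est} provides a single pointwise bound that simultaneously controls $\la v',I'\ra_i^k$ and $\la v'_*\ra_j^k$, and the integral to which $(\Etot_{ij})^{k/2}$ is multiplied is precisely the one bounded by \eqref{p-m av lemma}. \textbf{Part (ii)} (poly-poly) is only mildly more delicate: the two primed brackets satisfy \emph{different} pointwise estimates in \eqref{p-p en id lemma est}, so I would split
\begin{equation*}
\mathcal{G}^+_{ij}(v,v_*,I,I_*)\ =\ \mathcal{G}^{+,1}_{ij}+\mathcal{G}^{+,2}_{ij},
\end{equation*}
where the two summands contain $\la v',I'\ra_i^k$ and $\la v'_*,I'_*\ra_j^k$ respectively. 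Each is then bounded by $(\Etot_{ij})^{k/2}$ times one of the two integrals in \eqref{p-p av lemma} (with $k$ replaced by $k/2$), and summing gives the stated factor of $2\Cpov$.

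The only subtlety is cosmetic: Lemma \ref{Lemma: Averaging} is stated at exponent $k$, whereas here we need the bound at exponent $k/2$. Since $\mathcal{C}^{ij}_{k/2}=\smallO\!\bigl((k/2)^{-a/2}\bigr)=\smallO(k^{-a/2})$ for all $a\in(0,1)$, this merely amounts to redefining the generic decaying sequence $\Cpov$, preserving its monotonicity and decay properties and hence the use of the threshold $\ks$ from \eqref{ks povzner} in the subsequent moment estimates. No genuine difficulty arises beyond bookkeeping.
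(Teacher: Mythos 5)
Your proof is correct and follows the same route as the paper, which simply cites Lemma \ref{Lemma: En Iden} and Lemma \ref{Lemma: Averaging} without further elaboration: raise the pointwise energy estimates \eqref{m-m en id lemma est}, \eqref{p-p en id lemma est}, \eqref{p-m en id lemma est} to the power $k/2$, pull out the $\sigma,r,R$-independent factor $(\Etot_{ij})^{k/2}$, split the poly-poly integral across the two asymmetric bounds, and invoke the averaging estimates. Your observation that the averaging lemma must be applied at exponent $k/2$ rather than $k$ — so that the corollary's $\Cpov$ is really $\mathcal{C}^{ij}_{k/2}$, which shares the same monotone decay and hence leaves the threshold $\ks$ argument intact after a harmless rescaling — is exactly the bookkeeping point the paper leaves implicit.
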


\begin{proof}
The estimates follow from lemmas \ref{Lemma: En Iden} and \ref{Lemma: Averaging}.
\end{proof}

\subsection{Estimates on the $k$-moments of the collision operator in a bi-linear form}\label{Sec: bi-linear form coll op}
In this section, we will consider moments of the collision operator written in a bi-linear form.  Namely, depending on $i$ and $j$, whether they refer to monatomic or polyatomic species, there are three bi-linear forms,\\
\textit{(i)}  $i, j \in \left\{ 1, \dots, M \right\}$ 
\begin{equation}\label{bi-linear coll op m-m}
  \mathcal{Q}_k^{ij}[f,g] :=	\int_{\bR} Q_{ij}(f,g)(v) \, \la v \ra_i^{k} \, \md v +  \int_{\bR} Q_{ji}(g,f)(v) \,  \la v \ra_j^{k} \, \md v,
\end{equation}
\noindent \textit{(ii)}  $i\in \left\{ M+1, \dots, P \right\}$ and $   j \in \left\{ 1,  \allowbreak \dots,  \allowbreak M \right\}$
\begin{equation}\label{bi-linear coll op p-m}
\mathcal{Q}_k^{ij}[f,g]  := \int_{\bRfp} Q_{ij}(f,g)(v, I) \, \la v, I \ra_i^{k} \, \md v\, \md I + \int_{\bR} Q_{ji}(g,f)(v) \, \la v \ra_j^{k}  \, \md v,
\end{equation}
\noindent \textit{(iii)} $i,j \in \left\{M+1, \dots, P \right\}$ 
\begin{equation}\label{bi-linear coll op p-p}
\mathcal{Q}_k^{ij}[f,g] :=	\int_{\bRfp} Q_{ij}(f,g)(v, I) \, \la v, I \ra_i^{k} \, \md v\, \md I +  \int_{\bRfp} Q_{ji}(g,f)(v, I) \,  \la v, I \ra_j^{k} \, \md v\, \md I.
\end{equation}
\ \\

The goal of this section is to estimate these bi-linear forms in terms of suitable statistical moments of the input functions $f$ and $g$ with the help of the Compact Manifold Average Lemma, as presented in Lemma \ref{Lemma: Averaging} or Corollary \ref{Lemma: G}, and the following pointwise lemma which provides, in the sequel, a new estimate valid in the whole range $k>2$ for propagation of moments estimates.  
\begin{lemma}[p-Binomial inequality] \label{PI}
	Assume $p>1$.  For all $x, y>0$, the following inequality holds
\begin{equation}\label{polynomial 1}
	\left( x + y \right)^{p}  
	\leq x^{p} + y^{p} + 2^{p+1} \big(x y^{p- 1} \; 1_{y\geq x} + x^{p- 1} y\,1_{x\geq y} \big)\,.\end{equation}
\end{lemma}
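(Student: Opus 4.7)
The plan is to exploit symmetry and reduce to a one-variable inequality. Without loss of generality I can assume $y \geq x \geq 0$; the case $y = 0$ is trivial, and for $y > 0$ I would divide by $y^p$ and set $t := x/y \in [0,1]$, turning the claim into
\begin{equation*}
\psi(t) := (1+t)^p - t^p - 1 - 2^{p+1}\, t \leq 0, \qquad t \in [0,1].
\end{equation*}
The boundary data are favourable: $\psi(0) = 0$, $\psi(1) = 2^p - 2 - 2^{p+1} = -2^p - 2 < 0$, and $\psi'(0^+) = p - 2^{p+1} < 0$ for every $p > 1$ (since $2^{p+1} - p$ is positive and increasing on $[0,\infty)$).

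The core of the argument relies on the sign of
\begin{equation*}
\psi''(t) = p(p-1)\bigl[(1+t)^{p-2} - t^{p-2}\bigr],
\end{equation*}
which switches precisely at $p = 2$, so I would treat the two regimes separately. In the case $p \geq 2$, the map $s \mapsto s^{p-2}$ is nondecreasing on $(0, \infty)$, so $\psi'' \geq 0$ and $\psi$ is convex on $[0,1]$; using $\psi(0) = 0$ and $\psi(1) \leq 0$, the secant bound $\psi(t) \leq (1-t)\psi(0) + t\psi(1) = t\,\psi(1) \leq 0$ closes this case. In the case $1 < p < 2$, $s \mapsto s^{p-2}$ is strictly decreasing on $(0, \infty)$, so $\psi''(t) < 0$ on $(0,1]$; consequently $\psi'$ is monotonically decreasing from the value $\psi'(0^+) < 0$, so $\psi' < 0$ throughout $[0,1]$, $\psi$ is decreasing, and $\psi(t) \leq \psi(0) = 0$.

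Combining the two regimes gives $(x+y)^p \leq x^p + y^p + 2^{p+1}\, x y^{p-1}$ whenever $y \geq x$; interchanging the roles of $x$ and $y$ yields the symmetric bound for $x \geq y$, and the two statements assemble via the indicator functions to produce \eqref{polynomial 1}. The only subtle point I anticipate is that a naive pointwise estimate such as $\psi'(t) \leq p\,2^{p-1} - 2^{p+1}$ fails to be negative for $p > 4$, so the convexity-plus-secant argument (rather than monotonicity of $\psi$ itself) is what handles the large-$p$ regime, while in the range $1 < p < 2$ it is precisely the reversed sign of $\psi''$ that restores the monotonicity needed. This clean dichotomy at $p = 2$ is what permits a single universal constant $2^{p+1}$ across all $p > 1$.
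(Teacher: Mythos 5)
Your proof is correct, and it takes a genuinely different route from the one in the paper, though both are elementary calculus arguments about the same normalized function (you work with $\psi(t)=(1+t)^p-t^p-1-2^{p+1}t$, the paper with $\varphi=-\psi$). The paper does \emph{not} split according to the sign of $\psi''$; instead it locates the smallest critical point $z_c$ of $\varphi$ on $(0,1)$, extracts from $\varphi'(z_c)=0$ the algebraic lower bound $z_c\geq(4/p)^{1/(p-1)}\cdot 2-1$, verifies $(4/p)^{1/(p-1)}\geq 3/4$ to conclude $z_c\geq\tfrac12$, and then bounds $\varphi$ directly (term by term) on $(\tfrac12,1]$. The key structural difference is thus the dichotomy: yours is at $p=2$ (convexity versus concavity of $\psi$), the paper's is spatial at $t=\tfrac12$ (before versus after the first critical point, uniformly in $p$). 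Your argument is somewhat cleaner in that it avoids the side computation verifying $(4/p)^{1/(p-1)}\geq 3/4$ for all $p>1$, which, while indeed "not difficult," deserves its own check in the range $p>4$. Your closing remark that the naive derivative bound $\psi'(t)\leq 2^{p-1}(p-4)$ only helps for $p\leq 4$ is accurate and correctly pinpoints why the convexity-plus-secant step, rather than straight monotonicity of $\psi$, is required for large $p$. One minor cosmetic point: for $p>1$ one has $\psi'(0)=p-2^{p+1}$ as a genuine right-hand derivative (not just a limit of $\psi'$), since $t^{p-1}\to 0$; this makes the $1<p<2$ case a touch cleaner to state, but does not affect correctness.
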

\begin{proof}
Consider the function $\varphi(z)=z^{p} + 1 + 2^{p+1}z - (z+1)^{p}$ in $z\in[0,1]$. Note that $\varphi(0)=0$ and $\varphi'(0)>0$, that is $\varphi$ is nonnegative in some $[0,z_*]$.  Certainly $z_*\geq z_{c}$ with $z_{c}$ the smallest critical point of $\varphi$ in $(0,1)$.   Now, $\varphi'(z)=p\,z^{p-1} + 2^{p+1} - p\,(z+1)^{p-1}$, therefore $\varphi'(z_c)=0$ implies
\begin{equation*}
\frac{z_c + 1}{z_c} = \Big(1 + \frac{2^{p+1}}{pz^{p-1}_{c}}\Big)^{\frac{1}{p-1}}\geq \Big(\frac{4}{p}\Big)^{\frac{1}{p-1}}\frac{2}{z_{c}}\,.
\end{equation*}
That is, $z_{c}\geq \big(\frac{4}{p}\big)^{\frac{1}{p-1}}\,2-1$.  It is not difficult to check that $\big(\frac{4}{p}\big)^{\frac{1}{p-1}}\geq \frac34$, { for $p>1$}, which implies that $z_{c}\geq\frac12$.  Consequently, $\varphi\geq0$ in the interval $[0,\frac12]$.  As for the interval $(\frac12,1]$ note that in such interval
\begin{equation*}
\varphi(z)\geq \min_{z\in(\frac12,1]}\Big(z^{p} + 1 + 2^{p+1}z - (z+1)^{p}\Big)\geq \frac{1}{2^{p}}+1 + 2^{p} - 2^{p}\geq0.
\end{equation*}
This proves that $\varphi$ is nonnegative in $[0,1]$, that is
\begin{equation*}
(z+1)^{p}\leq z^{p} + 1 + 2^{p+1}z \,,\qquad z\in[0,1]\,.
\end{equation*}
Now, for $x, y>0$ write
\begin{equation*}
(x+y)^{p} = y^{p}\Big(\frac{x}{y}+1\Big)^{p}1_{y\geq x} + x^{p}\Big(\frac{y}{x}+1\Big)^{p}1_{x\geq y}
\end{equation*}
and, then, conclude applying the aforementioned inequality with $z=\frac{x}{y}$ and $z=\frac{y}{x}$ respectively. 
\end{proof}

\begin{lemma}\label{Lemma bi-linear form}
	Let non-negative functions $f, g \in L^1_{k+\gm}$, with $k \geq \ks$, $\ks$ is from \eqref{ks povzner}, $\gm$  from \eqref{gamma w}. There exist non-negative constants $ A_\star^{ij}, \eps >0$ and ${B}^{ij}_{k}$, such that the following estimate holds on the bi-linear forms \eqref{bi-linear coll op m-m}--\eqref{bi-linear coll op p-p} with the collision kernel  satisfying assumptions stated in Section \ref{Sec: Assump Bij},
	\begin{multline}\label{bi-linear coll op estimate}
			 \mathcal{Q}_k^{ij}[f,g] 	\leq
			-  {A}_{\star}^{ij} \, \mfj_0[g] \, \mfi_{k+\gamma_{ij}}[f]  -{A}_{\star}^{ji} \,  \mfi_0[f] \, \mfj_{k+\gamma_{ij}}[g] \\
		+4 \, \eps  \left( \mfj_0[g] \, \mfi_{k+\gi}[f]  +  \mfi_0[f] \, \mfj_{k+\gj}[g] \right)   
			+
			{B}_k^{ij}[f,g] + {B}_k^{ji}[g,f],
	\end{multline}
where  expressions are to be understood depending on indices  $i$ and $j$ according to \eqref{poly moment i}. Constants  are explicitly computed in the proof with the final expression given in Remark \ref{Remark const}.
\end{lemma}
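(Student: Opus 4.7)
The plan is to expand each form $\mathcal{Q}^{ij}_k[f,g]$ via the appropriate conservative weak formulation --- \eqref{m-m weak form}, \eqref{weak vs v I}, or \eqref{p-p weak form} --- with the test functions $\chi_i = \langle \cdot \rangle_i^k$ and $\chi_j = \langle \cdot \rangle_j^k$. This decomposes $\mathcal{Q}^{ij}_k[f,g]$ into a \emph{gain} integrand $\langle v' \rangle_i^k + \langle v'_* \rangle_j^k$ (with the obvious internal-energy augmentation in the polyatomic and mixed cases) minus the symmetric \emph{loss} integrand $\langle v \rangle_i^k + \langle v_* \rangle_j^k$, both weighted by $f(v)\,g(v_*)$ and the full kernel $\mathcal{B}_{ij}$.

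Next, I would bound the gain from above by replacing $\mathcal{B}_{ij}$ by $b_{ij}(\hat u\cdot\sigma)\,\tilde b^{ub}_{ij}\,\tilde{\mathcal{B}}_{ij}$ from the assumptions of Section \ref{Sec: Assump Bij}, and integrate the compact variables ($\sigma$, together with $R$ or $r,R$) via Corollary \ref{Lemma: G}, which delivers the pointwise upper bound $2\,\Cpov\,(\langle v\rangle_i^2 + \langle v_*\rangle_j^2)^{k/2}\,\tilde{\mathcal{B}}_{ij}$. For the loss, the lower bound $\mathcal{B}_{ij}\geq b_{ij}(\hat u\cdot\sigma)\,\tilde b^{lb}_{ij}\,\tilde{\mathcal{B}}_{ij}$ produces after the same integration the coercive integrand $\kappa^{lb}_{ij}(\langle v\rangle_i^k + \langle v_*\rangle_j^k)\,\tilde{\mathcal{B}}_{ij}$. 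I would then apply the $p$-binomial Lemma \ref{PI} with $p = k/2 > 1$,
\begin{equation*}
(\langle v\rangle_i^2 + \langle v_*\rangle_j^2)^{k/2} \leq \langle v\rangle_i^k + \langle v_*\rangle_j^k + 2^{k/2+1}\bigl(\langle v\rangle_i^{2}\langle v_*\rangle_j^{k-2}\mathds{1}_{\langle v_*\rangle_j\geq\langle v\rangle_i} + \langle v\rangle_i^{k-2}\langle v_*\rangle_j^{2}\mathds{1}_{\langle v\rangle_i\geq\langle v_*\rangle_j}\bigr),
\end{equation*}
so that the matched diagonal pieces pair with the loss to yield the net prefactor $-(\kappa^{lb}_{ij}-2\Cpov)\leq -\kappa^{lb}_{ij}/2$, by the Povzner-type inequality \eqref{ks povzner} valid for $k\geq\ks$.

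From this matched contribution, one must extract the two dissipative terms $-\As^{ij}\mfj_0[g]\,\mfi_{k+\g}[f]$ and $-\As^{ji}\mfi_0[f]\,\mfj_{k+\g}[g]$ announced in the statement. This step uses the coerciveness lower bound for $\tilde{\mathcal{B}}_{ij}$ supplied by Lemma \ref{Lemma bounds on Bij tilde}: integrating $\tilde{\mathcal{B}}_{ij}$ in $v_*$ (and in $I_*$ where applicable) against $g$ produces, modulo a lower-order remainder absorbable in $B^{ij}_k$, a factor $c\,\langle v\rangle_i^{\g}\,\mfj_0[g]$, and symmetrically in $v$ against $f$. The cross terms in the $p$-binomial decomposition are handled by the companion upper bound $\tilde{\mathcal{B}}_{ij}\leq C(\langle v\rangle_i^{\g} + \langle v_*\rangle_j^{\g})$: after distributing and integrating they become products of single-particle moments of the type $\mfi_{2+\g}[f]\,\mfj_{k-2}[g]$ and $\mfi_{k-2+\g}[f]\,\mfj_{2}[g]$ (plus symmetric partners), and a Young-type interpolation $\langle\cdot\rangle^{r}\leq \varepsilon\,\langle\cdot\rangle^{k+\gi} + C(\varepsilon)\,\langle\cdot\rangle^{r'}$ with $r'<k+\gi$ generates the announced $4\varepsilon$-weighted highest-moment terms, while sending the residuals into $B^{ij}_k[f,g] + B^{ji}_k[g,f]$.

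The principal obstacle I expect is the non-separability of $\tilde{\mathcal{B}}_{ij} = (E_{ij}/\mP)^{\g/2}$, which in the polyatomic and mixed cases couples the microscopic internal energies $I,I_*$ into the relative-velocity weight. Producing a lower bound on $\int \tilde{\mathcal{B}}_{ij}\,g\,\md v_*\,\md I_*$ that reproduces precisely $\langle v\rangle_i^{\g}$ against the $L^1$ mass of $g$ --- with the deficit purely of lower order in the bracket powers and therefore harmlessly absorbable into $B^{ij}_k$ --- is the delicate point, and is exactly the content of the forthcoming Lemma \ref{Lemma bounds on Bij tilde}. Its sharp form will dictate the admissible value of $\As^{ij}$ and, through the $\varepsilon$-splitting, the choice of the constant $4\varepsilon$ in front of the $(k+\gi,k+\gj)$-highest-moment error terms.
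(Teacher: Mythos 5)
Your proposal follows essentially the same route as the paper's proof: weak form plus Corollary \ref{Lemma: G} to write gain minus loss against $\tilde{\mathcal{B}}_{ij}$, the $p$-binomial Lemma \ref{PI} with $p=k/2$ to split $(\la\cdot\ra_i^2+\la\cdot_*\ra_j^2)^{k/2}$, Lemma \ref{Lemma bounds on Bij tilde} for the two-sided bounds on $\tilde{\mathcal{B}}_{ij}$, and an interpolation/Young step to convert the cross terms into $\eps$-weighted highest moments plus residuals. One small inaccuracy: from \eqref{ks povzner} one only gets $\mathcal{C}^{ij}_k<\kappa^{lb}_{ij}/2$, so the diagonal prefactor satisfies $-(\kappa^{lb}_{ij}-2\Cpov)<0$ but not necessarily $\leq-\kappa^{lb}_{ij}/2$; the paper correspondingly sets $\tilde A_\star^{ij}:=\kappa^{lb}_{ij}-2\mathcal{C}^{ij}_{\ks}>0$ rather than claiming the factor $1/2$, but this does not affect the validity of the argument.
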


\begin{proof} We firstly present the proof for poly-poly interactions, so taking $i,j \in \left\{M+1, \dots, P \right\}$.  By the weak form \eqref{p-p weak form} and assumptions on the collision kernel,
\begin{align}
	&  \mathcal{Q}_k^{ij}[f,g] 
		=  \int_{(\bRfp)^2} \int_{\bS \times [0,1]^2} f(v, I) \, g(v_*, I_*) \left\{  \la v', I' \ra_i^{k} +  \la v'_*, I'_* \ra_j^{k} -  \la v, I \ra_i^{k}  -  \la v_*, I_* \ra_j^{k} \right\} \nonumber   \\
	&		\qquad \qquad  	\qquad   \qquad \qquad \qquad \qquad   \times \mathcal{B}_{ij}(v, v_*, I, I_*, r,  \sigma, R) \, \dpo \, \md \sigma \, \md r \, \md R \, \md v_* \, \md I_* \, \md v \, \md I \label{p-p bilinear proof 0}  \\
	&	\leq   \int_{(\bRfp)^2} \int_{\bS \times [0,1]^2} f(v, I) \, g(v_*, I_*)  \tilde{\mathcal{B}}_{ij}(v, v_*, I, I_*) \nonumber \\
	& \qquad \qquad \qquad \qquad  \qquad \qquad \times   \left\{  \mathcal{G}^+_{ij}(v, v_*, I, I_*) - \kappa_{ij}^{lb} \left( \la v, I \ra_i^{k}  +  \la v_*, I_* \ra_j^{k} \right) \right\}  \, \md v_* \, \md I_* \, \md v \, \md I. \label{p-p bilinear proof}  
\end{align}
The term $ \mathcal{G}^+_{ij}(v, v_*, I, I_*) $  introduced in \eqref{p-p G} can be split by exploiting the estimate \eqref{polynomial 1}, given in Lemma \ref{PI}, yielding 
\begin{multline}\label{pomocna 3}
\left(  \la v, I \ra_i^{2} + \la v_*, I_* \ra_j^{2}  \right)^{k/2} \leq  \la v, I \ra_i^{k}  +  \la v_*, I_* \ra_j^{k} 
\\
+  \tilde{c}_k \left( \left\langle v, I \right\rangle_i^2 \left\langle v_*, I_* \right\rangle_j^{k- 2} \mathds{1}_{ \left\langle v, I \right\rangle_i \leq \left\langle v_*, I_* \right\rangle_j } + \left\langle v, I \right\rangle_i^{k- 2} \left\langle v_*, I_* \right\rangle_j^2  \mathds{1}_{   \left\langle v_*, I_* \right\rangle_j \leq \left\langle v, I \right\rangle_i} \right),
\end{multline}
where the constant $\tilde{c}_k$ is
\begin{equation}\label{c tilda k}
  \tilde{c}_k= 2^{\frac{k}2+1}.
\end{equation}
This estimate implies
\begin{multline*}
 \mathcal{Q}_k^{ij}[f,g] 	\leq   \int_{(\bRfp)^2} \int_{\bS \times [0,1]^2} f(v, I) \, g(v_*, I_*)  \tilde{\mathcal{B}}_{ij}(v, v_*, I, I_*) \\
	\times   \left\{   - \left( \kappa_{ij}^{lb} - 2 \, \mathcal{C}_k^{ij} \right) \left( \la v, I \ra_i^{k}  +  \la v_*, I_* \ra_j^{k} \right)  
	\right.
	\\
	\left.	  
	+ 2 \, \mathcal{C}_k^{ij}  \tilde{c}_k \left( \left\langle v, I \right\rangle_i^2 \left\langle v_*, I_* \right\rangle_j^{k- 2} \mathds{1}_{ \left\langle v, I \right\rangle_i \leq \left\langle v_*, I_* \right\rangle_j }  + \left\langle v, I \right\rangle_i^{k- 2} \left\langle v_*, I_* \right\rangle_j^2  \mathds{1}_{   \left\langle v_*, I_* \right\rangle_j \leq \left\langle v, I \right\rangle_i} \right)   \right\}  \, \md v_* \, \md I_* \, \md v \, \md I.
\end{multline*}
Since,   by the assumption of this Lemma, $k \geq \ks$ with $\ks$ from \eqref{ks povzner},   the constant in front of the highest order term is strictly positive. Moreover, as $\mathcal{C}_k^{ij} $ is decreasing in $k$, we choose
\begin{equation}\label{A star tilde}
0<\tilde{A}_{\star}^{ij}	:=	 \kappa_{ij}^{lb} - 2 \, \mathcal{C}_{\ks}^{ij}  \leq  \kappa_{ij}^{lb} - 2 \, \mathcal{C}_{k}^{ij}, \quad \text{for any} \ k\geq \ks.
\end{equation}
Using the lower bound for the negative term and the upper bound for the positive term from \eqref{p-p bounds on Bij tilde} in the  Appendix Lemma \ref{Lemma bounds on Bij tilde}, 
\begin{multline*}
  \mathcal{Q}_k^{ij}[f,g] 	\leq   \int_{(\bRfp)^2}  f(v, I) \, g(v_*, I_*)     \left\{   - \tilde{A}_{\star}^{ij} L_{ij} \left( \la v, I \ra_i^{k+\gamma_{ij}}  +  \la v_*, I_* \ra_j^{k+\gamma_{ij}} \right) 
		\right.
	\\
	\left.	  
	+ \tilde{A}_{\star}^{ij}  \left( \la v, I \ra_i^{k} \la v_*, I_* \ra_j^{\gamma_{ij}}  +  \la v, I \ra_i^{\gamma_{ij}} \la v_*, I_* \ra_j^{k} \right)
	\right.
	\\
	\left.	  
		+ 4 \, \mathcal{C}_k^{ij}  \tilde{c}_k  
	\left( \left\langle v, I \right\rangle_i^{2} \left\langle v_*, I_* \right\rangle_j^{k- 2+\gamma_{ij}} \mathds{1}_{ \left\langle v, I \right\rangle_i \leq \left\langle v_*, I_* \right\rangle_j }   
	\right.
		\right.
	\\
	\left.	  
		\left.	  
			+ \left\langle v, I \right\rangle_i^{k- 2+\gamma_{ij}} \left\langle v_*, I_* \right\rangle_j^2  \mathds{1}_{  \left\langle v_*, I_* \right\rangle_j \leq \left\langle v, I \right\rangle_i  }    \right)   \right\}  \, \md v_* \, \md I_* \, \md v \, \md I.
\end{multline*}
Majorizing the indicator functions and switching to the moment notation \eqref{poly moment i}, we rewrite the previous inequality as
\begin{multline}\label{pomocna 1}
 \mathcal{Q}_k^{ij}[f,g] 	\leq    - \tilde{A}_{\star}^{ij} L_{ij} \left( \mfj_0[g]\, \mfi_{k+\gamma_{ij}}[f]  + \mfi_0[f] \,  \mfj_{k+\gamma_{ij}}[g] \right) 
	\\
	+\tilde{A}_{\star}^{ij}  \left( \mfi_k[f] \, \mfj_{\gamma_{ij}}[g]  +  \mfi_{\gamma_{ij}}[f] \, \mfj_k[g] \right)
{	+ 4 \, \mathcal{C}_k^{ij}  \tilde{c}_k 
	\left(
\mfi_{2}[f] \, \mfj_{k- 2 +\gamma_{ij}}[g] +	 \mfi_{k- 2+\gamma_{ij}}[f] \,  \mfj_2[g]   \right).}
\end{multline}
The monotonicity of moments \eqref{monotonicity of norm} allows to bound positive terms in \eqref{pomocna 1} related to the moments of the order containing $\gamma_{ij}$ by  moments of the same order but involving $\gi$ or $\gj$ instead of $\gamma_{ij}$, see \eqref{gamma ij assumpt}.  This will be an essential step for the upcoming moments interpolation which requires strict positivity of  rates $\gi, \gj >0$. Thus, \eqref{pomocna 1} becomes
\begin{multline}\label{pomocna 6}
\mathcal{Q}_k^{ij}[f,g] 	\leq    - \tilde{A}_{\star}^{ij} L_{ij} \left( \mfj_0[g]\, \mfi_{k+\gamma_{ij}}[f]  + \mfi_0[f] \,  \mfj_{k+\gamma_{ij}}[g] \right) 
	\\
	+\tilde{A}_{\star}^{ij}  \left( \mfi_k[f] \, \mfj_{\gj}[g]  +  \mfi_{\gi}[f] \, \mfj_k[g] \right)
{	+ 4 \, \mathcal{C}_k^{ij}  \tilde{c}_k 
	\left( \mfi_{2}[f] \, \mfj_{k- 2 +\gj}[g] +
	\mfi_{k- 2+\gi}[f] \,  \mfj_2[g]  \right). }
\end{multline}

Next, we invoke  arguments of   \cite{IG-Alonso-BAMS} that involve moment interpolation formulas 
\begin{equation}\label{mom interpolation}
	\mathfrak{m}^\ell_\lambda \leq  (\mathfrak{m}^\ell_{\lambda_1})^\tau \,  (\mathfrak{m}^\ell_{\lambda_2})^{1-\tau}, \quad 0\leq \lambda_1 \leq \lambda \leq \lambda_2, \quad 0< \tau < 1, \quad \lambda=\tau \lambda_1 + (1-\tau) \lambda_2,
\end{equation}
where $\ell$ will be either $i$ or $j$. For simplicity of the notation, we drop for the moment the reference to the distribution function, i.e. we shorten $\mfi := \mfi[f]$.

Thus, since $\gl >0$, \eqref{mom interpolation} yields
\begin{equation}\label{mk interpolation}
\mfl_k  \leq (\mfl_2)^{\frac{\gl}{k-2+\gl}} (\mfl_{k+\gl})^{\frac{k-2}{k-2+\gl}}, 
\qquad \mfl_{k-2+\gl}  \leq (\mfl_0)^\frac{2}{k+\gl} (\mfl_{k+\gl})^\frac{k+\gl-2}{k+\gl}, \quad \ell \in \left\{ i, j\right\}.
\end{equation}
Incorporating these estimates into \eqref{pomocna 6} and using moment monotonicity \eqref{monotonicity of norm} for $0 < \gi, \gj \leq 2$,
\begin{align}
\mathcal{Q}_k^{ij}[f,g] &	\leq    - \tilde{A}_{\star}^{ij} L_{ij}  \left( \mfj_0[g] \, \mfi_{k+\g}[f]  +  \mfi_0[f] \, \mfj_{k+\g}[g] \right)  \nonumber
	\\
& \quad	+  K_1^{ij}[f,g] \, (\mfi_{k+\gi}[f])^{\frac{k-2}{k-2+\gi}}  + K_1^{ji}[g,f] \, (\mfj_{k+\gj}[g])^\frac{k-2}{k-2+\gj} \nonumber
	\\ 
& \quad 	+ K_2^{ij}[f,g] \, (\mfi_{k+\gi}[f])^{\frac{k+\gi-2}{k+\gi}}
		  +  K_2^{ji}[g,f] \, (\mfj_{k+\gj}[g])^{\frac{k+\gj-2}{k+\gj}}, \nonumber
		  \\
& =:    - \tilde{A}_{\star}^{ij} L_{ij}  \left( \mfj_0[g]  \, \mfi_{k+\g}[f]  + \mfi_0[f] \,  \mfj_{k+\g}[g] \right)  + T_1 + T_2, \label{bi-linear Qij}
\end{align}
with constants
\begin{equation*}
K_1^{ij}[f,g] =  \tilde{A}_{\star}^{ij}  (\mfi_2[f])^{\frac{\gi}{k-2+\gi}}   \mfj_{2}[g], \quad
K_2^{ij}[f,g] = {4} \, \mathcal{C}_k^{ij}  \tilde{c}_k   \,  (\mfi_0[f])^{\frac{2}{k+\gi}}  \,  \mfj_2[g].
\end{equation*}
In order to factorize terms of the order $k+\gi$ and $k+\gj$, we use Young's inequality, 
\begin{equation}\label{Y}
	\left| a \, b\right| \leq \frac{1}{p \, \eps^{p/p'} } \left|a\right|^p + \frac{\eps}{p'} \left|b\right|^{p'}, \quad \text{for} \ \  \eps>0 \ \ \text{and} \ \ \frac{1}{p} + \frac{1}{p'} =1,
\end{equation}
and proceed separately for each term.
\subsubsection*{Term $T_1$.} For the first term of $T_1$, Young's inequality \eqref{Y} implies
\begin{align*}
K_1^{ij}[f,g] \, (\mfi_{k+\gi}[f])^{\frac{k-2}{k-2+\gi}} & \leq  \frac{(\mfj_0[g])^{-p/p'}}{p \,\eps^{p/p'}}  \left(  K_1^{ij}[f,g] \right)^p   + \eps \,  \mfj_0[g] \, (\mfi_{k+\gi}[f])^{\frac{k-2}{k-2+\gi}p'}\\
&= \tilde{K}_1^{ij}[f,g] + \eps \,  \mfj_0[g]  \, \mfi_{k+\gi}[f],
\end{align*}
with the notation 
\begin{equation}\label{K1 tilde}
	\tilde{K}_1^{ij}[f,g]  = \frac{(\mfj_0[g])^{-p/p'}}{p \,\eps^{p/p'}}  \left(  K_1^{ij}[f,g] \right)^p,  
\end{equation}
and the choice 
\begin{equation}\label{Y poly r}
p= \frac{k-2+\gi}{\gi} \ \Rightarrow \ p'= \frac{k-2+\gi}{k-2}.
\end{equation}
The very same computations imply for the counterpart, 
\begin{align*}
	K_1^{ji}[g,f] \, (\mfj_{k+\gj}[g])^{\frac{k-2}{k-2+\gj}} & \leq
 \tilde{K}_1^{ji}[g,f] + \eps \,  \mfi_0[f]  \, \mfj_{k+\gj}[g].
\end{align*}
Thus, term $T_1$ is estimated as follows,
\begin{equation}\label{T1}
	T_1 \leq \tilde{K}_1^{ij}[f,g] + \tilde{K}_1^{ji}[g,f] + \eps \,  \mfj_0[g]  \, \mfi_{k+\gi}[f] + \eps \,  \mfi_0[f]  \, \mfj_{k+\gj}[g],
\end{equation}
where the involved terms are specified in \eqref{K1 tilde} with \eqref{Y poly r}.
\subsubsection*{Term $T_2$.}  
In a similar fashion, for the first part of term $T_2$, Young's inequality \eqref{Y} implies
\begin{align*}
	K_2^{ij}[f,g] \, (\mfi_{k+\gi}[f])^{\frac{k-2+\gi}{k+\gi}} &\leq \frac{(\mfj_0[g])^{-q/q'}}{q \,\eps^{q/q'}}  \left(  K_2^{ij}[f,g] \right)^q   + \eps \,  \mfj_0[g] \, (\mfi_{k+\gi}[f])^{\frac{k-2+\gi}{k+\gi}q'}\\
	&= \tilde{K}_2^{ij}[f,g] + \eps \,  \mfj_0[g]  \, \mfi_{k+\gi}[f],
\end{align*}
with the choice 
\begin{equation}\label{Y poly s}
	q = \frac{k+\gi}{2} \ \Rightarrow \ q'=\frac{k+\gi}{k-2+\gi},
\end{equation}
and notation 
\begin{equation}\label{K3 tilde}
	\tilde{K}_2^{ij}[f,g] = \frac{(\mfj_0[g])^{-q/q'}}{q \,\eps^{q/q'}}  \left(  K_2^{ij}[f,g] \right)^q.
\end{equation}
Thus, for $T_2$ we conclude
\begin{equation}\label{T3}
	T_2 \leq \tilde{K}_2^{ij}[f,g]  + \tilde{K}_2^{ji}[g,f] +  \eps \, \mfj_0[g]  \, \mfi_{k+\gi}[f]  +   \eps \, \mfi_0[f] \, 	\mfj_{k+\gj}[g].
\end{equation}
Gathering \eqref{T1} and \eqref{T3},  the bi-linear form $\mathcal{Q}_{ij}$ becomes 
\begin{equation*}
	\begin{split}
\mathcal{Q}_k^{ij}[f,g] & 	\leq - \tilde{A}_{\star}^{ij} L_{ij}    \left( \mfj_0[g] \, \mfi_{k+\g}[f]  +  \mfi_0[f] \, \mfj_{k+\g}[g] \right)  \\
		& +  {2} \, \eps  \left( \mfj_0[g] \, \mfi_{k+\gi}[f]  +  \mfi_0[f] \, \mfj_{k+\gj}[g] \right)  
	 +	{B}^{ij}[f,g] + {B}^{ji}[g,f], 
	\end{split}
\end{equation*}
where all constants are merged into the one,
\begin{equation}\label{const Bij}
		{B}^{ij}_k[f,g] =  \tilde{K}_1^{ij}[f,g]  +   \tilde{K}_2^{ij}[f,g],
\end{equation}
and we denoted
\begin{equation}\label{A star ij}
	{A}_{\star}^{ij}  =   {\tilde{A}_{\star}^{ij} L_{ij} },
\end{equation}
which concludes the proof for  poly-poly interaction when $i,j \in \left\{M+1, \dots, P \right\}$.\\

For monatomic gases, i.e. $i, j \in \left\{ 1, \dots, M \right\}$,   the weak form \eqref{m-m weak form} together with the appropriate assumptions on the collision kernel  from Section \ref{Sec: Assump Bij} and  Lemma \ref{Lemma: G}  imply,
\begin{align*}
	& 	\mathcal{Q}_k^{ij}[f,g]	=   \int_{(\bR)^2} \int_{\bS  } f(v) \, g(v_*) \left\{  \la v' \ra_i^{2k} +  \la v'_* \ra_j^{2k} -  \la v \ra_i^{2k}  -  \la v_*  \ra_j^{2k} \right\}  \mathcal{B}_{ij}(v, v_*,  \sigma) \, \md \sigma \, \md v_*  \, \md v  \\
	&	\leq  \int_{(\bR)^2} \int_{\bS} f(v) \, g(v_*) \,  \tilde{\mathcal{B}}_{ij}(v, v_*)   \left\{  \mathcal{G}^+_{ij}(v, v_*) - \kappa_{ij}^{lb} \left( \la v \ra_i^{2k}  +  \la v_* \ra_j^{2k} \right) \right\}  \, \md v_* \,   \md v.
\end{align*}
Therefore, the very same computation as the one after the line \eqref{pomocna 3} leads to the desired estimate \eqref{bi-linear coll op estimate}. Similarly, for poly-mono interaction when  $i \in \left\{ M+1, \dots, P \right\}$ and  $ j \in \left\{ 1, \dots, M \right\}$, the weak form  \eqref{weak vs v I}, assumptions on the collision kernel  from Section \ref{Sec: Assump Bij} and Lemma \ref{Lemma: G} yield
\begin{align*}
	&	 \mathcal{Q}_k^{ij}[f,g] = \int_{\bR \times \bR \times \bRp} \int_{\bS \times [0,1]}  f(v, I) \, g(v_*) \left\{  \la v', I' \ra_i^{2k} + \la v'_*\ra_j^{2k}  -  \la v, I \ra_i^{2k} - \la v_* \ra_j^{2k}  \right\}  \\
	& \qquad \qquad \qquad \qquad \qquad \qquad \qquad \times \mathcal{B}_{ij}(v, v_*, I, \sigma, R) \, \dm \, \md \sigma \, \md R \, \md v_* \, \md v \, \md I  \\
	&  \leq  \int_{\bR \times \bR \times \bRp}  f(v, I) \, g(v_*)  \, \tilde{\mathcal{B}}_{ij}(v, v_*, I) \left\{  \mathcal{G}^+_{ij}(v, v_*, I)   - \kappa_{ij}^{lb}  \left( \la v, I \ra_i^{2k} + \la v_* \ra_j^{2k} \right) \right\} \, \md v_* \, \md v \, \md I,
\end{align*}
and so proceeding in the same fashion as for poly-poly interaction after \eqref{pomocna 3} leads to \eqref{bi-linear coll op estimate}, which concludes the proof.
\end{proof}

\begin{remark}\label{Remark const}
	The constant $B^{ij}_{k}[f,g]$ is explicit and depends on $k$, $\eps$, $\gi$, $\gj$ and on $f$ and $g$ through their zero and second order species moment. Its formula is given in \eqref{const Bij} after gathering expressions for  $\tilde{K}_1^{ij}[f,g]$ and $ \tilde{K}_2^{ij}[f,g]$,
\begin{equation*}
	\begin{split}
	\tilde{K}_1^{ij}[f,g] & = \eps^{-\frac{k-2}{\gi}} \frac{\gi}{k-2+\gi} \left( \tilde{A}_\star^{ij}  \right)^{\frac{k-2+\gi}{\gi}} \mathfrak{m}^i_2[f] \, \mathfrak{m}^j_0[g]^{-\frac{k -2}{\gi}} \, \mathfrak{m}^j_2[g]^{\frac{k-2+\gi}{\gi}},  \\
	\tilde{K}_2^{ij}[f,g] & =  \eps^{- \frac{1}{2+\gj-\gi} \left(  (k-2+\gi)  + \frac{(k+\gj)\gi}{k-2} \right) }  \left(\frac{2+\gj}{k+\gj}\right)^{\frac{(2+\gj)(k-2+\gi)}{(2+\gj-\gi)(k-2)}} \left( {4} \, \Cpov \, \tilde{c}_k \right)^{\frac{(k+\gj)(k-2+\gi)  }{(k-2)(2+\gj-\gi)}}  \\ 
	& \qquad \qquad \times \mathfrak{m}^j_0[g] \, \mathfrak{m}^i_0[f]^{-\frac{k -2+\gi}{2+\gj-\gi}} \, \mathfrak{m}^i_2[f]^{\frac{k+\gj}{2+\gj-\gi}},
	\end{split}
\end{equation*}	
where $\tilde{A}_\star^{ij}$ is from \eqref{A star tilde},  $\Cpov$ as in \eqref{C pov} and $\tilde{c}_k$ is specified in \eqref{c tilda k}.
\end{remark}

{
\begin{lemma}\label{Lemma bi-linear form below ks}
	Let non-negative functions $f, g \in L^1_{k}$, with $k > 2$.  The following inequality holds for the bi-linear forms \eqref{bi-linear coll op m-m}--\eqref{bi-linear coll op p-p} with the collision kernel  satisfying assumptions stated in Section \ref{Sec: Assump Bij},
	\begin{equation}\label{bi-linear coll op estimate below ks}
	\mathcal{Q}_k^{ij}[f,g] 	\leq
	2 \, \kappa_{ij}^{ub} \, \tilde{c}_k 
	\left( 
	\mfi_{2}[f] \, \mfj_{k }[g] +	 \mfi_{k }[f] \,  \mfj_2[g] 	 \right),
	\end{equation}
with $ \kappa_{ij}^{ub} $ from \eqref{kappa p-p} and $\tilde{c}_k$ from \eqref{c tilda k}.
\end{lemma}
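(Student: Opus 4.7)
The approach mirrors the opening steps of the proof of Lemma~\ref{Lemma bi-linear form}, but stops short of invoking the Compact Manifold Averaging Lemma, which is only useful under the hypothesis $k\geq\ks$. In its place, one simply discards the loss part of the weak formulation. Writing \eqref{p-p weak form} (and analogously \eqref{m-m weak form} and \eqref{weak vs v I}) with test functions $\la v\ra_i^{k}$ and $\la v_*\ra_j^{k}$, the loss contributes non-positively against the non-negative weight $f\,g\,\mathcal{B}_{ij}$, hence
\begin{equation*}
\mathcal{Q}_k^{ij}[f,g]\;\leq\;\iint f\,g\,\mathcal{B}_{ij}\,\bigl(\la v'\ra_i^{k}+\la v'_{*}\ra_j^{k}\bigr)\,d(\text{parameters})\,d(\text{velocities}).
\end{equation*}

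By the energy identity of Lemma~\ref{Lemma: En Iden}, $\la v'\ra_i^{2}+\la v'_{*}\ra_j^{2}=\Etot_{ij}$, and the super-additivity $x^{p}+y^{p}\leq(x+y)^{p}$ for $p=k/2\geq 1$ then yields the parameter-free pointwise bound $\la v'\ra_i^{k}+\la v'_{*}\ra_j^{k}\leq (\Etot_{ij})^{k/2}$. Using the upper bound on $\mathcal{B}_{ij}$ from Section~\ref{Sec: Assump Bij} and integrating out $\sigma,r,R$ produces the constant $\kappa_{ij}^{ub}$ together with the factor $\tilde{\mathcal{B}}_{ij}$, leaving
\begin{equation*}
\mathcal{Q}_k^{ij}[f,g]\;\leq\;\kappa_{ij}^{ub}\iint f\,g\,\tilde{\mathcal{B}}_{ij}\,(\Etot_{ij})^{k/2}\,d(\text{velocities}).
\end{equation*}

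The remaining work is algebraic. Lemma~\ref{PI} applied with $p=k/2>1$, $x=\la v\ra_i^{2}$, $y=\la v_{*}\ra_j^{2}$ expands $(\Etot_{ij})^{k/2}$ into the sum $\la v\ra_i^{k}+\la v_{*}\ra_j^{k}+\tilde{c}_{k}\bigl(\la v\ra_i^{2}\la v_{*}\ra_j^{k-2}+\la v\ra_i^{k-2}\la v_{*}\ra_j^{2}\bigr)$. Since every bracket is $\geq 1$, each of the four terms can be enlarged to one of the two cross products $\la v\ra_i^{2}\la v_{*}\ra_j^{k}$, $\la v\ra_i^{k}\la v_{*}\ra_j^{2}$, giving an overall prefactor $(1+\tilde{c}_{k})\leq 2\tilde{c}_{k}$ (as $\tilde{c}_{k}\geq 1$ for $k>2$). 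The factor $\tilde{\mathcal{B}}_{ij}$ is then absorbed using the appendix Lemma~\ref{Lemma bounds on Bij tilde}, whose pointwise bounds, together with $\gamma_{ij}\in[0,2]$ and the bracket lower bound $1$, are calibrated so that multiplication by $\tilde{\mathcal{B}}_{ij}$ preserves the two-cross-product structure $\la v\ra_i^{2}\la v_{*}\ra_j^{k}+\la v\ra_i^{k}\la v_{*}\ra_j^{2}$. Integration against $f(v)\,g(v_{*})$ followed by Definition~\ref{def moment} then yields the moment products $\mfi_{2}[f]\,\mfj_{k}[g]$ and $\mfi_{k}[f]\,\mfj_{2}[g]$ with constant $2\kappa_{ij}^{ub}\tilde{c}_{k}$. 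The main technical subtlety is precisely this final absorption of $\tilde{\mathcal{B}}_{ij}$, which must be arranged so as not to lift any moment order above $k$.
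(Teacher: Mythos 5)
Your plan goes wrong at two coupled points, both of which are load-bearing in the paper's argument. First, discarding the loss term is not merely suboptimal here; it is fatal. After the super-additivity step and the p-Binomial expansion of $(\Etot_{ij})^{k/2}$ via Lemma~\ref{PI}, the gain term carries the full-power contributions $\la v\ra_i^{k}+\la v_*\ra_j^{k}$. If the loss term has been thrown away, these survive and must then be multiplied by $\tilde{\mathcal{B}}_{ij}$. Using \eqref{p-p bounds on Bij tilde} you get $\tilde{\mathcal{B}}_{ij}\,\la v\ra_i^{k}\leq \la v\ra_i^{k+\gamma_{ij}}+\la v\ra_i^{k}\la v_*\ra_j^{\gamma_{ij}}$, and after integrating against $f\,g$ the first summand produces $\mfi_{k+\gamma_{ij}}[f]\,\mfj_0[g]$, a moment of order $k+\gamma_{ij}>k$. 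This exceeds the assumed regularity $f,g\in L^1_k$ and cannot be dominated by the right-hand side $\mfi_2\mfj_k+\mfi_k\mfj_2$ of \eqref{bi-linear coll op estimate below ks}. The paper keeps the loss term precisely so that these full-power terms cancel: both gain and loss are multiplied by the \emph{same} $\mathcal{B}_{ij}$ inside the parametric integral, so the $-\big(\la v\ra_i^{k}+\la v_*\ra_j^{k}\big)$ of the loss exactly absorbs the leading part of the p-Binomial expansion of the gain, leaving only the cross terms with the coefficient $\tilde c_k$ and the indicators intact.

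Second, even the surviving cross terms are mishandled: you enlarge $\la v\ra_i^{2}\la v_*\ra_j^{k-2}\to\la v\ra_i^{2}\la v_*\ra_j^{k}$ and drop the indicators \emph{before} absorbing $\tilde{\mathcal{B}}_{ij}$, thereby spending the very $\gamma_{ij}\leq 2$ worth of headroom that was needed. Once the product sits at order $(2,k)$, any further multiplication by $\tilde{\mathcal{B}}_{ij}\geq 1$ with $\gamma_{ij}>0$ necessarily pushes one of the powers above $k$; no calibration of Lemma~\ref{Lemma bounds on Bij tilde} can prevent this, and your closing sentence essentially concedes this point without resolving it. The order in the paper is the reverse: keep the cross term at order $(2,k-2)$ together with $\mathds{1}_{\la v\ra_i\leq\la v_*\ra_j}$, use that indicator to dominate $\tilde{\mathcal{B}}_{ij}\leq 2\,\la v_*\ra_j^{\gamma_{ij}}$, arriving at order $(2,k-2+\gamma_{ij})$, and \emph{only then} invoke $\gamma_{ij}\leq 2$ and monotonicity \eqref{monotonicity of norm} to reach $(2,k)$. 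So the mechanism does exist, but only if the indicator functions and the $(2,k-2)$ cross-term structure are preserved through the $\tilde{\mathcal{B}}_{ij}$ step.
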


\begin{proof}
As in the previous Lemma \ref{Lemma bi-linear form}, we  present the proof for poly-poly interactions, i.e. for $i,j \in \left\{ M+1, \dots, P \right\}$ keeping in mind that other types of interactions follow the same strategy. 

Start with the bi-linear form \eqref{p-p bilinear proof 0},
\begin{align}
	& \mathcal{Q}_k^{ij}[f,g] 
	=  \int_{(\bRfp)^2} \int_{\bS \times [0,1]^2} f(v, I) \, g(v_*, I_*) \left\{  \la v', I' \ra_i^{k} +  \la v'_*, I'_* \ra_j^{k} -  \la v, I \ra_i^{k}  -  \la v_*, I_* \ra_j^{k} \right\} \nonumber   \\
	&		\qquad \qquad  	\qquad   \qquad \qquad \qquad \qquad   \times \mathcal{B}_{ij}(v, v_*, I, I_*, r,  \sigma, R) \, \dpo \, \md \sigma \, \md r \, \md R \, \md v_* \, \md I_* \, \md v \, \md I. \nonumber
\end{align}
Then the primed quantities are estimated in terms of non-primed by using the conservation of energy \eqref{p-p coll CL}. Indeed, for $k > 2$, 
\begin{equation*}
	 \la v', I' \ra_i^{k} +  \la v'_*, I'_* \ra_j^{k} \leq   \left(\la v', I' \ra_i^{2} +  \la v'_*, I'_* \ra_j^{2}  \right)^{k/2}  = \left(\la v, I \ra_i^{2} +  \la v_*, I_* \ra_j^{2}   \right)^{k/2}.
\end{equation*}
The last term is estimated via  \eqref{pomocna 3}, which in combination with the loss term implies cancellation of the highest order moment,
\begin{align*}
	&\mathcal{Q}_k^{ij}[f,g] 
	\leq \tilde{c}_k  \int_{(\bRfp)^2} \int_{\bS \times [0,1]^2} f(v, I) \, g(v_*, I_*) \, \mathcal{B}_{ij}(v, v_*, I, I_*, r,  \sigma, R) \, \dpo \\ 
&	\times \left( \left\langle v, I \right\rangle_i^2 \left\langle v_*, I_* \right\rangle_j^{k- 2} \mathds{1}_{ \left\langle v, I \right\rangle_i \leq \left\langle v_*, I_* \right\rangle_j } + \left\langle v, I \right\rangle_i^{k- 2} \left\langle v_*, I_* \right\rangle_j^2  \mathds{1}_{   \left\langle v_*, I_* \right\rangle_j \leq \left\langle v, I \right\rangle_i} \right)  \md \sigma \, \md r \, \md R \, \md v_* \, \md I_* \, \md v \, \md I.  \\
\end{align*}
Then, the assumption on the collision kernel \eqref{p-p ass B}  allows to integrate over $(\sigma, r, R)$ implying, with notation \eqref{kappa p-p},
\begin{equation*}
	\begin{split}
&  \mathcal{Q}_k^{ij}[f,g] 
	\leq  \kappa_{ij}^{ub} \,  \tilde{c}_k   \int_{(\bRfp)^2}  f(v, I) \, g(v_*, I_*) \, \tilde{\mathcal{B}}_{ij}(v, v_*, I, I_*)
\\& \times	 \left( \left\langle v, I \right\rangle_i^2 \left\langle v_*, I_* \right\rangle_j^{k- 2} \mathds{1}_{ \left\langle v, I \right\rangle_i \leq \left\langle v_*, I_* \right\rangle_j } + \left\langle v, I \right\rangle_i^{k- 2} \left\langle v_*, I_* \right\rangle_j^2  \mathds{1}_{   \left\langle v_*, I_* \right\rangle_j \leq \left\langle v, I \right\rangle_i} \right)  \,   \md v_* \, \md I_* \, \md v \, \md I.
\end{split}
\end{equation*}	 
The last step consists in exploiting the upper bound \eqref{p-p bounds on Bij tilde} and conveniently using the indicator functions, as in the previous lemma, to get
	 \begin{align*}
	  \mathcal{Q}_k^{ij}[f,g] 
	 &\leq 2 \, \kappa_{ij}^{ub} \, \tilde{c}_k 
	\left( 
	\mfi_{2}[f] \, \mfj_{k- 2 +\gamma_{ij}}[g] +	 \mfi_{k- 2+\gamma_{ij}}[f] \,  \mfj_2[g] 	 \right)
		 \\
	& \leq 2 \, \kappa_{ij}^{ub} \, \tilde{c}_k 
	\left( 
	\mfi_{2}[f] \, \mfj_{k }[g] +	 \mfi_{k }[f] \,  \mfj_2[g] 	 \right),
\end{align*}
where the last estimate is due to monotonicity of moments \eqref{monotonicity of norm}.

\end{proof}

}
\subsection{Estimates on the moments of the vector valued collision operator}\label{Sec: estimates vector coll op}
In this section, we consider the vector valued distribution function $\F$ and the collision operator $\Q(\F)$ as introduced in \eqref{F Q vector} and provide estimates on the collision operator moments in terms of the moments of the distribution function $\F$. In other words, we switch from the bi-linear forms presented in Section \ref{Sec: bi-linear form coll op} to a vector valued form that combines all possible interactions among species $\mathcal{A}_i$ and $\mathcal{A}_j$, for $i,j=1,\dots,P$. For this Section, recall Definition \ref{def moment} of polynomial moments for vector valued  function $\F$.

\begin{lemma}\label{Lemma: coll op estimate}
	{Take the vector valued collision operator $\Q(\F)$ defined in \eqref{F Q vector} with the collision kernel  satisfying assumptions stated in Section \ref{Sec: Assump Bij}.  The following estimates on  $\mathfrak{m}_k[\Q(\F)]$ hold,}
	\begin{enumerate}
		\item {For $k>2$, assuming that each component $f_i$ of the vector valued distribution function $\F$ satisfy assumptions of Lemma \ref{Lemma bi-linear form below ks} 
			\begin{equation}\label{mk of Q below ks}
				\mathfrak{m}_k[\Q(\F)] 
				\leq   D_k \, \mathfrak{m}_{k}[\F],
			\end{equation}
		with $D_k$ explicitly given in \eqref{Dk}.
		}
		\item For $k \geq \ks$, assuming that each component $f_i$ of the vector valued distribution function $\F$ satisfy assumptions of Lemma \ref{Lemma bi-linear form} and additionally $\mfi_0[f_i]>0$	for each $i=1,\dots,P$, there exist positive constants  ${A}_\star$  and $B_k$ related to those in Lemma \ref{Lemma bi-linear form} such that
		\begin{equation}\label{mk of Q}
			\mathfrak{m}_k[\Q(\F)] 
			\leq  - 	{A}_\star \, \mathfrak{m}_2[\F]^{-\frac{\gw}{k-2}} \ \mathfrak{m}_{k}[\F]^{1+\frac{\gw}{k-2}} +  B_k,
		\end{equation}
	with $\gw$ defined in \eqref{gamma w}.  An explicit choice for $A_\star$ is given in \eqref{As} and asymptotic expression of $B_{k}$ in \eqref{asympBk}. 
	\end{enumerate}

\end{lemma}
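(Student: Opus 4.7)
The plan is to combine the weak form \eqref{weak form general} of the vector valued collision operator, evaluated with the test function family $\chi_i(\cdot)=\la\cdot\ra_i^k$, with the bi-linear estimates of Lemmas \ref{Lemma bi-linear form} and \ref{Lemma bi-linear form below ks}. A symmetrization across the pair interchange $(i,j)\mapsto(j,i)$ rewrites
\begin{equation*}
\mathfrak{m}_k[\Q(\F)] \;=\; \tfrac{1}{2}\sum_{i,j=1}^P \mathcal{Q}_k^{ij}[f_i,f_j],
\end{equation*}
with each $\mathcal{Q}_k^{ij}$ being the appropriate bi-linear form from \eqref{bi-linear coll op m-m}--\eqref{bi-linear coll op p-p}.

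For part $(1)$, I would insert the bound \eqref{bi-linear coll op estimate below ks} into every summand and apply monotonicity $\mfi_2[f_i]\le\mathfrak{m}_2[\F]$, $\mfi_k[f_i]\le\mathfrak{m}_k[\F]$ to obtain $\mathfrak{m}_k[\Q(\F)]\le D_k\,\mathfrak{m}_k[\F]$ with $D_k = \tilde c_k\,\mathfrak{m}_2[\F]\sum_{i,j=1}^P\kappa_{ij}^{ub}$, the factor $\mathfrak{m}_2[\F]$ (bounded a priori by conservation of mass and energy) being absorbed into $D_k$.

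Part $(2)$ is the heart of the matter. Applying \eqref{bi-linear coll op estimate} to each $\mathcal{Q}_k^{ij}$ produces a negative piece $-\sum_{i,j} A_\star^{ij}\mfj_0[f_j]\mfi_{k+\gamma_{ij}}[f_i]$, a positive error of order $\eps$ on top-order moments, and a collection of lower-moment remainders $B_k^{ij}[f_i,f_j]$. I would first fix $\eps$ small enough so that the $\eps$-terms are absorbed into the negative loss, using the standing positivity $\mfj_0[f_j]>0$ from the hypothesis. For every fixed $i$ the assumption \eqref{gamma ij assumpt} provides an index $j^\star(i)$ realizing $\gamma_{ij^\star(i)}=\gi$; dropping the remaining nonpositive inner-sum contributions retains a net loss term bounded above by
\begin{equation*}
-\;c_\star\sum_{i=1}^P\mfi_{k+\gi}[f_i],
\end{equation*}
where $c_\star>0$ depends on the constants $A_\star^{ij}$, on the zero moments $\mfj_0[f_j]$ and on $\eps$.

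To reach the advertised closed form I would apply the interpolation already established inside the proof of Lemma \ref{Lemma bi-linear form} in its equivalent lower-bound guise,
\begin{equation*}
\mfi_{k+\gi}[f_i]\;\ge\;\bigl(\mfi_2[f_i]\bigr)^{-\gi/(k-2)}\bigl(\mfi_k[f_i]\bigr)^{1+\gi/(k-2)}\;\ge\;\mathfrak{m}_2[\F]^{-\gi/(k-2)}\bigl(\mfi_k[f_i]\bigr)^{1+\gi/(k-2)},
\end{equation*}
and then uniformize $\gi\to\gw$ by splitting indices according to whether $\mfi_k[f_i]\gtrless 1$: on $\{\mfi_k[f_i]\ge 1\}$ the exponent can be lowered to $1+\gw/(k-2)$ at the cost of an additive constant, and a parallel case analysis on the size of $\mathfrak{m}_2[\F]$ relative to $1$ reduces $\mathfrak{m}_2[\F]^{-\gi/(k-2)}$ to $\mathfrak{m}_2[\F]^{-\gw/(k-2)}$. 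A final application of the power mean inequality $\sum_i x_i^{p_0}\ge P^{1-p_0}(\sum_i x_i)^{p_0}$ with $p_0=1+\gw/(k-2)\ge 1$ and $x_i=\mfi_k[f_i]$ yields the coercive contribution $-A_\star\,\mathfrak{m}_2[\F]^{-\gw/(k-2)}\mathfrak{m}_k[\F]^{1+\gw/(k-2)}$, while the $B_k^{ij}$-terms and the overflow from the case splits, all of which depend only on conserved lower-order moments and on $k,\eps$, are gathered into the constant $B_k$. The main obstacle is precisely this last uniformization of the species-dependent exponents $1+\gi/(k-2)$ to the single worst-case rate $1+\gw/(k-2)$ without degrading the coercive power on $\mathfrak{m}_k[\F]$; the case split on $\mfi_k[f_i]\gtrless 1$ together with monotonicity of $\mathfrak{m}_2[\F]$ is the cleanest route I see.
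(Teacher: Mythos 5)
Your skeleton matches the paper's up to and including the choice of $\eps$: you decompose $\mathfrak{m}_k[\Q(\F)]$ into the bi-linear forms via the weak form, insert \eqref{bi-linear coll op estimate}, and absorb the $\eps$-terms into the loss using the lower bound on $\mfj_0[f_j]$. The divergence, and where you lose the advertised constants, is in the closing coercive step. You interpolate at the species level, $\mfi_{k+\gi}\ge \mathfrak{m}_2[\F]^{-\gi/(k-2)}(\mfi_k[f_i])^{1+\gi/(k-2)}$, and then try to uniformize the species-dependent exponent $\gi$ to the worst one $\gw$ by a case analysis on $\mfi_k[f_i]\gtrless 1$ and on $\mathfrak{m}_2[\F]\gtrless 1$, followed by the power-mean aggregation $\sum_i x_i^{p_0}\ge P^{1-p_0}(\sum_i x_i)^{p_0}$. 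Two issues arise. First, the power-mean step degrades the coercive constant by $P^{-\gw/(k-2)}$, a $k$- and $P$-dependent factor that does not appear in \eqref{As}. Second, the $\mathfrak{m}_2$-uniformization goes the \emph{wrong} way: since $\gi\ge\gw$, for $\mathfrak{m}_2[\F]>1$ one has $\mathfrak{m}_2[\F]^{-\gi/(k-2)}\le\mathfrak{m}_2[\F]^{-\gw/(k-2)}$, so replacing $\gi$ by $\gw$ inside that base does not give a lower bound for free; you must absorb a multiplicative factor $\mathfrak{m}_2[\F]^{(\gi-\gw)/(k-2)}$ into $A_\star$, again departing from \eqref{As}. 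These are not fatal for proving \emph{an} inequality of the form \eqref{mk of Q}, but they do not reproduce the stated constants.

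The paper short-circuits all of this by uniformizing the rate \emph{before} interpolating, via moment monotonicity \eqref{monotonicity of norm}: since $\gi\ge\gw$ and the brackets are $\ge1$, one has $\mfi_{k+\gi}[f_i]\ge\mfi_{k+\gw}[f_i]$ for every $i$, hence
\begin{equation*}
\sum_{i=1}^{P}\mfi_{k+\gi}[f_i]\;\ge\;\sum_{i=1}^{P}\mfi_{k+\gw}[f_i]\;=\;\mathfrak{m}_{k+\gw}[\F]\,.
\end{equation*}
Then a \emph{single} interpolation at the vector level, using H\"older's inequality across the species sum,
\begin{equation*}
\mathfrak{m}_{k}[\F]\;\le\;\mathfrak{m}_2[\F]^{\frac{\gw}{k-2+\gw}}\,\mathfrak{m}_{k+\gw}[\F]^{\frac{k-2}{k-2+\gw}}\;\Longrightarrow\;\mathfrak{m}_{k+\gw}[\F]\;\ge\;\mathfrak{m}_2[\F]^{-\frac{\gw}{k-2}}\,\mathfrak{m}_{k}[\F]^{1+\frac{\gw}{k-2}},
\end{equation*}
yields \eqref{mk of Q} with $\As$ exactly as in \eqref{As}, with no case analysis, no power-mean loss, and no extra $\mathfrak{m}_2$-dependence. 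You should replace your final uniformization-plus-power-mean step with this monotonicity-then-H\"older argument. A minor remark on part (1): the paper's $D_k$ in \eqref{Dk} involves $\max_{ij}\kappa_{ij}^{ub}$, not $\sum_{ij}\kappa_{ij}^{ub}$; one keeps $\sum_i\mfi_k[f_i]=\mathfrak{m}_k[\F]$ as an equality rather than bounding each $\mfi_k[f_i]\le\mathfrak{m}_k[\F]$, which avoids a spurious $P$-factor.
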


\begin{proof} Starting from the definition of moments \eqref{poly moment i} for a vector valued function, the first goal is to write $\mathfrak{m}_k$-th moment of the collision operator $\Q(\F)$ in terms of bi-linear forms {\eqref{bi-linear coll op m-m}, \eqref{bi-linear coll op p-m} and \eqref{bi-linear coll op p-p}}. To that end, we use the weak form \eqref{weak form general} for the test function $\psi_i(\cdot) = \la \cdot \ra_i^k$,
\begin{equation}\label{moment of vector Q}
\mathfrak{m}_k[\Q(\F)] 
= \frac{1}{2} \sum_{i, j =1}^M \mathcal{Q}_{ij}[f_i, f_j]  +  \frac{1}{2}  \sum_{i, j=M+1}^P  \mathcal{Q}_{ij}[f_i, f_j]  
 +  \sum_{j=1}^M  \sum_{i=M+1}^P   \mathcal{Q}_{ij}[f_i, f_j].
\end{equation}
{For the first part of the statement valid for  any $k>2$, plugging the estimate \eqref{bi-linear coll op estimate below ks} 	on each  bi-linear form and rearranging the sum,
	\begin{equation*} 
				\mathfrak{m}_k[\Q(\F)] 
		\leq	2\, \tilde{c}_k \left( \max_{1 \leq i, j  \leq P} \kappa_{ij}^{ub} \right)   \mathfrak{m}_2[\F] \, \mathfrak{m}_{k}[\F],
	\end{equation*}
	which leads  to the statement  \eqref{mk of Q below ks}  with the constant
	\begin{equation}\label{Dk}
D_k = 2\, \tilde{c}_k \left( \max_{1 \leq i, j  \leq P} \kappa_{ij}^{ub} \right) \, \mathfrak{m}_2[\F].
	\end{equation}

The second part valid for $k\geq \ks$ starts with \eqref{moment of vector Q}.   Estimates   \eqref{bi-linear coll op estimate} on bi-linear forms and rearrangement of the sum imply } 
\begin{multline}
		\mathfrak{m}_k[\Q(\F)]  \leq  \sum_{i, j =1}^M \left( - 	A_\star^{ij}  \mfj_0 \, \mfi_{k+\g} + {2} \, \eps \,   \mfj_0 \, \mfi_{k+\gi}   + B^{ij}_k   \right)    \\
		 +   \sum_{i, j=M+1}^P  \left( - 	A_\star^{ij}  \mfj_0 \, \mfi_{k+\g}  + {2} \, \eps \,   \mfj_0 \, \mfi_{k+\gi}  + B^{ij}_k   \right)    \\
		 +  \sum_{j=1}^M  \sum_{i=M+1}^P \left(   	-  {A}_{\star}^{ij} \, \mfj_0 \, \mfi_{k+\gamma_{ij}}
		+ {2} \, \eps \,   \mfj_0 \, \mfi_{k+\gi} +	{B}^{ij}_k 
		-{A}_{\star}^{ji} \,  \mfi_0 \, \mfj_{k+\gamma_{ij}} + 	{2} \, \eps  \,  \mfi_0[f_i] \, \mfj_{k+\gj}[f_j] 
		+ {B}^{ji}_k \right)  \\
		 = \sum_{i, j =1}^P \left( - 	A_\star^{ij}  \mfj_0 \, \mfi_{k+\g}  + {2} \, \eps \,   \mfj_0 \, \mfi_{k+\gi}  + B^{ij}_k   \right), \label{pomocna 4}
\end{multline}
where for the simplicity of notation, we denoted $  \mfi_k := \mfi_k[f_i]$ and ${B}^{ij}_k:= {B}^{ij}_k[f_i,f_j]$, since the reference to the distribution function is clear in this case.

The first goal is to absorb the  term multiplying a small constant $\eps$ by a negative term. To that end, we use the lower bound for the negative term.  Since $A_\star^{ij}>0$ and by assumption $\mfj_0 >0$ for every $j$, we denote
\begin{equation}\label{As}
\As= \frac{1}{2} \min_{1 \leq i, j  \leq P} \left( A_\star^{ij}  \mfj_0\right) >0.
\end{equation}
On the other side,
\begin{equation*}
	\sum_{i, j =1}^P    \mfi_{k+\g} \geq \sum_{i=1}^P  \mfi_{k+\gi}, 
\end{equation*}
with $\gi$  introduced in \eqref{gamma ij assumpt}. Thus, \eqref{pomocna 4} becomes
\begin{equation*}
	\mathfrak{m}_k[\Q(\F)]  \leq 
	 - \left( 2 \As	  -  {2} \, \eps \,   \mathfrak{m}_0  \right) \sum_{i =1}^P	\, \mfi_{k+\gi}    +  \sum_{i, j =1}^P B^{ij}_k.
\end{equation*}
Therefore, the choice
\begin{equation}\label{eps}
	\eps = \frac{	\As }{2 \, \mathfrak{m}_0 }, 
\end{equation}
ensures the  negative sign of the coefficient of the highest order term.   Estimating 
\begin{equation*}
\sum_{i=1}^P  \mfi_{k+\gi}[f_i]  \geq \mathfrak{m}_{k+\gw}[\F],
\end{equation*}
with  $\gw$  from \eqref{gamma w}, we get the bound on the vector valued collision operator in terms of the polynomial moments of the vector valued distribution function,
\begin{equation*}
	\mathfrak{m}_k[\Q(\F)]  \leq 
	-  	\As \mathfrak{m}_{k+\gw}[\F]    +  \sum_{i, j =1}^P B^{ij}_k.
\end{equation*}
Finally,  the species-moment interpolation \eqref{mk interpolation} for $\mfi_k$ 
implies the moment interpolation for the $k-$th moment of the vector valued $\F$ by H\"{o}lder's inequality,
\begin{equation*}
	\mathfrak{m}_{k}[\F] = \sum_{i=1}^P 	\mfi_{k}[f_i]   \leq  \sum_{i=1}^P  (\mfi_2[f_i])^{\frac{\gw}{k-2+\gw}}   (\mfi_{k+\gw}[f_i])^{\frac{k-2}{k-2+\gw}} \leq  (	\mathfrak{m}_2[\F])^{\frac{\gw}{k-2+\gw}}    (	\mathfrak{m}_{k+\gw}[\F])^{\frac{k-2}{k-2+\gw}},
\end{equation*}
which gives the lower bound for $ \mathfrak{m}_{k + \gw}$,
\begin{equation*}
	\mathfrak{m}_{k + \gw }[\F]  \geq \mathfrak{m}_2[\F]^{-\frac{\gw}{k-2}}\, \mathfrak{m}_{k}[\F]^{1+\frac{\gw}{k-2}},
\end{equation*}
and therefore for \eqref{pomocna 4} implies
\begin{equation*}
	\mathfrak{m}_k[\Q(\F)] 
	\leq  - \As  \mathfrak{m}_2[\F]^{-\frac{\gw}{k-2}}\, \mathfrak{m}_{k}[\F]^{1+\frac{\gw}{k-2}} +  B_k, \qquad \text{with} \quad B_k :=  \sum_{i, j =1}^P B^{ij}_k,
\end{equation*}
which is exactly the desired estimate \eqref{mk of Q}.  

In addition, {constants $\tilde{K}_1^{ij}[f_i,f_j]$ and $\tilde{K}_2^{ij}[f_i,f_j]$ in Remark \ref{Remark const}  can be made explicit by plugging the choice of $\eps$ from \eqref{eps}, namely,  
\begin{equation*}
	\begin{split}
		\tilde{K}_1^{ij} & = \left( \frac{	\As }{ 2  \, \mathfrak{m}_0[\F] }\right)^{-\frac{k-2}{\gi}} \frac{\gi}{k-2+\gi} \left( \frac{{A}_\star^{ij}}{L_{ij}} \right)^{\frac{k-2+\gi}{\gi}} \mathfrak{m}^i_2[f_i] \, \mathfrak{m}^j_0[f_j]^{-\frac{k -2}{\gi}} \, \mathfrak{m}^j_2[f_j]^{\frac{k-2+\gi}{\gi}},  \\
		\tilde{K}_2^{ij} & =   \left( \frac{	\As }{ 2 \, \mathfrak{m}_0[\F] }\right)^{- \frac{1}{2+\gj-\gi} \left(  (k-2+\gi)  + \frac{(k+\gj)\gi}{k-2} \right) }  \left(\frac{2+\gj}{k+\gj}\right)^{\frac{(2+\gj)(k-2+\gi)}{(2+\gj-\gi)(k-2)}} \left( 8 \, \Cpov \, 2^{\frac k2} \right)^{\frac{(k+\gj)(k-2+\gi)  }{(k-2)(2+\gj-\gi)}}  \\ 
		& \qquad \qquad \times \mathfrak{m}^j_0[f_j] \, \mathfrak{m}^i_0[f_i]^{-\frac{k -2+\gi}{2+\gj-\gi}} \, \mathfrak{m}^i_2[f_i]^{\frac{k+\gj}{2+\gj-\gi}}.
	\end{split}
\end{equation*}	 } A simple comparison of the constants  shows that the one driving the asymptotic behaviour in $B_{k}$, as $k$ increases, is the latter.  Thus, 
\begin{equation}\label{asympBk}
B_{k} \sim \max_{ij}\tilde{K}_2^{ij}[f_i,f_j] \sim \max_{ij}\mathfrak{m}^j_0[f_j]\bigg(\frac{{16}\,  \Cpov\,\mathfrak{m}^i_2[f_i]  \mathfrak{m}_0[\F]}{\mathfrak{m}^i_0[f_i]\,\As}\bigg)^{\frac{k}{2+\gj-\gi}}  \left(\frac{2+\gj}{k}\right)^{\frac{2+\gj}{2+\gj-\gi}}\,2^{\frac{k^2}{2(2+\gj-\gi)}}\,.
\end{equation}

\end{proof}

\section{Polynomial moments a priori estimates on the solution to the Boltzmann system}\label{Sec: poly mom}

In this section, we prove a priori estimates on  polynomial moments for the  solution of the system of Boltzmann equations \eqref{BS vector}.

\begin{proposition}[Moment ODI]\label{Lemma moment ODI}
Let $\F=[f_i]_{1\leq i \leq P} \geq 0$ be a solution of the system of Boltzmann equations \eqref{BS vector} having the collision kernel  satisfying assumptions listed in Section \ref{Sec: Assump Bij}. The $k$-th polynomial moment of the system solution $\F$ satisfies the {following} ODIs,
\begin{enumerate}
	\item {For $k>2$ and $\F \in L^1_k$, 
		\begin{equation}\label{moment ODI below ks}
		\frac{\md \mathfrak{m}_k[\F]}{\md t} \leq D_k \,    \mathfrak{m}_{k}[\F],
	\end{equation}
}
	\item For $k \geq \ks$, $\ks$ from \eqref{ks povzner},  and for  $\F \in L^1_{k+\gm}$, $\gm$  from \eqref{gamma w},  with additionally  $\mfi_0[f_i]>0$	for each $i=1,\dots,P$. Then for any $\gw>0$ defined in \eqref{gamma w}, 
	\begin{equation}\label{moment ODI}
		\frac{\md \mathfrak{m}_k[\F]}{\md t} \leq - 	\As \, \mathfrak{m}_2[\F]^{-\frac{\gw}{k-2}} \    \mathfrak{m}_{k}[\F]^{1+\frac{\gw}{k-2}}  +  B_k,
	\end{equation}
\end{enumerate}
where explicit form of constants ${A}_\star>0$ and $B_k, {D_k}\geq 0$  is given in Lemma \ref{Lemma: coll op estimate}.
\end{proposition}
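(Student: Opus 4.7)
The plan is to apply the moment estimates for the vector valued collision operator from Lemma \ref{Lemma: coll op estimate} directly to the time derivative of $\mathfrak{m}_k[\F]$, after justifying the interchange of differentiation and integration.

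First I would multiply the system \eqref{BS vector} component-wise by the weight $\la \cdot \ra_i^k$ (either $\la v \ra_i^k$ for monatomic components or $\la v, I \ra_i^k$ for polyatomic components), integrate in the respective phase space variables, and sum over $i=1,\dots,P$. Formally this yields
\begin{equation*}
\frac{\md \mathfrak{m}_k[\F]}{\md t} = \mathfrak{m}_k[\Q(\F)].
\end{equation*}
The interchange of $\pa_t$ and integration is the one delicate point: it is justified by the integrability assumption $\F \in L^1_{k+\gm}$ in part (2) (respectively $\F \in L^1_k$ in part (1)), combined with the pointwise bound on $\Q(\F)$ that keeps the integrand dominated. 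For the purposes of this a priori estimate the classical dominated convergence argument suffices; the full justification is carried out inside the fixed point/ODE scheme in Section \ref{Sec: exi uni}.

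Once the identity $\frac{\md}{\md t}\mathfrak{m}_k[\F] = \mathfrak{m}_k[\Q(\F)]$ is in hand, both statements follow by directly invoking Lemma \ref{Lemma: coll op estimate}. For part (1), I would apply \eqref{mk of Q below ks} with constant $D_k$ given by \eqref{Dk}, yielding \eqref{moment ODI below ks}. For part (2), I would apply \eqref{mk of Q}, which under the hypotheses $k \geq \ks$, $\F \in L^1_{k+\gm}$, and the strict positivity $\mfi_0[f_i] > 0$ (needed to guarantee $\As > 0$ in \eqref{As}) delivers exactly \eqref{moment ODI}, with the constant $B_k$ inherited from the previous section and asymptotic form given by \eqref{asympBk}.

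The main obstacle is not in the algebraic content of the bound, which is essentially a transcription of Lemma \ref{Lemma: coll op estimate}, but in the rigorous justification that the formal identity $\frac{\md}{\md t}\mathfrak{m}_k[\F] = \mathfrak{m}_k[\Q(\F)]$ holds for a general solution in $L^1_{k+\gm}$. At the a priori level this is handled by the standard argument: truncate the weight $\la\cdot\ra_i^k$ by $\min\{\la\cdot\ra_i^k, N\}$, apply the (bona fide) weak formulations \eqref{m-m weak form}, \eqref{p-p weak form}, \eqref{weak vs v I} for the truncated test functions, use the uniform bound on $\mathfrak{m}_k[\Q(\F)]$ provided by Lemma \ref{Lemma: coll op estimate} to pass to the limit $N \to \infty$ via monotone/dominated convergence, and obtain the ODI in integrated form; differentiating in $t$ then yields the stated differential inequality.
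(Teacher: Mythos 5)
Your proposal is correct and follows essentially the same route as the paper: take the $k$-th moment of the Boltzmann system to obtain $\frac{\md}{\md t}\mathfrak{m}_k[\F] = \mathfrak{m}_k[\Q(\F)]$ and then invoke Lemma \ref{Lemma: coll op estimate}, parts (1) and (2). The paper's proof is in fact a one-line transcription of this; your additional discussion of the truncation/dominated-convergence argument justifying the interchange of $\pa_t$ and integration is a reasonable fleshing-out of a step the paper takes for granted.
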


\begin{proof}
The proof follows by taking $\mathfrak{m}_k$-th moment of the Boltzmann system \eqref{BS vector}, \begin{equation}\label{pomocna 5}
	\frac{\md \mathfrak{m}_{k}[\F] }{\md t} =  \mathfrak{m}_{k}[\Q(\F)],
\end{equation}
and applying the estimate on the collision operator polynomial moment \eqref{mk of Q}.
\end{proof}

\begin{theorem}[Generation and propagation of polynomial moments] 
		Let $\F$ satisfy assumptions of the Proposition \ref{Lemma moment ODI} , then the following estimates hold,
		\begin{enumerate}
			\item (Polynomial moments generation estimate.) Define
			\begin{equation}\label{Ek}
	\Ek   = \mathfrak{m}_2[\F]^{\frac{\gw}{k-2 + \gw}} \left( \frac{B_{k}}{\As} \right)^{\frac{k-2}{k-2+\gw}}.
			\end{equation}
			Then for any $k \geq \ks$, 
			\begin{equation}\label{poly gen}
				\mathfrak{m}_{k}[\F](t) \leq {E}_{k}   + \mathfrak{m}_2[\F] \left( \frac{k-2}{\gw \As } \right)^\frac{k-2}{\gw} t^{-\frac{k-2}{\gw}}, \qquad \forall t>0,
			\end{equation}
		whereas for $2< k < \ks$, 
	\begin{equation}\label{Ek1}
			\mathfrak{m}_k[\F](t) 	\leq \mathfrak{m}_2[\F]^\frac{\ks - k+1}{\ks-1}   \left( {E}_{\ks+1}\right)^{\frac{k-2}{\ks-1}}   + \mathfrak{m}_2[\F]^{\frac{1}{\ks-1}} \left( \frac{\ks-1}{\gw \As } \right)^\frac{k-2}{\gw} t^{-\frac{k-2}{\gw}},  \qquad \forall t>0.
		\end{equation}			
			\item (Polynomial moments propagation estimate.) Moreover, {assume} $	\mathfrak{m}_{k}[\F](0) < \infty$.  Then for any $k  \geq \ks$,  the following estimate holds
			\begin{equation}\label{poly prop}
				\mathfrak{m}_{k}[\F](t) \leq \max \left\{ 	\Ek,	\mathfrak{m}_{k}[\F](0) \right\}, \qquad \forall t\geq 0.
			\end{equation}
		{Define 
			\begin{equation}
			\tilde{E}_k =	\mathfrak{m}_2^\frac{\ks - k+1}{\ks-1}   \left( {E}_{\ks+1}\right)^{\frac{k-2}{\ks-1}}   + \mathfrak{m}_2^{\frac{1}{\ks-1}} \left( \frac{\ks-1}{\gw \As } \right)^\frac{k-2}{\gw} D_k^{\frac{k-2}{\gw}},
			\end{equation}
		with $D_k$ from \eqref{Dk}. Then for 
	 $2<k<\ks$,
		 \begin{equation}\label{prop below ks}
\mathfrak{m}_{k}[\F]{(t)} \leq \max \left\{  \tilde{E}_k, e \, \mathfrak{m}_{k}[\F](0) \right\},   \qquad \forall t{\geq }0.
		\end{equation}
	 
	}
		\end{enumerate}
\end{theorem}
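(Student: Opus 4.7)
The proof is a comparison argument applied to the ODIs from Proposition \ref{Lemma moment ODI}. A preliminary observation: the weak forms \eqref{weak form mass} and \eqref{weak form energy} make $\mathfrak{m}_2[\F]$ conserved along the Boltzmann flow, so the coefficients $\As\mathfrak{m}_2[\F]^{-\gw/(k-2)}$, $B_k$ and $D_k$ appearing in \eqref{moment ODI} and \eqref{moment ODI below ks} are genuinely time-independent, which legitimizes a scalar ODI analysis for $y(t):=\mathfrak{m}_k[\F](t)$.

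For $k\geq \ks$, write the ODI \eqref{moment ODI} as $y'\leq -Cy^{1+\delta}+B$ with $C=\As\mathfrak{m}_2[\F]^{-\gw/(k-2)}$, $B=B_k$, $\delta=\gw/(k-2)$. By construction $\Ek=(B/C)^{1/(1+\delta)}$. I propose the barrier $\psi(t)=\Ek + M\,t^{-(k-2)/\gw}$ with $M:=\mathfrak{m}_2[\F]\bigl(\tfrac{k-2}{\gw\As}\bigr)^{(k-2)/\gw}$. A direct substitution yields the identity $CM^{\delta}=(k-2)/\gw$, and since $1+\delta>1$ the super-additivity $(a+b)^{1+\delta}\geq a^{1+\delta}+b^{1+\delta}$ gives
\begin{equation*}
\psi'+C\psi^{1+\delta}\geq -\tfrac{k-2}{\gw}M\,t^{-(k-2)/\gw-1}+C\Ek^{1+\delta}+CM^{1+\delta}t^{-(k-2)/\gw-1}= B\,.
\end{equation*}
Thus $\psi$ reverses the ODI; since $\psi(t)\to\infty$ as $t\to 0^{+}$, picking any $t_0>0$ with $y(t_0)\leq\psi(t_0)$ and running the comparison principle forward yields $y(t)\leq\psi(t)$ for all $t>0$, proving \eqref{poly gen}. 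For propagation with $y(0)<\infty$, note that whenever $y\geq\Ek$ one has $Cy^{1+\delta}\geq C\Ek^{1+\delta}=B$, so $y'\leq 0$. A standard barrier/continuity argument (setting $\mathcal M=\max\{\Ek,y(0)\}$ and taking the first exit time from $[0,\mathcal M]$) rules out any crossing, giving \eqref{poly prop}.

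For $2<k<\ks$, both bounds are reduced to the previous case via the interpolation \eqref{mom interpolation} with $\lambda_1=2$, $\lambda_2=\ks+1$:
\begin{equation*}
\mathfrak{m}_k[\F]\leq \mathfrak{m}_2[\F]^{(\ks-k+1)/(\ks-1)}\,\mathfrak{m}_{\ks+1}[\F]^{(k-2)/(\ks-1)}\,.
\end{equation*}
Plugging the already-proved \eqref{poly gen} into $\mathfrak{m}_{\ks+1}[\F]$ and using the sub-additivity $(a+b)^p\leq a^p+b^p$ for $p=(k-2)/(\ks-1)\in(0,1)$ delivers \eqref{Ek1} after simplification of the $\mathfrak{m}_2[\F]$-exponents. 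For the propagation \eqref{prop below ks} the weaker ODI \eqref{moment ODI below ks} integrates (via Grönwall) to $y(t)\leq y(0)\,e^{D_k t}$, in particular $y(t)\leq e\,y(0)$ on $[0,1/D_k]$. For $t\geq 1/D_k$ the monotonicity $t^{-(k-2)/\gw}\leq D_k^{(k-2)/\gw}$ substituted into \eqref{Ek1} bounds $y(t)$ by $\tilde{E}_k$; taking the maximum of the two regimes yields \eqref{prop below ks}.

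The main technical hurdle is the verification of the supersolution inequality $\psi'+C\psi^{1+\delta}\geq B$, which demands the precise calibration $M=\mathfrak{m}_2[\F]\bigl(\tfrac{k-2}{\gw\As}\bigr)^{(k-2)/\gw}$ so that the singular term $t^{-(k-2)/\gw-1}$ in $\psi'$ is exactly absorbed by the cross-contribution of $C\psi^{1+\delta}$; the subsequent comparison against $y(t)$ requires caution because $y(0)$ may be infinite, which is why we compare on $[t_0,\infty)$ first and then let $t_0\to 0^{+}$ using $\psi(t_0)\to\infty$.
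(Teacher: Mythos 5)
Your proof is correct and follows essentially the same route as the paper: for $k\geq\ks$ the barrier $\psi(t)=\Ek+Mt^{-(k-2)/\gw}$ with $M=\mathfrak{m}_2[\F](\tfrac{k-2}{\gw\As})^{(k-2)/\gw}$ is exactly the function $z(t)$ produced by the comparison Lemma~\ref{Lemma: comparison}, and your verification via super-additivity $(a+b)^{1+\delta}\geq a^{1+\delta}+b^{1+\delta}$ and the calibration $CM^\delta=(k-2)/\gw$ is the same computation used to prove that lemma; the propagation bound and the treatment of $2<k<\ks$ (interpolation against $\mathfrak{m}_{\ks+1}$ with sub-additivity for powers in $(0,1)$, then Gr\"onwall on $[0,1/D_k]$ plus generation afterwards) match the paper line by line.

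A small worthwhile addition in your write-up, absent in the paper's, is the explicit remark that $\mathfrak{m}_2[\F]$ is conserved along the flow so that the ODI coefficients $C=\As\mathfrak{m}_2^{-\gw/(k-2)}$, $B_k$, $D_k$ are genuinely autonomous; the paper uses this implicitly but does not state it. You also handle the generation-estimate comparison carefully by starting at $t_0>0$ (where $\psi(t_0)=\infty$ dominates any finite $y(t_0)$) and sending $t_0\to0^+$, which is precisely the translation-and-limit argument hidden inside the proof of Lemma~\ref{Lemma: comparison}. One minor arithmetic note: carrying the $\mathfrak{m}_2$-exponents through the interpolation and sub-additivity gives $\mathfrak{m}_2^{(\ks-k+1)/(\ks-1)}\cdot\mathfrak{m}_2^{(k-2)/(\ks-1)}=\mathfrak{m}_2$ in the singular term, not $\mathfrak{m}_2^{1/(\ks-1)}$ as written in \eqref{Ek1}; this appears to be a typo in the paper and does not affect your argument.
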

\begin{proof}
The proof relies on the ODE comparison principle, already used in \cite{IG-Alonso-BAMS, MPC-IG-poly}. First note that ${E}_k$ is the equilibrium solution of the associated upper ODE problem
	\begin{equation}\label{aux ODE}
	\left\{ 
	\begin{split}
	y'(t) &=	- \As \, \mathfrak{m}_2^{-c} \,  y(t)^{1+c} +  B_k, \qquad  c:= \gw/(k-2), \\
	y(0) &=\mathfrak{m}_{k}(0),
	\end{split}	\right.
\end{equation}	
where we abbreviated $\mathfrak{m}_{k}:= \mathfrak{m}_{k}[\F]$.
If $y(0) =  \mathfrak{m}_{k}(0) < \infty$, then the propagation estimate \eqref{poly prop} follows, since
\begin{equation*}
	\mathfrak{m}_{k}(t)  \leq y(t) \leq \max \left\{ 	\Ek,	\mathfrak{m}_{k}(0) \right\},  \qquad \forall t\geq 0.
\end{equation*}
The generation estimate is proven by constructing the function as in \cite{MPC-IG-poly},
\begin{equation}
	z(t) = \Ek + \mathfrak{m}_2 \left( c \, A_\star \right)^{-1/c} t^{-1/c}, 
\end{equation}
and applying the comparison principle for ODEs stated in the Lemma \ref{Lemma: comparison} proved in \cite{IG-Alonso-BAMS} that yields the estimate on the solution of the moment ODI \eqref{moment ODI},
\begin{equation}\label{mk gen}
	\mathfrak{m}_{k}(t)  \leq z(t) \leq   \Ek+ \mathfrak{m}_2 \left( c \, A_\star \right)^{-1/c} t^{-1/c},  \qquad \forall t > 0,
\end{equation}
for $k>\ks$. For $2< k < \ks$, firstly the interpolation argument is used,
\begin{equation*}
\mathfrak{m}_k \leq \mathfrak{m}_2^\frac{\ks - k+1}{\ks-1} \ \mathfrak{m}_{\ks +1}^{\frac{k-2}{\ks-1}}.
\end{equation*}
Then, for $\mathfrak{m}_{\ks +1}$ we apply the inequality above \eqref{mk gen} and get, since $(k-2)/(\ks -1)<1$,
\begin{equation*}
	\begin{split}
\mathfrak{m}_k 	&\leq \mathfrak{m}_2^\frac{\ks - k+1}{\ks-1} \left( \left( {E}_{\ks+1}\right)^{\frac{k-2}{\ks-1}}   + \mathfrak{m}_2^{\frac{k-2}{\ks-1}} \left( \frac{\ks-1}{\gw \As } \right)^\frac{k-2}{\gw} t^{-\frac{k-2}{\gw}} \right).
\end{split}
\end{equation*} 
{The propagation result for $2<k<\ks$ firstly uses the ODI   \eqref{moment ODI below ks}  which for short time implies 
 \begin{equation*}
 	\mathfrak{m}_k(t) 
 	\leq  \, e \, \mathfrak{m}_{k}(0) , \quad 0 < t \leq \frac{1}{D_k}.
 \end{equation*}
Then, for $t>\frac{1}{D_k}$,  use generation of moments \eqref{Ek1} that yields
\begin{equation*}
		\mathfrak{m}_k(t) 	\leq \mathfrak{m}_2^\frac{\ks - k+1}{\ks-1}   \left( {E}_{\ks+1}\right)^{\frac{k-2}{\ks-1}}   + \mathfrak{m}_2^{\frac{1}{\ks-1}} \left( \frac{\ks-1}{\gw \As } \right)^\frac{k-2}{\gw} D_k^{\frac{k-2}{\gw}}.
\end{equation*}
Taking the maximum of the two constants }
  completes the proof.
\end{proof}

\section{Existence and Uniqueness theory}\label{Sec: exi uni}
In this section, we prove the existence and uniqueness theory for the Cauchy problem \eqref{BS vector}. To that end, we first define the set of initial data.  \\

For $i=1,\dots, P$, fix constants $C^i_0, C_2, \Cs >0$, with 
\begin{equation}\label{ks and C's for Omega}
	\Cs \geq \Eks + B_{\kss} =: \mathfrak{h}_{\kss}, \quad \text{with} \ 	\kss=\max\{2+2\gm, \ks\},
\end{equation}
where $\gm$ is from \eqref{gamma w} and $\ks$ is introduced in \eqref{ks povzner}. 
Then define the set $\Omega  \subseteq L^1_2$
\begin{equation}\label{Omega}
	\Omega = \left\{ \F \in L^1_2: \ \F \geq 0, \  \mathfrak{m}^i_0[\F] = C^i_0, \ \mathfrak{m}_2[\F] = C_2, \
\mathfrak{m}_{\kss}[\F] \leq  \Cs  \right\}.
\end{equation}
The following theorem holds.
\begin{theorem}\label{cauchy-1}
	Let the collision kernel  satisfy assumptions stated in Section \ref{Sec: Assump Bij} with $\gamma_{ij}$ satisfying \eqref{gamma ij assumpt}--\eqref{gamma w}. Assume that $\F_0 \in \Omega$. Then the Cauchy problem \eqref{BS vector} has a unique solution in $\mathcal{C}([0,\infty), \Omega)  \cap \mathcal{C}^1((0,\infty), L^1_2)$.
\end{theorem}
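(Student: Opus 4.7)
The plan is to recast the Cauchy problem \eqref{BS vector} as an abstract ODE in the vector valued Banach space $L^1_2$ and invoke a Bressan--Martin type existence theorem (along the lines of \cite{Bressan, Alonso-cooling, IG-Alonso-BAMS} as applied in \cite{MPC-IG-ARMA, Erica-pspde, MPC-IG-poly}). The natural invariant set is $\Omega$, and the strategy reduces to verifying three hypotheses on $\Q$ restricted to $\Omega$: (a) $\Q:\Omega\to L^1_2$ is continuous with a quantitative Lipschitz bound; (b) a sub-tangent (Nagumo-type) condition at the boundary of $\Omega$, guaranteeing that the evolution does not exit $\Omega$; and (c) a one-sided Lipschitz estimate producing uniqueness and continuous dependence on the initial data.

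The invariance of $\Omega$ rests on three ingredients. First, the weak-form conservation laws \eqref{weak form mass} and \eqref{weak form energy} pin each species mass $\mathfrak{m}^i_0$ and the total second moment $\mathfrak{m}_2$ exactly along the flow. Second, the propagation estimate \eqref{poly prop} applied with $k=\kss$ gives
\[
\mathfrak{m}_{\kss}[\F](t)\le \max\{\Eks+B_{\kss},\,\mathfrak{m}_{\kss}[\F_0]\}\le \Cs,
\]
by the standing choice \eqref{ks and C's for Omega} of $\Cs$ and the hypothesis $\F_0\in\Omega$. Third, componentwise nonnegativity of $\F(t)$ follows from the standard sub-tangent argument to the positive cone: writing $\Q(\F)=\Q^+(\F)-\nu_\F\,\F$, with $\Q^+$ and the multiplicative collision frequency $\nu_\F$ nonnegative by the structural assumptions of Section \ref{Sec: Assump Bij}, the discrete Euler step $\F\mapsto \F+h\Q(\F)=(1-h\nu_\F)\F+h\Q^+(\F)$ is nonnegative for $h$ small enough (provided $\nu_\F$ is first localised; see the remark on the main obstacle below).

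The continuity and one-sided Lipschitz properties are extracted from the bi-linear estimates of Section \ref{Sec: Est coll op}. Using the bi-linearity $\Q(\F)-\Q(\G)=\Q(\F-\G,\F)+\Q(\G,\F-\G)$ together with the bounds of Lemma \ref{Lemma bi-linear form below ks}, Lemma \ref{Lemma: coll op estimate}, and the uniform moment control available on $\Omega$, I expect a Lipschitz inequality of the form
\[
\|\Q(\F)-\Q(\G)\|_{L^1_2}\le L(\Omega)\,\|\F-\G\|_{L^1_2},\qquad \F,\G\in\Omega,
\]
with $L(\Omega)$ depending only on $C_0^i,\,C_2,\,\Cs$; this is stronger than the H\"older-continuity hypothesis required by the abstract theorem and, combined with a Gr\"onwall argument on $\|\F(t)-\G(t)\|_{L^1_2}$, yields uniqueness and the one-sided Lipschitz hypothesis simultaneously.

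Assembling the three ingredients, the abstract theorem produces a unique mild solution $\F\in \mathcal{C}([0,T],\Omega)$ on some interval $[0,T]$, and since the a priori moment estimates on $\Omega$ are uniform in time the solution extends globally to $[0,\infty)$. Membership in $\mathcal{C}^1((0,\infty),L^1_2)$ then follows from the continuity of $t\mapsto \Q(\F(t))$, upgrading the mild solution to a classical one via the generation estimate \eqref{poly gen} which supplies arbitrarily high polynomial moments for $t>0$. The main obstacle I anticipate is the rigorous implementation of the nonnegativity sub-tangent condition in the presence of the hard-potential collision frequency $\nu_\F$, which is unbounded at infinity due to the rates $\gamma_{ij}\in[0,2]$; the standard remedy, used in \cite{MPC-IG-ARMA, MPC-IG-poly}, is to truncate the collisional kernel by a large-velocity cut-off, construct the truncated solution by a Picard iteration in the positive cone of $L^1_2$, and pass to the limit via the uniform-in-truncation moment bounds of Section \ref{Sec: poly mom}.
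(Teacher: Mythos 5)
Your overall roadmap coincides with the paper's: recast \eqref{BS vector} as an abstract ODE on $L^1_2$, take $\Omega$ as the invariant set, and verify the three hypotheses (continuity, sub-tangency, one-sided Lipschitz) of the Bressan--Martin type theorem. Your identification of the moment--ODI machinery, the conservation laws, and the need to localise the collision frequency is also correct.

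The genuine gap is your continuity claim. You assert that the ``uniform moment control available on $\Omega$'' yields a full Lipschitz bound $\left\| \Q(\F)-\Q(\G)\right\|_{L^1_2}\le L(\Omega)\left\|\F-\G\right\|_{L^1_2}$, and you observe that this would be stronger than the H\"older hypothesis of the abstract theorem. This inequality is false on $\Omega$ for hard potentials. The bilinear estimates give, schematically, $\left\| \Q(\F)-\Q(\G)\right\|_{L^1_2}\lesssim \left\|\mathbb H\right\|_{L^1_{2+\gm}}\left\|\mathbb L\right\|_{L^1_{2+\gm}}$ with $\mathbb H=\F-\G$, $\mathbb L=\F+\G$. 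The set $\Omega$ controls moments of $\F$ and $\G$ separately, hence of $\mathbb L$, but it says nothing pointwise about the ratio $\left\|\mathbb H\right\|_{L^1_{2+\gm}} \big/ \left\|\mathbb H\right\|_{L^1_2}$, which is unbounded (mass of $\mathbb H$ escaping to infinity makes the numerator $O(1)$ while the denominator vanishes). The correct step is the interpolation $\left\|\mathbb H\right\|_{L^1_{2+\gm}}\le \left\|\mathbb H\right\|_{L^1_{2+2\gm}}^{1/2}\left\|\mathbb H\right\|_{L^1_2}^{1/2}$ combined with $|h_i|\le l_i$ (so $\left\|\mathbb H\right\|_{L^1_{2+2\gm}}\le\left\|\mathbb L\right\|_{L^1_{2+2\gm}}\le 2C_\star$ once $\kss\ge 2+2\gm$); this yields a H\"older-$1/2$ estimate, not Lipschitz, which is exactly what the abstract theorem requires and what the paper proves as condition \eqref{Holder}. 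The full Lipschitz estimate \emph{does} hold for the one-sided bracket $[\F-\G,\Q(\F)-\Q(\G)]$, because the sign structure cancels the problematic gain contribution of highest order, but you should not conflate the two. A secondary, milder divergence: for the sub-tangent condition the paper does not truncate the collision kernel nor run a Picard iteration; it truncates the distribution function at a large-velocity cutoff to form $\F_\rho$, constructs the explicit witness $\mathbb W_\rho=\F+h\Q(\F_\rho)\in\Omega$, and checks $\left\|\F+h\Q(\F)-\mathbb W_\rho\right\|_{L^1_2}\le h\eps$ precisely via the H\"older bound. Your Picard-plus-limit variant is workable in principle, but note that it, too, would ultimately rely on the H\"older (not Lipschitz) continuity when matching the truncated flow to the original one, so the gap above must be fixed either way.
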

{Our final result uses Theorem \ref{cauchy-1} to find solutions in the bigger space
\begin{equation}\label{Omega tilde}
	\Omega\subset\tilde\Omega = \left\{ \F \in L^1_2: \F\geq 0\,, 0< \mathfrak{m}^i_0[\F]  <\infty,  \,\mathfrak{m}_{(2 + \gm - \gw)^+}[\F] <\infty \right\}\subset L^1_2.
\end{equation}

\begin{theorem}\label{Th 2}
	Let the collision kernel  satisfy assumptions stated in Section \ref{Sec: Assump Bij} with $\gamma_{ij}$ satisfying \eqref{gamma ij assumpt}--\eqref{gamma w}. Assume that $\F_0 \in \tilde\Omega$. Then the Cauchy problem \eqref{BS vector} has a unique solution in $\mathcal{C}([0,\infty), \tilde\Omega)  \cap \mathcal{C}^1((0,\infty), L^1_2)$.
\end{theorem}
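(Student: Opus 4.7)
The plan is to establish the result by approximation from Theorem \ref{cauchy-1}, leveraging the uniform moment estimates of Section \ref{Sec: poly mom}. Given $\F_0\in\tilde\Omega$, I would build a sequence $\F_0^n$ by truncating each component in its natural bracket variable, e.g.\ $f_{i,0}^n := f_{i,0}\,\mathds{1}_{\la v\ra_i\leq n}$ for $i\leq M$ and $f_{i,0}^n := f_{i,0}\,\mathds{1}_{\la v,I\ra_i\leq n}$ for $i>M$. Each $\F_0^n$ has all moments finite, so it lies in a set of the form \eqref{Omega} with $n$-dependent constants $C_0^{i,n}$, $C_2^n$, $\Cs^n$. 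By monotone convergence, $\mathfrak{m}_0^i[\F_0^n]\to\mathfrak{m}_0^i[\F_0]>0$, $\mathfrak{m}_2[\F_0^n]\to\mathfrak{m}_2[\F_0]$, and $\mathfrak{m}_{(2+\gm-\gw)^+}[\F_0^n]\to\mathfrak{m}_{(2+\gm-\gw)^+}[\F_0]$, so $C_0^{i,n}$ and $C_2^n$ are uniformly bounded and uniformly positive, even though the $\Cs^n$ will generally blow up.

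Theorem \ref{cauchy-1} produces $\F^n\in\mathcal{C}([0,\infty),\Omega)\cap\mathcal{C}^1((0,\infty),L^1_2)$. Since $\As$ in \eqref{As} and the generation/propagation constants of Section \ref{Sec: poly mom} depend on $\F^n$ only through the conserved quantities $\mathfrak{m}_0^i[\F^n]$ and $\mathfrak{m}_2[\F^n]$, the bounds \eqref{poly gen} and \eqref{prop below ks} hold uniformly in $n$. In particular, for every $0<\tau<T$ and every $k\geq\ks$ one has $\sup_{n}\sup_{t\in[\tau,T]}\mathfrak{m}_k[\F^n](t)<\infty$, while propagation at the threshold moment $k=(2+\gm-\gw)^+$ gives $\sup_{n}\sup_{t\in[0,T]}\mathfrak{m}_{(2+\gm-\gw)^+}[\F^n](t)<\infty$. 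Combined with equicontinuity of $\F^n$ in $L^1_2$ inherited from $\partial_t\F^n=\Q(\F^n)$ via Lemma \ref{Lemma: coll op estimate}, a diagonal Ascoli--Arzelà argument extracts a subsequence converging in $\mathcal{C}([0,T],L^1_2)$ to some $\F\in\mathcal{C}([0,\infty),\tilde\Omega)$. The uniform higher-moment bounds then allow the passage to the limit in the integral form $\F^n(t)=\F_0^n+\int_0^t \Q(\F^n(s))\,\md s$, identifying $\F$ as a solution; the regularity $\mathcal{C}^1((0,\infty),L^1_2)$ follows from the equation and the continuity of $\Q$ on sets with controlled higher moments.

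For uniqueness, given two solutions $\F,\G\in\mathcal{C}([0,\infty),\tilde\Omega)$ with $\F(0)=\G(0)$, I would exploit the bi-linearity $\Q(\F)-\Q(\G)=\Q(\F-\G,\F)+\Q(\G,\F-\G)$ together with the coerciveness of the loss operator (Lemma \ref{Lemma bounds on Bij tilde}) and the uniform moment bounds on $\F,\G$ just established, to derive a Gronwall inequality of the form $\frac{\md}{\md t}\|\F-\G\|_{L^1_2}\leq C(t)\,\|\F-\G\|_{L^1_2}$ with $C\in L^1_{\mathrm{loc}}([0,\infty))$, which forces $\F\equiv\G$. The main obstacle is controlling the nonlinear collision operator \emph{across} the singular time $t=0$: the gain term naturally demands moments of order $2+\gm$, which are not available from the initial datum alone. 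The threshold $(2+\gm-\gw)^+$ is precisely what allows the moment ODI \eqref{moment ODI}, via the interpolation implicit in \eqref{gamma w}, to close at that level and propagate a finite higher moment uniformly in $n$ on $[0,T]$, thereby bridging the gap between the generation of all moments for $t>0$ and the absence of any higher-moment initial control.
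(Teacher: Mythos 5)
Your overall architecture — approximate $\F_0\in\tilde\Omega$ by truncated data in $\Omega$, invoke Theorem \ref{cauchy-1}, control the approximating solutions uniformly through the moment generation/propagation estimates, and close via a Gronwall inequality whose constant is made locally integrable in time by interpolating the $(2+\gm)$-moment between the propagated $(2+\gm-\gw)^+$-moment and a generated higher moment — matches the paper's strategy at the conceptual level. However, the existence step as you wrote it has a genuine gap. You pass to the limit by a ``diagonal Ascoli--Arzel\`a'' extraction, claiming compactness from uniform moment bounds plus equicontinuity in time. In a Banach-space target, Ascoli--Arzel\`a requires that the pointwise family $\{\F^n(t)\}_{n}$ be relatively compact in $L^1_2$ for each $t$. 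Uniform moment bounds give tightness and weak $L^1$ relative compactness via Dunford--Pettis, but not strong $L^1$ relative compactness; without strong convergence you cannot pass to the limit in the nonlinear gain term $Q^+$. The paper avoids this entirely: it never extracts a subsequence, but instead shows directly that $\{\F^j\}$ is Cauchy in $\mathcal{C}([0,T],L^1_2)$ by running the one-sided Lipschitz/Gronwall estimate on $\|\F^j-\F^l\|_{L^1_2}$, exactly the inequality you reserve for the \emph{uniqueness} step. Observe that your uniqueness estimate, if correctly established with a locally integrable $C(t)$, automatically yields the Cauchy property of the approximating sequence (apply it to $\F^j$ and $\F^l$ with different initial data), so the separate and flawed compactness argument is unnecessary.

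The second, smaller gap is that the crucial quantitative point is asserted but not verified: you need the Gronwall coefficient $C(t)$ to be $L^1_{\mathrm{loc}}$ near $t=0$ despite the blow-up in the generated higher moments. The paper's proof makes this precise by interpolating
$\|\F^j+\F^l\|_{L^1_{2+\gm}}\leq \|\F^j+\F^l\|^{1-\theta}_{L^1_{2+k_1}}\|\F^j+\F^l\|^{\theta}_{L^1_{2+\gm+k_2}}$
with the specific choice $k_1=\gm-\gw+\varepsilon$, $k_2=\frac{\gw(\gm-\gw)+\varepsilon^2}{\varepsilon}$, $\theta=\frac{\varepsilon(\gw-\varepsilon)}{\gw(\gm-\gw+\varepsilon)}$, which makes the time singularity of order $t^{-\frac{\gw^2-\varepsilon^2}{\gw^2}}$, i.e. integrable. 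You correctly identify that the threshold $(2+\gm-\gw)^+$ is ``precisely what allows the moment ODI to close'', but without exhibiting these exponents the integrability claim remains a hope rather than a proof. To repair your argument, drop the Ascoli route and instead carry out the Cauchy-sequence estimate in full, including the explicit interpolation computation.
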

}
\begin{proof}[Proof of Theorem \ref{cauchy-1}]
	The goal is apply Theorem \ref{Theorem general} from the general ODE theory. For the collision operator $\Q$ understood  as  a mapping $\Q: \Omega \rightarrow L_{2}^1$, we will prove the following conditions for $\F, \G \in \Omega$,
\begin{enumerate}
	\item H\"{o}lder continuity condition
	\begin{equation}\label{Holder}
		\left\| \Q(\F) - \Q(\G) \right\|_{L^1_{2}} \leq C_H   \left\| \F-\G \right\|_{L^1_{2}}^{1/2},
	\end{equation}
	with the constant $C_H  = 6 \, \Cs^{3/2} \max_{1 \leq i, j  \leq P}(\kappa_{ij}^{ub})$,
	\item Sub-tangent condition
	\begin{equation*}
		\lim\limits_{h\rightarrow 0+} \frac{\text{dist}\left(\F + h \Q(\F), \Omega \right)}{h} =0, \quad \text{with} \quad 	\text{dist}\left(\G,\Omega\right)=\inf_{\omega \in \Omega} \left\| \G-\omega \right\|_{L_{2}^1}.
	\end{equation*}
	\item One-sided Lipschitz condition
	\begin{equation}\label{Lipschitz}
		\left[  \F- \G, \Q(\F) - \Q(\G) \right] \leq C_L  \left\| \F-\G \right\|_{L^1_{2}},
	\end{equation}
	with the constant $C_L = { 2} \, \Cs \,  \max_{1 \leq i, j  \leq P}(\kappa_{ij}^{ub})$, where  brackets are defined in \eqref{Lipschitz bra}.
\end{enumerate}
	
	\subsubsection*{H\"{o}lder continuity condition} 
	Firstly, we introduce some notation, namely,
	\begin{equation}\label{H L}
		\mathbb{H} =  \F-\G, \quad \mathbb{L}=  \F + \G.  
	\end{equation}
	Then, we notice that the bi-linear structure of all collision operators allows to write
	\begin{equation}\label{bi-linear}
		Q_{ij}(f_i, f_j) - Q_{ij}(g_i, g_j) =\frac{1}{2} \left(  Q_{ij}(h_i, l_j)  + Q_{ij}(l_i, h_j) \right), 
	\end{equation}
	for any possible combination of $i, j \in \left\{ 1, \dots, P \right\}$, and with the notation \eqref{H L}, \eqref{F Q vector}. Therefore, the left-hand side of \eqref{Holder} becomes 
	\begin{align}
	\mathcal{I}_H :=	\left\| \Q(\F) - \Q(\G) \right\|_{L^1_2} & \leq \sum_{i, j =1}^P \left\| Q_{ij}(f_i, f_j) - Q_{ij}(g_i, g_j)  \right\|_{L^1_{2,i}} \nonumber \\
		& \leq \frac{1}{2}  \sum_{i, j =1}^P  \left(    \left\| Q_{ij}(h_i, l_j)  \right\|_{L^1_{2,i}} +  \left\| Q_{ij}(l_i, h_j) \right\|_{L^1_{2,i}}  \right). \label{lhs holder}
	\end{align}
	For each combination of $i$ and $j$ we proceed separately. 
	\subsubsection*{Case (i): mono-mono interactions}  Let $i, j\in \left\{1, \dots, M\right\}$ and take generic real-valued functions $f$ and $g$. The norm of the collision operator \eqref{m-m Q} is estimated as follows
	\begin{multline}\label{pomocna 7}
		\left\| Q_{ij}(f, g)  \right\|_{L^1_{2,i}} \leq \int_{(\bR)^2} \int_{\bS} \left( |f(v')| |g(v'_*)|  +|f(v)| |g(v_*)|  \right) \la v \ra_i^2 \, \mathcal{B}_{ij} \, \md \sigma \, \md v_* \, \md v
		=: T_1 + T_2.
	\end{multline}
	For the first term $T_1$, coming from the gain part, for the weight we use the estimate \eqref{m-m upp bounds} and for the collision kernel  $  \mathcal{B}_{ij}:=\mathcal{B}_{ij}(v, v_*, \sigma)$ we use the microreversibility property \eqref{m-m cross} together with the assumption on its form \eqref{m-m ass B}, \eqref{m-m ass B tilde} and again \eqref{m-m upp bounds},
	\begin{equation*}
		\la v \ra_i^2 \, \mathcal{B}_{ij} \leq 	\la v' \ra_i^2 	\la v'_* \ra_j^2   \, \mathcal{B}'_{ij} \leq \la v' \ra_i^{2+\gamma_{ij}} 	\la v'_* \ra_j^{2+\gamma_{ij}}   \,  b_{ij}(\hat{u}'\cdot \sigma').
	\end{equation*}
	It remains to  interchange pre-post variables and obtain
	\begin{multline}\label{pomocna 8}
	T_1 \leq \int_{(\bR)^2} \int_{\bS}   |f(v)|   \la v \ra_i^{2+\gamma_{ij}}  |g(v_*)|	\la v_* \ra_j^{2+\gamma_{ij}}   \,  b_{ij}(\hat{u}\cdot \sigma)  \, \md \sigma \, \md v_* \, \md v 
	\\	= \nb \left\| f \right\|_{L^1_{2+ \gamma_{ij},i}} \left\| g \right\|_{L^1_{2+ \gamma_{ij},j}}.
	\end{multline}
	For the second term $T_2$ in \eqref{pomocna 7} coming from the loss term, we only make use of the collision kernel form \eqref{m-m ass B}, \eqref{m-m ass B tilde} together with the estimate \eqref{m-m upp bounds}, which together lead to 
	\begin{equation*}
		\mathcal{B}_{ij} \leq  \la v \ra_i^{\gamma_{ij}}  \la v_* \ra_j^{\gamma_{ij}}   b_{ij}(\hat{u}\cdot \sigma),   
	\end{equation*}
	and 
	\begin{equation}\label{pomocna 9}
		T_2 \leq \nb \left\| f \right\|_{L^1_{2+ \gamma_{ij},i}} \left\| g \right\|_{L^1_{\gamma_{ij},j}}.
	\end{equation}
	Therefore, gathering \eqref{pomocna 8} and \eqref{pomocna 9}, and using monotonicity of norms in \eqref{pomocna 9}, \eqref{pomocna 7} becomes
	\begin{equation*}
		\left\| Q_{ij}(f, g)  \right\|_{L^1_{2,i}} \leq 2  \nb \left\| f \right\|_{L^1_{2+ \gamma_{ij},i}} \left\| g \right\|_{L^1_{2+ \gamma_{ij},j}}.
	\end{equation*}
	For one part of the sum in \eqref{lhs holder}, this implies
	\begin{multline}\label{norm m-m}
		\frac{1}{2}  \sum_{i, j =1}^M  \left(    \left\| Q_{ij}(h_i, l_j)  \right\|_{L^1_{2,i}} +  \left\| Q_{ij}(l_i, h_j) \right\|_{L^1_{2,i}}  \right) 
		\\
		\leq   \sum_{i, j =1}^M   \kappa_{ij}^{ub} \left( \left\| h_i \right\|_{L^1_{2+ \gamma_{ij},i}} \left\| l_j \right\|_{L^1_{2+ \gamma_{ij},j}} + \left\| l_i \right\|_{L^1_{2+ \gamma_{ij},i}} \left\| h_j \right\|_{L^1_{2+ \gamma_{ij},j}} \right),
	\end{multline}
	in the light of \eqref{kappa norm b}. 
	\subsubsection*{Case (ii):  poly-mono \& mono-poly interactions} From the sum in \eqref{lhs holder}, we split the terms involving different types of indices, 
	\begin{multline}\label{pomocna 14}
		\frac{1}{2}  \sum_{i =1}^M \sum_{j =M+1}^P \left(    \left\| Q_{ij}(h_i, l_j)  \right\|_{L^1_{2,i}} 
		+  \left\| Q_{ij}(l_i, h_j) \right\|_{L^1_{2,i}}  \right) 
		\\
		+ 	\frac{1}{2}  \sum_{i =M+1}^P \sum_{j =1}^M  \left(    \left\| Q_{ij}(h_i, l_j)  \right\|_{L^1_{2,i}} +  \left\| Q_{ij}(l_i, h_j) \right\|_{L^1_{2,i}}  \right) 
		\\
		=	\frac{1}{2}   \sum_{i =M+1}^P \sum_{j =1}^M   \left(    \left\| Q_{ji}(h_j, l_i)  \right\|_{L^1_{2,j}} +  \left\| Q_{ji}(l_j, h_i) \right\|_{L^1_{2,j}}  + \left\| Q_{ij}(h_i, l_j)  \right\|_{L^1_{2,i}} +  \left\| Q_{ij}(l_i, h_j) \right\|_{L^1_{2,i}}  \right) 	
	\end{multline}
	For the first two terms, using definition of the collision operator \eqref{m-p coll operator} describing mono-poly interaction, for some $f=f(t,v,I)$ and $g=g(t,v)$, its $L^1_{2,j}$ norm can be estimates as follows,
	\begin{multline*}
		\left\| 	Q_{ji}(g,f)\right\|_{L^1_{2,j}}  \leq  \int_{ \bR} \int_{\bRfp} \int_{\bS \times [0,1]} \left\{ |g(w')| |f(w'_*,I'_*)| \left(\frac{I_*}{I'_*}\right)^{\alpha_i} +  |g(v)| |f(v_*,I_*)| \right\}
		\\ \times   \la v\ra^2_j \, \mathcal{B}_{ji}(v,v_*,I_*, \sigma,R) \, \dm \, \md \sigma \, \md R \, \md v_* \, \md I_* \, \md v.
	\end{multline*}
	Interchanging the collision reference using the transformation \eqref{interchange of particles} and exploring micro-reversibility of the collision kernel  \eqref{m-p micro rev},
	\begin{multline}\label{pomocna 12}
		\left\| 	Q_{ji}(g,f)\right\|_{L^1_{2,j}}  \leq  \int_{ \bR} \int_{\bRfp} \int_{\bS \times [0,1]} \left\{ |g(v'_*)| |f(v',I')| \left(\frac{I}{I'}\right)^{\alpha_i} +  |g(v_*)| |f(v, I)| \right\}
		\\ \times   \la v_* \ra^2_j \, \mathcal{B}_{ij}(v,v_*,I, \sigma,R) \, \dm \, \md \sigma \, \md R \, \md v_* \, \md I \, \md v =: T_3 + T_4.
	\end{multline}
	For the first term $T_3$, we additionally interchange pre- and post-quantities with transformation $\mathcal{T}_{pm}$ and using Lemma \ref{Lemma mono-poly invariance},  after the use of  estimates \eqref{p-m ass B}, \eqref{p-m ass B tilde}  on the collision kernel  $\mathcal{B}_{ij}:=\mathcal{B}_{ij}(v, v_*, I, \sigma, R)$ and on the weight \eqref{p-m upp bounds} that yield
	\begin{equation*}
		\la v_* \ra_j^2 \, 	\mathcal{B}_{ij}   \leq    \la v', I' \ra_i^{2} \, \la v'_* \ra_j^{2}    \	\mathcal{B}'_{ij}
		\\ \leq  \la v', I' \ra_i^{2+\gamma_{ij}} \,  \la v'_* \ra_j^{2+\gamma_{ij}}  \,  b_{ij}(\hat{u}'\cdot \sigma')  \,  \tilde{b}_{ij}^{ub}(R').
	\end{equation*}
	Therefore,
	\begin{multline}\label{pomocna 10}
T_3 
		\leq \int_{ \bR} \int_{\bRfp} \int_{\bS \times [0,1]}  |g(v_*)|  \la v_* \ra_j^{2+\gamma_{ij}}  |f(v, I)|  \la v, I \ra_i^{2+\gamma_{ij}} 
		\\ \times   b_{ij}(\hat{u}\cdot \sigma)  \,  \tilde{b}_{ij}^{ub}(R)   \, \dm \, \md \sigma \, \md R \, \md v_* \, \md I \, \md v = \kappa_{ij}^{ub} \left\| g \right\|_{L^1_{2+\gamma_{ij},j}}  \left\| f \right\|_{L^1_{2+\gamma_{ij}, i}}. 
	\end{multline}
	For the second term $T_4$, we only use the estimate on $\mathcal{B}_{ij}$ from \eqref{p-m ass B}, \eqref{p-m ass B tilde} and  \eqref{p-m upp bounds},
	\begin{equation*}
		\mathcal{B}_{ij} \leq  \la v, I \ra_i^{\gamma_{ij}} \,  \la v_* \ra_j^{\gamma_{ij}}  \, b_{ij}(\hat{u}\cdot \sigma) \, \tilde{b}_{ij}^{ub}(R),
	\end{equation*}
	implying
	\begin{multline}\label{pomocna 11}
T_4		\leq	   \int_{ \bR} \int_{\bRfp} \int_{\bS \times [0,1]}   |g(v_*)| \,   \la v_* \ra^{2+\gamma_{ij}}_j  \, |f(v, I)|  
		\,  \la v, I \ra_i^{\gamma_{ij}} 
		\\ \times b_{ij}(\hat{u}\cdot \sigma) \, \tilde{b}_{ij}^{ub}(R) \, \dm \, \md \sigma \, \md R \, \md v_* \, \md I \, \md v = \kappa_{ij}^{ub} \left\| g \right\|_{L^1_{2+\gamma_{ij},j}}  \left\| f \right\|_{L^1_{\gamma_{ij}, i}}. 
	\end{multline}
	Therefore, \eqref{pomocna 10} and \eqref{pomocna 11} imply for \eqref{pomocna 12}, after the use of  norm monotonicity \eqref{monotonicity of norm},
	\begin{equation}\label{pomocna 13}
		\left\| 	Q_{ji}(g,f)\right\|_{L^1_{2,j}}  \leq  2 \, \kappa_{ij}^{ub} \left\| g \right\|_{L^1_{2+\gamma_{ij},j}}  \left\| f \right\|_{L^1_{2+\gamma_{ij}, i}}. 
	\end{equation}
	
	On the other side, for the last two terms of \eqref{pomocna 14}, we use the collision operator \eqref{p-m coll operator} that describes poly-mono interaction, and estimate its $L^1_{2,i}$ norm for some $f=f(t,v,I)$ and $g=g(t,v)$,
	\begin{multline*}
		\left\| Q_{ij}(f, g)  \right\|_{L^1_{2,i}} \leq  \int_{ \bRfp} \int_{\bR} \int_{\bS \times [0,1]} \left\{ |f(v',I')| |g(v'_*)| \left(\frac{I}{I'}\right)^{\alpha_i}  +  |f(v,I)| |g(v_*)| \right\}
		\\ \times   \la v, I \ra^2_i \,\mathcal{B}_{ij}  \, \dm \, \md \sigma \, \md R \, \md v_*\, \md v \,  \md I.
	\end{multline*}
	Incorporating the same arguments as for the counterpart  term \eqref{pomocna 12} since the same collision kernel  is used with assumptions \eqref{p-m ass B}, \eqref{p-m ass B tilde} and the same upper bounds apply to the weight $\la v_* \ra_j$ in that context, and to the weight $\la v, I \ra_i$ in the present one, by  virtue of \eqref{p-m upp bounds}, yields
	\begin{equation}\label{pomocna 15}
		\left\| Q_{ij}(f, g)  \right\|_{L^1_{2,i}} \leq  2 \, \kappa_{ij}^{ub}  \left\| f \right\|_{L^1_{2+\gamma_{ij}, i}} \, \left\| g \right\|_{L^1_{2+\gamma_{ij},j}}.
	\end{equation}
	Finally, \eqref{pomocna 13} and \eqref{pomocna 15} allow to conclude for  \eqref{pomocna 14}, 
	\begin{multline}\label{norm p-m}
		\frac{1}{2}   \sum_{i =M+1}^P \sum_{j =1}^M   \left(    \left\| Q_{ji}(h_j, l_i)  \right\|_{L^1_{2,j}} +  \left\| Q_{ji}(l_j, h_i) \right\|_{L^1_{2,j}}  + \left\| Q_{ij}(h_i, l_j)  \right\|_{L^1_{2,i}} +  \left\| Q_{ij}(l_i, h_j) \right\|_{L^1_{2,i}}  \right) 	
		\\
		\leq  2  \sum_{i =M+1}^P \sum_{j =1}^M \kappa_{ij}^{ub}  \left(      \left\| h_i \right\|_{L^1_{2+\gamma_{ij}, i}} \, \left\| l_j \right\|_{L^1_{2+\gamma_{ij},j}}  +     \left\| l_i \right\|_{L^1_{2+\gamma_{ij}, i}} \, \left\| h_j \right\|_{L^1_{2+\gamma_{ij},j}}    \right),
	\end{multline}
with $\kappa_{ij}$ from \eqref{kappa p-m}.	
	\subsubsection*{Case (iii): poly-poly interactions} For interactions involving only polyatomic molecules, $i, j \in \left\{ M+1, \dots, P \right\}$ the $L^1_{2,i}$ norm of the corresponding collision operator \eqref{p-p coll operator} for some real-valued functions $f, g$ can be estimated as follows
\begin{multline*}
	\left\| Q_{ij}(f, g)  \right\|_{L^1_{2,i}}  \leq \int_{(\bRfp)^2} \int_{\bS \times [0,1]^2} \Big\{ |f(v',I')| |g(v'_*,I'_*)| \left(\frac{I }{I' }\right)^{\!\!\alpha_i } \! \left(\frac{ I_*}{ I'_*}\right)^{\!\!\alpha_j}  
	\\
	 + |f(v,I)| |g(v_*, I_*)| \Big\} \,
  \la v, I \ra_i^2 \ \mathcal{B}_{ij} \
  \dpo \, \md \sigma \, \md r \, \md R \, \md v_* \, \md I_* \, \md v \, \md I.
\end{multline*}	
Using the same strategy as in the previous paragraphs which involves the assumption on the collision kernel  \eqref{p-p ass B} and \eqref{p-p ass B tilde} together with the upper bounds \eqref{p-p upp bounds}, we obtain
\begin{equation*}
	\left\| Q_{ij}(f, g)  \right\|_{L^1_{2,i}} \leq 2 \, \kappa_{ij}^{ub}  \left\| f \right\|_{L^1_{2+\gamma_{ij}, i}} \, \left\| g \right\|_{L^1_{2+\gamma_{ij},j}},
\end{equation*}	
after exploiting the monotonicity of norms \eqref{monotonicity of norm}. Thus, the part of the sum in \eqref{lhs holder} related to only polyatomic interaction becomes
\begin{multline}\label{norm p-p}
		\frac{1}{2}  \sum_{i, j =M+1}^P  \left(    \left\| Q_{ij}(h_i, l_j)  \right\|_{L^1_{2,i}} +  \left\| Q_{ij}(l_i, h_j) \right\|_{L^1_{2,i}}  \right) 
	\\	\leq 
		 \sum_{i, j =M+1}^P \kappa_{ij}^{ub} \left(   \left\| h_i \right\|_{L^1_{2+\gamma_{ij}, i}} \, \left\| l_j \right\|_{L^1_{2+\gamma_{ij},j}} + \left\| l_i \right\|_{L^1_{2+\gamma_{ij}, i}} \, \left\| h_j \right\|_{L^1_{2+\gamma_{ij},j}}   \right).
\end{multline}
Summarizing \eqref{norm m-m}, \eqref{norm p-m} and \eqref{norm p-p}, the left-hand side of the H\"{o}lder condition \eqref{lhs holder} becomes
\begin{multline}\label{lhs holder 2}
\mathcal{I}_H  \leq \max_{1 \leq i, j  \leq P}(\kappa_{ij}^{ub}) 	 \sum_{i, j =1}^P   \left( \left\| h_i \right\|_{L^1_{2+ \gm,i}} \left\| l_j \right\|_{L^1_{2+ \gm,j}} + \left\| l_i \right\|_{L^1_{2+ \gm,i}} \left\| h_j \right\|_{L^1_{2+ \gm,j}} \right)
\\
= 2 \max_{1 \leq i, j  \leq P}(\kappa_{ij}^{ub}) 	  \ \left\| \mathbb{H} \right\|_{L^1_{2+ \gm}} \left\| \mathbb{L} \right\|_{L^1_{2+ \gm}},
\end{multline}		
with $\gm$ from \eqref{gamma w}. 
Interpolating the $(2+\gm)$-th norm of $\mathbb{H}$,
\begin{equation*}
\left\| \mathbb{H} \right\|_{L^1_{2+ \gm}} = \sum_{i =1}^P \left\|  h_i \right\|_{L^1_{2+ \gm, i}} \leq \sum_{i =1}^P \left\|  h_i \right\|_{L^1_{2+ 2 \gm, i}}^{1/2}   \left\|  h_i \right\|_{L^1_{2, i}}^{1/2}  \leq \left\| \mathbb{H} \right\|_{L^1_{2+ 2 \gm}}^{1/2}  \left\| \mathbb{H} \right\|_{L^1_{2}}^{1/2}.
\end{equation*}
Since $\F, \G$ are non-negative, it follows $|h_i| \leq f_i + g_i = l_i$, and therefore 
\begin{equation*}
 \ \left\| \mathbb{H} \right\|_{L^1_{2+ \gm}} \left\| \mathbb{L} \right\|_{L^1_{2+ \gm}} \leq \left\| \mathbb{L} \right\|_{L^1_{2+ 2 \gm}}^{3/2}  \left\| \mathbb{H} \right\|_{L^1_{2}}^{1/2},
\end{equation*}
by monotonicity of norms. Since  $\F, \G \in \Omega$, it implies
\begin{equation*}
	 \left\| \mathbb{L} \right\|_{L^1_{2+ 2 \gm}} \leq 	 \left\| \F \right\|_{L^1_{2+ 2 \gm}} +  \left\| \G \right\|_{L^1_{2+ 2 \gm}} \leq 2 \, \Cs.
\end{equation*}
This allows to finally conclude for \eqref{lhs holder}
	\begin{equation*}
	\left\| \Q(\F) - \Q(\G) \right\|_{L^1_2}  \leq 6 \, \Cs^{3/2} \max_{1 \leq i, j  \leq P}(\kappa_{ij}^{ub}) 	 \left\| \F - \G \right\|_{L^1_{2}}^{1/2},
\end{equation*}
which is exactly \eqref{Holder} with the constant as announced.

	\subsubsection*{Sub-tangent condition} 
	
As shown in \cite{IG-Alonso-BAMS},	the sub-tangent condition follows from the lemma below.
	
	\begin{lemma}\label{Lemma sub tangent}
		Fix $ \F \in \Omega$. Then for any $\eps > 0$ there exists $h_*>0$ such that the set $B$ centered at $\F+h \, \Q(\F)$ with radius $h \, \eps$ denoted by
	$
			B(f+h Q(f,f), h \epsilon),
	$
		has a non-empty intersection with $\Omega$ 
		\begin{equation}\label{ball sub tangent}
			B(\F+h \, \Q(\F), h \, \eps) \cap \Omega \neq \emptyset, \quad \text{for any} \ 0<h<h_*.
		\end{equation}	
	\end{lemma}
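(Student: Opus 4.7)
The plan is to construct, for each small $h > 0$ and large truncation parameter $R$, a non-negative approximant $\G_{h,R} \in L^1_2$ lying in $\Omega$ and within distance $h\eps$ of $\F + h\Q(\F)$. Three membership conditions must be verified: non-negativity, the exact conservation constraints $\mathfrak{m}^i_0[\G_{h,R}] = C^i_0$ and $\mathfrak{m}_2[\G_{h,R}] = C_2$, and the moment cap $\mathfrak{m}_{\kss}[\G_{h,R}] \leq \Cs$. The obstruction to $\F+h\Q(\F)$ being immediately in $\Omega$ is that the loss contribution $-h\,\nu(\F)\,\F$ can drive the candidate negative at large velocities where the collision frequency $\nu$ grows.

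Split $\Q(\F) = \Q^+(\F) - \nu(\F)\F$ into gain and loss parts. Under the kernel assumptions of Section \ref{Sec: Assump Bij} and the moment bounds built into $\Omega$, the species-dependent loss frequency satisfies the pointwise bound $\nu_i(\F)(v,I) \leq C\,\la v,I\ra_i^{\gm}$ uniformly for $\F \in \Omega$. In particular, on any phase-space ball of radius $R$, $\nu_i(\F) \leq C_R$. Fix a smooth cut-off $\chi_R$ supported in this ball with $\chi_R \equiv 1$ on the half-ball, and set
\begin{equation*}
\G_{h,R} \;=\; \F\bigl(1 - h\,\chi_R\,\nu(\F)\bigr) + h\,\Q^+(\F) + \Psi_{h,R},
\end{equation*}
where $\Psi_{h,R} \geq 0$ is a correction described below. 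For $h < 1/C_R$ the bracketed factor is non-negative, $\Q^+(\F) \geq 0$ by construction, and $\Psi_{h,R} \geq 0$ by choice, so $\G_{h,R} \geq 0$ pointwise.

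The correction $\Psi_{h,R}$ is drawn from a fixed finite-dimensional family of compactly supported non-negative profiles (one per species for mass, together with additional energy-carrying bumps) whose coefficients solve the small linear system enforcing $\mathfrak{m}^i_0[\G_{h,R}] = C^i_0$ and $\mathfrak{m}_2[\G_{h,R}] = C_2$. The right-hand sides of that system come from the tails $(1-\chi_R)\nu(\F)\F$ suppressed by truncation; the conservation identities \eqref{weak form mass}, \eqref{weak form energy} ensure no mass or energy is lost at the level of the full $\Q(\F)$, while $\mathfrak{m}_{\kss}[\F] \leq \Cs$ with $\kss > 2+\gm$ forces these tails to have $L^1_2$-norm of order $h\,\omega(R)$ with $\omega(R)\to 0$ as $R\to\infty$. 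Consequently $\Psi_{h,R}$ can be chosen with $L^1_2$- and $L^1_{\kss}$-norms both of order $h\,\omega(R)$, giving $\|\G_{h,R}-\F-h\Q(\F)\|_{L^1_2} \leq h\,\omega(R)$. For the $\kss$-moment bound, the moment estimate \eqref{mk of Q} applied at $k = \kss$ combined with the choice $\Cs \geq \mathfrak{h}_{\kss} = \Eks + B_{\kss}$ in \eqref{ks and C's for Omega} shows that either $\mathfrak{m}_{\kss}[\F]$ is already strictly below $\Cs$ (leaving room for an $O(h)$ increment) or the coercive term dominates $B_{\kss}$, forcing $\mathfrak{m}_{\kss}[\F+h\Q(\F)] \leq \mathfrak{m}_{\kss}[\F] \leq \Cs$ for $h$ small; the compactly supported correction adds only $O(h\,\omega(R))$ to the $\kss$-moment. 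Choosing first $R$ so that $\omega(R) < \eps$ and then $h_* = h_*(R,\eps)$ small enough to secure non-negativity on the truncation ball and the $\kss$-moment cap yields \eqref{ball sub tangent}. The delicate step is the coupled calibration of $\Psi_{h,R}$, which must simultaneously restore finitely many moment equalities, stay non-negative, and be small in both $L^1_2$ and $L^1_{\kss}$; invertibility of the moment Gram matrix on the fixed family of profiles together with the smallness of the truncation defect makes this selection possible.
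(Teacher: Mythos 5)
Your construction takes a genuinely different route from the paper's, and this difference is not merely cosmetic — it opens several gaps that the paper's device deliberately avoids.

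You split $\Q(\F) = \Q^+(\F) - \nu(\F)\F$ and truncate only the loss frequency, producing a candidate $\G_{h,R} = \F(1-h\chi_R\nu(\F)) + h\Q^+(\F) + \Psi_{h,R}$. The paper instead truncates the \emph{input distribution}: it sets $\Fr = \F\,\mathds{1}_{B(\rho)}$ and takes $\Wr = \F + h\,\Q(\Fr)$. This single change pays three dividends that your construction forgoes. First, since $\Q$ conserves per-species mass and total energy for \emph{any} admissible input, $\mathfrak{m}^i_0[\Wr]=\mathfrak{m}^i_0[\F]$ and $\mathfrak{m}_2[\Wr]=\mathfrak{m}_2[\F]$ hold automatically; no correction term $\Psi_{h,R}$ is needed at all. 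Second, $\Fr$ is compactly supported, so $\Fr$ and hence $\Q(\Fr)$ (gain and loss separately) have finite moments of all orders, allowing the $\kss$-moment estimate \eqref{mk of Q} to be applied cleanly. Third, the $L^1_2$-distance reduces to $h\|\Q(\F)-\Q(\Fr)\|_{L^1_2}$, which the already-proved H\"older continuity \eqref{Holder} bounds by $hC_H\|\F-\Fr\|^{1/2}_{L^1_2}$, controlled by choosing $\rho$ large.

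Your approach has concrete gaps at the same places. The correction $\Psi_{h,R}$ must restore the per-species masses $a_i \sim h\int_{|v|>R}\F\,\nu(\F)$ and the energy $e \sim h\int_{|v|>R}\F\,\nu(\F)\la\cdot\ra^2$ removed by truncation, using non-negative profiles. But the energy-to-mass ratio $e/\sum a_i$ of the removed tail grows like $R^2$, while a family of non-negative profiles with \emph{fixed} compact support has uniformly bounded energy-per-particle. For $R$ large the linear system you invoke has no non-negative solution; letting the profile supports grow with $R$ restores solvability but destroys the bound $\mathfrak{m}_{\kss}[\Psi_{h,R}]\lesssim\|\Psi_{h,R}\|_{L^1}$ on which your moment estimate rests. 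Moreover, $\mathfrak{m}_{\kss}[(1-\chi_R)\nu(\F)\F]\lesssim\int_{|v|>R}\F\la\cdot\ra^{\kss+\gm}$, which is not controlled by the constraint $\mathfrak{m}_{\kss}[\F]\leq\Cs$ in $\Omega$ and may be infinite; the same comment applies to $\mathfrak{m}_{\kss}[\Q^+(\F)]$ taken separately, even though the signed combination $\mathfrak{m}_{\kss}[\Q(\F)]$ admits a finite upper bound. These obstructions are not fixable by choosing $h$ small after $R$ large; they are structural to the loss-only truncation. Moving the cut-off from the collision frequency to the argument of $\Q$ — as the paper does — removes all three difficulties simultaneously and should be the route you take.
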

	\begin{proof}
		For   $ \rho>0$ to be determined later, we define sets depending on the index $i$,
		\begin{equation*}
			\begin{split}
	B_i(\rho) & = \left\{ v \in \bR:   \left| v \right|   \leq \rho  \right\}, \quad  i \in \left\{ 1, \dots, M \right\}, \\		
	B_i(\rho) &= \left\{ (v, I) \in \bRfp: \sqrt{   \left| v \right|^2 + \frac{I}{\mP} } \leq \rho  \right\}, \quad  i \in \left\{ M+ 1, \dots, P \right\}.
	\end{split}
		\end{equation*}
For the fixed $\F \in \Omega$, we define its truncation function	
\begin{equation}\label{F trun}
	\Fr = \left[ \begin{matrix}  \Big[ f_i(t, v) \, \mathds{1}_{B_i(\rho)}(v)  \Big]_{i=1,\dots,M} \\  \Big[ f_i(t, v, I) \, \mathds{1}_{B_i(\rho)}(v, I)  \Big]_{i=M+1,\dots,P} \end{matrix} \right].
\end{equation}
In the spirit of \cite{IG-Alonso-BAMS}, we construct the function  of the form 
\begin{equation}\label{Wrho}
\Wr =  \F + h\, \Q(\Fr), \qquad 0<h\leq 1,
\end{equation}	
such that for a certain choice of $\rho$ and $h$, it is an element of the intersection \eqref{ball sub tangent}. 

Firstly, we prove that $\Wr $ is non-negative on a certain interval for $h$. To that end, for $i \in \left\{1, \dots, M \right\}$, we define  the collision frequency and develop its upper bound, by virtue of the assumption on the collision kernels  stated in Section \ref{Sec: Assump Bij} and Lemma \ref{Lemma bounds on Bij tilde},
\begin{align}\label{coll fr m}
	\left[ \nu(\F) \right]_i & = \sum_{j =1}^M \int_{ \bR} \int_{\bS} f_j(v_*) \mathcal{B}_{ij} \, \md \sigma \, \md v_* \nonumber  \\ 
	& \qquad \qquad +  \sum_{j = M+1}^P \int_{ \bRfp} \int_{\bS\times [0,1]} f_j(v_*,I_*) \mathcal{B}_{ij}  \, \dm\, \md \sigma \, \md R \, \md v_* \, \md I_* \nonumber
	\\
&	\leq \sum_{j =1}^M  \kappa_{ij}^{ub}  \int_{ \bR}  f_j(v_*) \left(  \la v\ra_i^{\g} +  \la v_*\ra_j^{\g} \right) \,   \md v_*  \nonumber
	\\
& \qquad \qquad 	+  \sum_{j = M+1}^P \kappa_{ij}^{ub}   \int_{ \bRfp}  f_j(v_*,I_*)  \left(  \la v\ra_i^{\g} +  \la v_*, I_*\ra_j^{\g} \right)  \, \md v_* \, \md I_* \nonumber
	\\
&	\leq 
\kub \left(  \mathfrak{m}_0[\F]  \la v\ra_i^{\gm} +  \mathfrak{m}_{\gm}[\F]  \right) 	
	\leq 
\frac{K}{2}  \left(  1 + \la v\ra_i^{\gm} \right),
\end{align}	
with the constant 
\begin{equation}\label{K}
	K =2 \kub \, \mathfrak{m}_{\gm}[\F].
\end{equation}
Since for $v \in B_i(\rho)$ is $\la v\ra_i^2 \leq 1 + \rho^2$, the estimate \eqref{coll fr m} combined  with the truncated function \eqref{F trun} yields
\begin{equation}\label{Fr mono}
	\left[ \Fr \right]_i 	\left[ \nu(\F) \right]_i \leq K  \left(  1 + \rho^{\gm} \right) \, f_i(t, v), \quad i=1,\dots,M.
\end{equation}
For $i \in \left\{ M+1, \dots, P \right\}$, the  same computations lead to the upper bound for the collision frequency,
\begin{multline*}
	\left[ \nu(\F) \right]_i = \sum_{j =1}^M \int_{ \bR} \int_{\bS\times [0,1]} f_j(v_*) \, \mathcal{B}_{ij}  \, \dm \, \md \sigma \, \md R \, \md v_* \\ +  \sum_{j = M+1}^P \int_{ \bRfp} \int_{\bS\times [0,1]^2} f_j(v_*,I_*) \, \mathcal{B}_{ij}  \, \dpo \, \md \sigma\, \md r \, \md R \, \md v_* \, \md I_*
	\leq 
\frac{K}{2}  \left(  1 + \la v, I\ra_i^{\gm} \right),
\end{multline*}	
which combined with the	truncated function \eqref{F trun} implies, by $\la v, I \ra_i \leq 1 + \rho^2$ for $(v, I) \in B_i(\rho)$,
\begin{equation}\label{Fr poly}
	\left[ \Fr \right]_i 	\left[ \nu(\F) \right]_i \leq K  \left(  1 + \rho^{\gm} \right) \, f_i(t, v, I), \quad i=M+1, \dots, P.
\end{equation}
For $\Wr$,  \eqref{Fr mono} and \eqref{Fr poly} yield positivity of $\Wr$  for certain $h$, 
\begin{equation}\label{Wr positive}
	\Wr \geq   \F - h\, \Fr \, \nu(\Fr) \geq  \F - h\, \Fr \, \nu(\F)  \geq \F \left(  1 - h K  \left(  1 + \rho^{\gm} \right)   \right) \geq 0,
\end{equation}	
for the choice of $h$ as follows, defining $h_*$ from the lemma's statement,
\begin{equation}\label{h}
 0< h   \leq  \frac{1}{K  \left(  1 + \rho^{\gm} \right)} =: h_*.
\end{equation}	
Conservative properties of the collision operator $\Q$ imply
\begin{equation}\label{Wr cons}
\mathfrak{m}_{0}^i[\Wr] = \mathfrak{m}_{0}^i[\F], \qquad \mathfrak{m}_{2}[\Wr] = \mathfrak{m}_{2}[\F].
\end{equation}
In order to show that the boundness of $\mathfrak{m}_{\kss}[\Wr]$, the bound on $	\mathfrak{m}_k[\Q(\F)] $   is recalled \eqref{mk of Q},
\begin{equation*}
	\mathfrak{m}_k[\Q(\F)] 
	\leq  - 	{A}_\star \, \mathfrak{m}_2[\F]^{-\frac{\gw}{k-2}} \ \mathfrak{m}_{k}[\F]^{1+\frac{\gw}{k-2}} +  B_k =: \mathcal{L}_k(\mathfrak{m}_{k}[\F]), \qquad  k\geq\ks.
\end{equation*}
Incorporating arguments of \cite{IG-Alonso-BAMS}, the map $\mathcal{L}_k: [0, \infty) \rightarrow \mathbb{R}$ has only one root given in \eqref{Ek}  at which it changes from positive to negative and its maximal value is achieved at $B_k$, i.e. $\mathcal{L}_k(x) \leq   B_k$, for any $x \in   [0, \infty)$.
In particular, for $k=\kss$,  and assuming $\mathfrak{m}_{\kss}[\F]\leq \Eks$, for $\Wr$ we get
\begin{equation*}
\mathfrak{m}_{\kss}[\Wr] = \mathfrak{m}_{\kss}[\F] + h 	\, \mathfrak{m}_{\kss}[\Q(\Fr)] \leq \Eks + h\, B_{\kss} \leq \Eks +  B_{\kss} =: \mathfrak{h}_{\kss}    \leq \Cs.
\end{equation*}
Otherwise, in the case 	$\mathfrak{m}_{\kss}[\F] > \Eks$, we take sufficiently large $\rho$ to ensure
\begin{equation}\label{rho l e}
\mathfrak{m}_{\kss}[\F] > \Eks \ \Rightarrow \ \mathfrak{m}_{\kss}[\Fr] \geq  \Eks \quad \text{ implying } \quad 	\mathfrak{m}_{\kss}[\Q(\Fr)]  \leq \mathcal{L}_{\kss}(\mathfrak{m}_{\kss}[\Fr]) \leq 0, 
\end{equation}
leading to
\begin{equation*}
 \mathfrak{m}_{\kss}[\Wr] \leq  \mathfrak{m}_{\kss}[\F] \leq \Cs, \quad \text{since} \ \F \in \Omega.
\end{equation*}
Thus, we conclude $	\mathfrak{m}_{\kss}[\Wr]  \leq \Cs$, which together with \eqref{Wr positive} and \eqref{Wr cons} implies $\Wr \in \Omega$ for sufficiently large $\rho$ to have \eqref{rho l e} and $h$ as defined in \eqref{h}. \\

On the other side, for this element $\Wr \in \Omega$, by the H\"{o}lder property \eqref{Holder},
	\begin{equation}\label{sub tangent 2}
h^{-1}    \left\| \F + h  \, \Q(\F) - \Wr \right\|_{L_{2}^1} =     \left\|   \Q(\F) - \Q(\Fr) \right\|_{L_{2}^1} \leq C_H  \left\|   \F - \Fr \right\|_{L_{2}^1}^{1/2} \leq \eps,
\end{equation}
for sufficiently large $\rho$.\\

Therefore, we conclude that $\Wr$ as defined in \eqref{Wrho} for $\rho$ sufficiently large to ensure both \eqref{rho l e} and \eqref{sub tangent 2}  and the corresponding  $h$ from \eqref{h}    is an intersection element of $\Omega$ and $	B(\F+h \, \Q(\F), h \, \eps)$, proving \eqref{ball sub tangent} and concluding this lemma and the sub-tangent condition. 
	\end{proof}

\subsubsection*{One-sided Lipschitz condition } 	
Denote the vector valued  $\chi$ such that $\chi_i(\cdot) = \text{sign}\left( (f_i - g_i)(\cdot) \right) \la \cdot \ra_i^2$, where the argument is either $v$ for $i\in \left\{1, \dots,M \right\}$ or $(v, I)$ for $i\in \left\{M+1, \dots, P\right\}$. Definition of the brackets \eqref{Lipschitz bra} yields
\begin{multline*}
	\mathcal{I}_L :=	[\F-\G, \Q(\F) - \Q(\G)]  
	\\ = \sum_{i =1}^M \int_{ \bR} \left[ \Q(\F) - \Q(\G) \right]_i \,  \chi_i(v)^2 \md v +\sum_{i =M+1}^P \int_{ \bRfp} \left[ \Q(\F) - \Q(\G) \right]_i  \chi_i(v, I)^2 \md I.
\end{multline*}	
{The bi-linear structure of the collision operator \eqref{bi-linear}  }  with the notation  \eqref{H L} {together with } the weak form \eqref{weak form general} imply
\begin{multline}\label{pomocna 16}
	\mathcal{I}_L 
	=	\frac{1}{ 4} \sum_{i, j =1}^M   \int_{(\bR)^2} \int_{\bS }  \left( h_i(v) \, {\ell}_j(v_*) + {\ell}_i(v) h_j(v_*) \right)  	\Delta_{ij}(v, v_*) \, \mathcal{B}_{ij}  \, \md \sigma \,\md v_* \, \md v
	\\
	+ \frac{1}{ 4} \sum_{i, j = M + 1}^P    \int_{(\bRfp)^2} \int_{\bS \times [0,1]^2}  \left( h_i(v, I) \, {\ell}_j(v_*, I_*) + {\ell}_i(v, I) h_j(v_*, I_*) \right) 	\Delta_{ij}(v, I, v_*, I_*)   \\
	\\  \times \mathcal{B}_{ij}  \, \dpo \, \md \sigma \, \md r \, \md R \, \md v_* \, \md I_* \, \md v \, \md I
	\\
	+	{ \frac{1}{2}} \sum_{j=1}^M  \sum_{i=M+1}^P  \int_{\bR \times \bR \times \bRp} \int_{\bS \times [0,1]} \left( h_i(v, I) \, {\ell}_j(v_*) + {\ell}_i(v, I) h_j(v_*) \right)  \Delta_{ij}(v, I, v_*)   \\
	\times \mathcal{B}_{ij}  \, \dm \, \md \sigma \, \md R \, \md v_* \, \md v \, \md I,
\end{multline}
with notation 
\begin{equation*}
	\begin{split}
		& \Delta_{ij}(v, v_*) =  \chi_i(v') + \chi_j(v'_*) - \chi_i(v) - \chi_j(v_*), \quad i, j \in \left\{1, \dots, M \right\}, \\
		&	\Delta_{ij}(v, I, v_*) =  \chi_i(v', I') + \chi_j(v'_*) - \chi_i(v, I) - \chi_j(v_*), \quad j \in \left\{1, \dots, M \right\}, \ i \in \left\{ M+1, \dots, P \right\}, \\
		&	\Delta_{ij}(v, I, v_*, I_*) =  \chi_i(v', I') + \chi_j(v'_*, I'_*) - \chi_i(v, I) - \chi_j(v_*, I_*), \quad   i, j \in \left\{ M+1, \dots, P \right\}.
	\end{split}
\end{equation*}
First take $i, j \in \left\{1, \dots, M \right\}$.  Conservation law of the energy \eqref{m-m coll CL} implies the following bound
\begin{equation*}
	\begin{split}
		& h_i(v) \, \Delta_{ij}(v, v_*) \leq |h_i(v)| \left( \la v'\ra_i^2 + \la v'_*\ra_j^2 - \la v\ra_i^2 + \la v_*\ra_j^2  \right) = 2  |h_i(v)| \la v_*\ra_j^2,  \\
		& h_j(v_*) \, \Delta_{ij}(v, v_*) \leq |h_j(v_*)| \left( \la v'\ra_i^2 + \la v'_*\ra_j^2 + \la v\ra_i^2 - \la v_*\ra_j^2  \right) =  2  |h_j(v_*)| \la v \ra_i^2.
	\end{split}
\end{equation*}
The same computations are performed with brackets that include internal energy, for solely polyatomic interactions $i, j \in \left\{M+1, \dots, P \right\}$. For $ j \in \left\{1, \dots, M \right\}$ and $i  \in \left\{M+1, \dots, P \right\}$,
\begin{equation*}
	\begin{split}
		& h_i(v, I) \, \Delta_{ij}(v, I, v_*) \leq |h_i(v, I)| \left( \la v', I'\ra_i^2 + \la v'_*\ra_j^2 - \la v, I\ra_i^2 + \la v_*\ra_j^2  \right) = 2  |h_i(v, I)| \la v_*\ra_j^2,  \\
		& h_j(v_*) \, \Delta_{ij}(v, I, v_*) \leq |h_j(v_*)| \left( \la v', I'\ra_i^2 + \la v'_*\ra_j^2 + \la v, I\ra_i^2 - \la v_*\ra_j^2  \right) =  2  |h_j(v_*)| \la v, I \ra_i^2.
	\end{split}
\end{equation*}
These estimates imply for \eqref{pomocna 16},
\begin{multline*}
	\mathcal{I}_L \leq  {\frac{1}{2}} \sum_{i, j =1}^M   \int_{(\bR)^2} \int_{\bS }  \left(  |h_i(v)|  \, {\ell}_j(v_*) \la v_*\ra_j^2+ {\ell}_i(v) \la v \ra_i^2 |h_j(v_*)|  \right)  	  \mathcal{B}_{ij}  \, \md \sigma \,\md v_* \, \md v
	\\
	+ { \frac{1}{2}} \sum_{i, j = M + 1}^P    \int_{(\bRfp)^2} \int_{\bS \times [0,1]^2}  \left( |h_i(v,I)| \, {\ell}_j(v_*, I_*) \, \la v_*, I_*\ra_j^2 + {\ell}_i(v, I) \la v, I\ra_i^2 |h_j(v_*, I_*)| \right) 	
	\\  \times \mathcal{B}_{ij}  \, \dpo \, \md \sigma \, \md r \, \md R \, \md v_* \, \md I_* \, \md v \, \md I
	\\
	+	 \sum_{j=1}^M  \sum_{i=M+1}^P  \int_{\bR \times \bR \times \bRp} \int_{\bS \times [0,1]} \left( |h_i(v, I)| \, {\ell}_j(v_*) \la v_* \ra_j^2 + {\ell}_i(v, I) \la v, I \ra_i^2 |h_j(v_*)| \right)   \\
	\times \mathcal{B}_{ij} \, \dm \, \md \sigma \, \md R \, \md v_* \, \md v \, \md I.
\end{multline*}
Using upper bounds on the collision kernels  as stated in Section \ref{Sec: Assump Bij} and Lemma \ref{Lemma: upp bounds appendix}, together with the upper bounds on $\g$ from \eqref{gamma w},
\begin{align}
	\mathcal{I}_L &\leq  {\frac{1}{2}} \sum_{i, j =1}^M  \kappa_{ij}^{ub} \int_{(\bR)^2}    \left(  |h_i(v)|  \, {\ell}_j(v_*) \la v_*\ra_j^2+ {\ell}_i(v) \la v \ra_i^2 |h_j(v_*)|  \right) \la v\ra_i^{\g} \la v_* \ra_j^{\g} 	\,\md v_* \, \md v \nonumber
\nonumber	\\
& 	+ {\frac{1}{2}} \sum_{i, j = M + 1}^P   \kappa_{ij}^{ub}  \int_{(\bRfp)^2}   \left( |h_i(v,I)| \, {\ell}_j(v_*, I_*) \, \la v_*, I_*\ra_j^2 + {\ell}_i(v, I) \la v, I\ra_i^2 |h_j(v_*, I_*)| \right)
\nonumber	\\ 
	& \qquad\qquad\qquad\qquad\qquad\qquad\qquad  \times 	 
	\la v, I\ra_i^{\g} \la v_*, I_* \ra_j^{\g}  \, \md v_* \, \md I_* \, \md v \, \md I
\nonumber	\\
&	+	 \sum_{j=1}^M  \sum_{i=M+1}^P   \kappa_{ij}^{ub} 
	\int_{\bR \times \bR \times \bRp}   \left( |h_i(v, I)| \, {\ell}_j(v_*) \la v_* \ra_j^2 + {\ell}_i(v, I) \la v, I \ra_i^2 |h_j(v_*)| \right)   \\
	& \qquad\qquad\qquad\qquad\qquad\qquad\qquad  \times 	  \la v, I\ra_i^{\g} \la v_* \ra_j^{\g} \md v_* \, \md v \, \md I
\nonumber	\\
&	\leq  \max_{1 \leq i, j  \leq P}(\kappa_{ij}^{ub}) 
	\left( {\frac{1}{2}} \sum_{i, j =1}^M     \left(  \left\| h_i \right\|_{L^1_{\gm, i}}  \, \left\| {\ell}_j \right\|_{L^1_{2+\gm,j}}  + \left\| {\ell}_i \right\|_{L^1_{2+\gm, i}} \left\|h_j \right\|_{L^1_{\gm, j}}  
	\right)  
	 \right. \nonumber \\ & \left.
	+ {\frac{1}{2}} \sum_{i, j = M+1}^P     \left(  \left\| h_i \right\|_{L^1_{\gm, i}}  \, \left\| {\ell}_j \right\|_{L^1_{2+\gm,j}}  + \left\| {\ell}_i \right\|_{L^1_{2+\gm, i}} \left\|h_j \right\|_{L^1_{\gm, j}}  
	\right)   
	\right. \nonumber \\  & \left.
	+	 \sum_{j=1}^M  \sum_{i=M+1}^P   \left(  \left\| h_i \right\|_{L^1_{\gm, i}}  \, \left\| {\ell}_j \right\|_{L^1_{2+\gm,j}}  + \left\| {\ell}_i \right\|_{L^1_{2+\gm, i}} \left\|h_j \right\|_{L^1_{\gm, j}}  
	\right)   \right)
\nonumber	\\
	& = \max_{1 \leq i, j  \leq P}(\kappa_{ij}^{ub}) 
	\left(   \sum_{i, j =1}^M       \left\| h_i \right\|_{L^1_{\gm, i}}  \, \left\| {\ell}_j \right\|_{L^1_{2+\gm,j}}   
	+  \sum_{i, j = M+1}^P        \left\| h_i \right\|_{L^1_{\gm, i}}  \, \left\| {\ell}_j \right\|_{L^1_{2+\gm,j}}   
   \right. \nonumber \\  & \left.
	+	 \sum_{j=1}^M  \sum_{i=M+1}^P   \left(  \left\| h_i \right\|_{L^1_{\gm, i}}  \, \left\| {\ell}_j \right\|_{L^1_{2+\gm,j}}  + \left\| {\ell}_i \right\|_{L^1_{2+\gm, i}} \left\|h_j \right\|_{L^1_{\gm, j}}  
	\right)   \right)  =  \max_{1 \leq i, j  \leq P}(\kappa_{ij}^{ub})   \left\| \mathbb{L} \right\|_{L^1_{2+\gm}}  \left\| \mathbb{H} \right\|_{L^1_{\gm}} \label{Lipsch L H}
	\\ 
	&  \leq { 2 } \, \Cs \,  \max_{1 \leq i, j  \leq P}(\kappa_{ij}^{ub})  \,  \left\| \mathbb{H} \right\|_{L^1_{2}},
\label{Lipsch fin}
\end{align}
since $\F, \G \in \Omega$ and $\gm \leq 2$. This concludes the proof.
\end{proof}
{The idea of the proof of Theorem \ref{Th 2} is to use the fact that the Boltzmann operator is one-sided Lipschitz assuming only $(2 + \gm - \gw)^+$ moments, thus, an approximating sequence of solutions can be drawn from Theorem \ref{cauchy-1} and pass to the limit.  This follows a similar, but perhaps more direct, idea from \cite{mischler-wennberg} and can be found in detail in \cite[Lemma 23]{IG-Alonso-BAMS}.
\begin{proof}[Proof of Theorem \ref{Th 2}]
Let $0\leq \F_0\in\tilde\Omega$. 
Then, by density, there exits an approximating sequence   $\{\F^{j}_0\} \subset \Omega$ to $\F_0$, say strongly in $L^1(\tilde\Omega)$.  Now, following the computations performed for the one-sided Lipschitz condition {and the inequality \eqref{Lipsch L H}, } it holds that, for any $j\,,\,l\in \mathbb{N}$,
\begin{align}\label{Lipsc-*}
	\begin{split}
		\frac{\md}{\md t}\big\| \F^{j} - \F^{l} \big\|_{L^1_2}  &= {\left[  \F^{j}- \F^{l}, \Q(\F^{j}) - \Q(\F^{l}) \right]} \\
		&\leq  \max_{1 \leq i, j  \leq P}(\kappa_{ij}^{ub})  \, \big\| \F^j + \F^l \big\|_{L^1_{2+\gm}}  \,  \big\| \F^{j} - \F^{l} \big\|_{L^1_{2}}, 
	\end{split}
\end{align}
where $\F^j$ is the solution to  the Boltzmann problem associated to the initial condition $\F^j_0$, given in Theorem \ref{cauchy-1}  { and brackets are defined in \eqref{Lipschitz bra}.  }
Use moment interpolation { formula \eqref{mom interpolation} together with the H\"older inequality} to control the norm 
\begin{equation*}
\big\| \F^j + \F^l \big\|_{L^{1}_{2+\gm}} \leq \big\| \F^j + \F^l \big\|^{1-\theta}_{L^{1}_{2+k_1}}\big\| \F^j + \F^l \big\|^{\theta}_{L^{1}_{2+\gm+k_2}}\,,
\end{equation*}
choosing for any $\varepsilon\in(0,\gw)$ 
\begin{equation*}
k_1 = \gm - \gw + \varepsilon\,,\qquad k_2=\frac{\gw\,(\gm - \gw) + \varepsilon^2}{\varepsilon}\,, \qquad { \theta= \frac{\varepsilon (\gw - \varepsilon)}{\gw (\gm -\gw + \varepsilon)}}, 
\end{equation*}
and taking $\varepsilon$ sufficiently small such that $2 < 2+k_1 = (2 + \gm - \gw)^+$.  Consequently, it follows by propagation of moments, estimates \eqref{poly prop} or \eqref{prop below ks}, that
\begin{equation*}
\sup_{t\geq0}\big\| \F^j + \F^l \big\|_{L^{1}_{2+k_1}} \leq C_1\big(\mathfrak{m}_{2},\mathfrak{m}_{2+k_1}[\F_0]\big)\,.
\end{equation*}
In this inequality we used that $\max\big\{ \mathfrak{m}_{2+k_1}[\F^{j}_0], \mathfrak{m}_{2+k_1}[\F^{l}_0]\big\}\leq 2\,\mathfrak{m}_{2+k_1}[\F_0]$, taking $j$ and $l\in \mathbb{N}$ sufficiently large if necessary.  Also, this choice of $k_1$ and $k_2$ implies that
\begin{equation*}
\frac{\gm+k_2}{\gw}\,\theta = \frac{\gw^2-\varepsilon^2}{\gw^2}\,,
\end{equation*}
which leads, thanks to generation of moments (estimates \eqref{poly gen} or \eqref{Ek1}), that
\begin{equation*}
\big\| \F^j + \F^l \big\|^{\theta}_{L^{1}_{2+\gm+k_2}} \leq C_{2}\big(\mathfrak{m}_{2}\big)\Big(1 + t^{-\frac{\gw^2-\varepsilon^2}{\gw^2}}\Big)\,,
\end{equation*}
and consequently,
\begin{equation*}
\big\| \F^j + \F^l \big\|_{L^{1}_{2+\gm}} \leq C_{3}\big( \mathfrak{m}_{2},\mathfrak{m}_{2+k_1}[\F_0]\big) \Big(1 + t^{-\frac{\gw^2-\varepsilon^2}{\gw^2}}\Big)\,.
\end{equation*}
Then, performing integration in \eqref{Lipsc-*} it holds that
\begin{align*}
\big\| \F^{j}(t) - \F^{l}(t) \big\|_{L^1_2} &\leq \big\| \F^{j}(0) - \F^{l}(0)\|_{L^1_2}\exp\bigg( \max_{1 \leq i, j  \leq P}(\kappa_{ij}^{ub})\int_0^t \, \big\| \F^j + \F^l \big\|_{2+\gm} d\tau \bigg)\\
&\leq \big\| \F^{j}(0) - \F^{l}(0)\|_{L^1_2}\exp\bigg(\frac{\gw^2}{\varepsilon^2} \, \max_{1 \leq i, j  \leq P}(\kappa_{ij}^{ub})\, C_3\,\big(t + t^{\frac{\varepsilon^2}{\gw^2}}\big) \bigg)\,.
\end{align*}
Since $\{\F^{j}_0\}$ is Cauchy in $L^{1}_{2}$, previous estimate implies that $\{\F^{j}(t)\}$ is Cauchy in $L^{\infty}([0,T),L^{1}_{2})$ for any $T>0$ and, as such, there is a strong limit $\F$ for such sequence.  The theorem follows after passing to the limit since the limit $\F$ solves the mixture system with associated initial datum $\F_0$.  The fact that $\F\in\mathcal{C}([0,\infty), \tilde\Omega)  \cap \mathcal{C}^1((0,\infty), L^1_2)$ follows a standard argument.
\end{proof}}

\section*{Acknowledgments}
R. J. Alonso   was supported by the Conselho Nacional de Desenvolvimento Cient\'{i}fico e Tecnol\'{o}gico (CNPq), grant \emph{Bolsa de Produtividade em Pesquisa (303325/2019-4)}. I. M. Gamba was supported by USA grants from the National Science Foundation DMS-2009736, and the  Department of Energy grant  DE-SC0016283 on  \emph{Simulation Center for Runaway Electron Avoidance and Mitigation}.  M. Pavi\'c-\v Coli\'c gratefully acknowledges the support from the Science Fund of the Republic of Serbia, \emph{PROMIS, \#6066089, MaKiPol} and the Ministry of Education, Science and Technological Development of the Republic of Serbia  \emph{\#451-03-47/2023-01/200125}, as well as  from the \emph{Alexander von Humboldt Foundation}.

\subsection*{Data availability statement}
Data sharing not applicable to this article as no datasets were generated or analysed during the current study.

\appendix
\section{}\label{Sec: App}

\begin{lemma}[Bounds on the collision kernels]\label{Lemma bounds on Bij tilde}
	For the collision kernels  \eqref{m-m ass B tilde}, \eqref{p-m ass B tilde}, \eqref{p-p ass B tilde} the following estimates hold,
	\begin{alignat}{3}
		&  L_{ij} \la v \ra_i^{\gamma_{ij} }  -  \la v_* \ra_j^{\gamma_{ij}}   \leq  \tilde{\mathcal{B}}_{ij}(v, v_*) \leq  \la v \ra_i^{\gamma_{ij} }   + \la v_* \ra_j^{\gamma_{ij}}, \ i,j \in \left\{ 1, \dots, M \right\}  \label{m-m bounds on Bij tilde} \\
		& L_{ij} \la v, I \ra_i^{\gamma_{ij} }   - \la v_* \ra_j^{\gamma_{ij} }  \leq \tilde{\mathcal{B}}_{ij}(v, v_*, I) \leq  \la v, I \ra_i^{\gamma_{ij} }   + \la v_* \ra_j^{\gamma_{ij} },  \  j \in \left\{1,...,M\right\},  i \in \left\{ M+1,...,P \right\}  \label{p-m bounds on Bij tilde} \\
		& L_{ij} \la v, I \ra_i^{\gamma_{ij} }  -  \la v_*, I_* \ra_j^{\gamma_{ij}} \leq  \tilde{\mathcal{B}}_{ij}(v, v_*, I, I_*) \leq  \la v, I \ra_i^{\gamma_{ij} }   + \la v_*, I_* \ra_j^{\gamma_{ij}},  \ i, j \in \left\{ M+1, \dots, P \right\} \label{p-p bounds on Bij tilde}
	\end{alignat}
	where the involved constant $L_{ij}$  does not depend on the nature of interactions, but only on mass species and the rate $\gamma_{ij}$,
	\begin{equation}\label{L1 L2}
		L_{ij} = \left(\frac{\bs}{2}\right)^{\gamma_{ij}/2} \min\left\{1, 2^{1-\gamma_{ij}} \right\},
	\end{equation}
	with $\bs$ as defined in \eqref{sij min}. 
\end{lemma}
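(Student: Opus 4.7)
The plan is to treat all three interaction types in the unified form \eqref{tilde Bij 3}, namely $\tilde{\mathcal{B}}_{ij}=(E_{ij}/\mP)^{\gamma_{ij}/2}$, and to split the argument into a straightforward upper bound and a more delicate lower bound, both of which reduce to the same pair of ``scalar'' inequalities at the level of $E_{ij}/\mP$ versus $\la\cdot\ra_i^2,\la\cdot\ra_j^2$. The key analytic fact used throughout is that $\gamma_{ij}\in[0,2]$ implies $\gamma_{ij}/2\in[0,1]$, so that the map $x\mapsto x^{\gamma_{ij}/2}$ is concave and subadditive: $(a+b)^{\gamma_{ij}/2}\le a^{\gamma_{ij}/2}+b^{\gamma_{ij}/2}$ for $a,b\ge 0$.

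For the upper bounds in \eqref{m-m bounds on Bij tilde}--\eqref{p-p bounds on Bij tilde} I would first observe, from the identity \eqref{en tot}, that
\begin{equation*}
\frac{E_{ij}}{\mP}=\Etot_{ij}-2-\frac{m_i+m_j}{2\mP}|V|^2\le \Etot_{ij}=\la\cdot\ra_i^2+\la\cdot\ra_j^2,
\end{equation*}
where the argument of the brackets is adapted to whether $i,j$ are monatomic or polyatomic. Applying $x\mapsto x^{\gamma_{ij}/2}$ and subadditivity immediately yields $\tilde{\mathcal{B}}_{ij}\le \la\cdot\ra_i^{\gamma_{ij}}+\la\cdot\ra_j^{\gamma_{ij}}$.

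For the lower bounds the heart of the matter is to establish a single pointwise estimate of the form
\begin{equation*}
\frac{\bs}{2}\,\la\cdot\ra_i^2 \;\le\; \frac{E_{ij}}{\mP} + \la\cdot\ra_j^2,
\end{equation*}
valid in each of the three settings. To prove it I would split the left-hand side into its constant, kinetic and internal-energy contributions according to \eqref{brackets}. The constant part reduces to $\bs/2\le 1$ and the internal part to $\bs I\le 2I\le 2(I+2I_*)$, both trivial since $\bs\in(0,1)$. The kinetic part is the main obstacle: one must show
\begin{equation*}
\frac{\bs\, m_i}{2}|v|^2 \;\le\; \mu_{ij}|u|^2 + m_j|v_*|^2,
\end{equation*}
(and its evident analogue with $I_*$ absent in the poly-mono and mono-mono cases). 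Here I would use the center-of-mass decomposition \eqref{V-u} to write $v=V+(1-s_{ij})u$ and view the right-hand side as a quadratic form in $(V,u)$; completing the square in $V$ (equivalently, minimizing the right-hand side at fixed $u$ over admissible $v_*$) and using the algebraic identity $(1-s_{ij})m_i=s_{ij}m_j$ reduces the desired bound to the elementary inequality $\bs\le 2\min\{s_{ij},1-s_{ij}\}\cdot(m_j/(2m_i+m_j))\cdot(\text{mass factor})$, which holds by the definition \eqref{sij min} of $\bs$.

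Once this scalar inequality is in hand, raising both sides to $\gamma_{ij}/2\in[0,1]$ and applying subadditivity to the right-hand side gives
\begin{equation*}
\Big(\tfrac{\bs}{2}\Big)^{\gamma_{ij}/2}\la\cdot\ra_i^{\gamma_{ij}} \;\le\; \Big(\tfrac{E_{ij}}{\mP}+\la\cdot\ra_j^2\Big)^{\gamma_{ij}/2}\;\le\; \Big(\tfrac{E_{ij}}{\mP}\Big)^{\gamma_{ij}/2}+\la\cdot\ra_j^{\gamma_{ij}} \;=\; \tilde{\mathcal{B}}_{ij}+\la\cdot\ra_j^{\gamma_{ij}},
\end{equation*}
which is exactly the lower bound with constant $(\bs/2)^{\gamma_{ij}/2}$. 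The extra factor $\min\{1,2^{1-\gamma_{ij}}\}$ in \eqref{L1 L2} would arise from replacing the subadditivity step by the sharper/normalised estimate $(a+b)^{\gamma_{ij}/2}\le 2^{\gamma_{ij}/2-1}(a^{\gamma_{ij}/2}+b^{\gamma_{ij}/2})\cdot 2$ when $\gamma_{ij}\ge 1$, which produces the stated $L_{ij}$ uniformly in $\gamma_{ij}\in[0,2]$. The kinetic quadratic-form step described above is where the only real work lies; the rest is bookkeeping across the three interaction types and is identical in structure thanks to the unified representation of $E_{ij}$.
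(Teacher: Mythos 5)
Your upper bound is the same as the paper's: compare $E_{ij}/\mP$ with the sum of squared brackets via \eqref{en tot}, then apply subadditivity of $x\mapsto x^{\gamma_{ij}/2}$.

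For the lower bound you take a genuinely different route. The paper works at the level of square roots: it proves $\sqrt{E_{ij}/\mP}\ge \sqrt{\bs/2}\,\la v,I\ra_i - \la v_*,I_*\ra_j$ via the triangle inequality $|v-v_*|\ge |v|-|v_*|$ together with $\sqrt{a+b}\ge \tfrac1{\sqrt2}(\sqrt a+\sqrt b)$ and $1+\sqrt x\ge\sqrt{1+x}$, and then raises this to the power $\gamma_{ij}\in[0,2]$; because $\gamma_{ij}$ can exceed $1$, the step $(a+b)^{\gamma_{ij}}\le \max\{1,2^{\gamma_{ij}-1}\}(a^{\gamma_{ij}}+b^{\gamma_{ij}})$ forces the extra factor $\min\{1,2^{1-\gamma_{ij}}\}$ in $L_{ij}$. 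You instead propose to prove the quadratic estimate $\tfrac{\bs}{2}\la v,I\ra_i^2 \le E_{ij}/\mP + \la v_*,I_*\ra_j^2$ directly and only then raise to the power $\gamma_{ij}/2\in[0,1]$, where plain subadditivity suffices. Your kinetic reduction is correct: minimizing $\mu_{ij}|v-v_*|^2 + m_j|v_*|^2$ over $v_*$ at fixed $v$ gives $\tfrac{m_im_j}{2m_i+m_j}|v|^2$, so the quadratic estimate reduces to $\tfrac{\bs}{2}\le \tfrac{m_j}{2m_i+m_j}$, equivalently $\bs(1+s_{ij})\le 2 s_{ji}$, which holds in both mass-ordering cases. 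So the scalar inequality is true and your plan goes through. In fact it buys a \emph{sharper} constant $L_{ij}=(\bs/2)^{\gamma_{ij}/2}$, without the $\min\{1,2^{1-\gamma_{ij}}\}$ factor — which is a cleaner statement than the paper's, since the Lemma then holds a fortiori.

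Two small corrections. First, the statement of the elementary inequality you reduce to, ``$\bs\le 2\min\{s_{ij},1-s_{ij}\}\cdot(m_j/(2m_i+m_j))\cdot(\text{mass factor})$,'' does not parse (the left side already equals $\min\{s_{ij},1-s_{ij}\}$); the correct reduced inequality is the one above, $\bs(1+s_{ij})\le 2s_{ji}$, and you should state it explicitly. Second, your final paragraph misdiagnoses the origin of the $\min\{1,2^{1-\gamma_{ij}}\}$ factor: raising an inequality to a power in $[0,1]$ and using subadditivity does not produce it, and the ``sharper/normalised estimate $(a+b)^{\gamma_{ij}/2}\le 2^{\gamma_{ij}/2-1}(a^{\gamma_{ij}/2}+b^{\gamma_{ij}/2})\cdot 2$'' is both weaker than subadditivity for exponents below $1$ and irrelevant here. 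That factor is an artifact of the paper's square-root route, not of yours; you should simply observe that your route gives a slightly better $L_{ij}$ and be done.
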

\begin{proof} Let $i, j \in \left\{ M+1, \dots, P \right\}$.
	\subsubsection*{Upper bound}	 
	Since the total energy in center-of-mass framework is only a part of the total energy in particles' framework, due to 
	\begin{equation*}
		\frac{m_i + m_j}{2  } |V|^2 + E_{ij} =\frac{m_i}{2} |v|^2 + I +  \frac{m_j}{2} |v_*|^2 + I_*,
	\end{equation*}
	with $E_{ij}$ from  \eqref{p-p ass B tilde}, it follows 
	\begin{equation}\label{Eij Etot}
		\frac{E_{ij}}{\mP}	 \leq   \frac{m_i}{2 \mP} |v|^2 + \frac{I}{\mP } + \frac{m_j}{2 \mP} |v_*|^2  + \frac{I_*}{\mP}.  
	\end{equation}
	Since $\gamma_{ij}/2 \leq 1$, 
	\begin{equation*}
		\left( \frac{E_{ij}}{\mP} \right)^{\gamma_{ij}/2}  \leq  \la v, I \ra_i^{\gamma_{ij} }   + \la v_*, I_* \ra_j^{\gamma_{ij} },
	\end{equation*}
	as stated in \eqref{p-p bounds on Bij tilde}.
	\subsubsection*{Lower bound}	
	Triangle inequality $| v- v_* | \geq |v| - |v_*|$ and few rearrangements imply
	\begin{equation*}
		\begin{split}
			\sqrt{\frac{E_{ij}}{\mP}} & \geq \frac{1}{\sqrt{2}} \left( \sqrt{\frac{\mu_{ij}}{2\mP}}  |v-v_*| + \sqrt{\frac{I}{\mP}} \right) \geq \sqrt{ \frac{m_j}{2(m_i+m_j)} } \left(  \sqrt{\frac{m_i}{2 \mP}} |v|    + \sqrt{\frac{I}{\mP}} \right) - \frac{1}{\sqrt{2}} \sqrt{\frac{m_j}{2 \mP}} |v_*|  \\
			&\geq \sqrt{ \frac{m_j}{2(m_i+m_j)} } \left( 1 + \sqrt{\frac{m_i}{2 \mP} |v|^2  + \frac{I}{\mP}  } \right) - \frac{1}{\sqrt{2}} \left( 1 + \sqrt{\frac{m_j}{2 \mP} |v_*|^2  + \frac{I_*}{\mP}  } \right) \\
			& \geq \sqrt{\frac{\bs}2} \, \la v, I \ra_i -   \la v_*, I_* \ra_j.
		\end{split}
	\end{equation*}
	Thus,
	\begin{equation*}
		\begin{split}
			\sqrt{\frac{\bs}2} \, \la v, I \ra_i \leq 	\sqrt{\frac{E_{ij}}{\mP}} +   \la v_*, I_* \ra_j.
		\end{split}
	\end{equation*}
	Taking the last inequality to the power $\gamma_{ij}$, 
	\begin{equation*}
		\begin{split}
			\left(\frac{\bs}{2}\right)^{\gamma_{ij}/2}  \, \la v, I \ra_i^{\gamma_{ij}} \leq \max\left\{ 1, 2^{\gamma_{ij}-1} \right\}	\left( 	\left( \frac{E_{ij}}{\mP} \right)^{\gamma_{ij}/2} +   \la v_*, I_* \ra_j^{\gamma_{ij}} \right),
		\end{split}
	\end{equation*}
	implying 
	\begin{equation*}
		\begin{split}
			\left( \frac{E_{ij}}{\mP} \right)^{\gamma_{ij}/2} \geq 		\left(\frac{\bs}{2}\right)^{\gamma_{ij}/2} \min\left\{ 1, 2^{1-\gamma_{ij}} \right\}  \, \la v, I \ra_i^{\gamma_{ij}} -    \la v_*, I_* \ra_j^{\gamma_{ij}},
		\end{split}
	\end{equation*}
	proving \eqref{p-p bounds on Bij tilde}.\\
	
	For other types of interactions, similar computations can be performed.
\end{proof}

\begin{lemma}\label{Lemma: upp bounds appendix}
	Let the energy in the center-of-mass framework $E_{ij}$ be \eqref{m-m en}, \eqref{p-m en} and \eqref{p-p en} depending on different combinations of $i, j \in \left\{1, \dots, P\right\}$ or nature of particles' interaction. The following estimates hold,
	\begin{alignat}{3}
		&   \frac{E_{ij}}{\mP} \leq  \la v \ra_i^2 \,  \la v_* \ra_j^2, \quad  \la v \ra_i  \leq  \la v' \ra_i \,  \la v'_* \ra_j, \quad i,j \in \left\{ 1, \dots, M \right\},  \label{m-m upp bounds} \\
		& \frac{E_{ij}}{\mP} \leq  \la v, I \ra_i^2 \,  \la v_* \ra_j^2,  \quad   \la v, I \ra_i, \ \la v_* \ra_j \leq  \la v', I' \ra_i \, \la v'_* \ra_j, \quad   j \in \left\{1,...,M\right\}, \ i \in \left\{ M+1,...,P \right\},  \label{p-m upp bounds} \\
		&  \frac{E_{ij}}{\mP} \leq  \la v, I \ra_i^2\, \la v_*, I_* \ra_j^2, \quad  \la v, I \ra_i \leq \la v', I' \ra_i\, \la v'_*, I'_* \ra_j,  \quad i, j \in \left\{ M+1, \dots, P \right\}. \label{p-p upp bounds}
	\end{alignat}
\end{lemma}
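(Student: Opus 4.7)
The lemma asserts two structurally different estimates for each type of interaction: (a) a product-type upper bound for the center-of-mass energy $E_{ij}/\mP$ in terms of (un-primed) brackets, and (b) a product-type bound for a single bracket by the product of the two primed brackets associated to the colliding pair. The plan is to handle (a) by an elementary quadratic identity and (b) by combining the already-stated energy conservation laws with the elementary inequality $(a^{2}-1)(b^{2}-1)\geq 0$ for $a,b\geq 1$.

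For (a), the key point is the identity
\begin{equation*}
\frac{m_i}{2}|v|^{2}+\frac{m_j}{2}|v_*|^{2}-\frac{\mu_{ij}}{2}|v-v_*|^{2}
=\frac{1}{2(m_i+m_j)}\bigl|m_i v+m_j v_*\bigr|^{2}\geq 0,
\end{equation*}
which upon dividing by $\mP$ yields $\frac{\mu_{ij}}{2\mP}|v-v_*|^{2}\leq \frac{m_i}{2\mP}|v|^{2}+\frac{m_j}{2\mP}|v_*|^{2}$. Adding $I/\mP$ and/or $I_*/\mP$ on both sides (depending on whether the pair is mono-mono, poly-mono, or poly-poly) and comparing with $(1+a)(1+b)\geq 1+a+b$ applied to $a=\frac{m_i}{2\mP}|v|^{2}(+I/\mP)$ and $b=\frac{m_j}{2\mP}|v_*|^{2}(+I_*/\mP)$ directly gives $E_{ij}/\mP\leq \la\cdot\ra_i^{2}\la\cdot\ra_j^{2}$ in each of the three cases, using the definition \eqref{brackets}.

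For (b), observe that the pointwise conservation laws \eqref{m-m coll CL}, \eqref{p-m coll CL} and \eqref{p-p coll CL} translate, after dividing by $\mP$ and adding the constant $2$, into the identity
\begin{equation*}
\la\cdot\ra_i^{2}+\la\cdot_*\ra_j^{2}=\la\cdot'\ra_i^{2}+\la\cdot'_*\ra_j^{2},
\end{equation*}
where the bracketed arguments are chosen according to \eqref{E tot gen}. Since the relevant bracket $\la\cdot_*\ra_j\geq 1$, this yields $\la\cdot\ra_i^{2}\leq \la\cdot'\ra_i^{2}+\la\cdot'_*\ra_j^{2}-1$. The elementary identity $a^{2}b^{2}-(a^{2}+b^{2}-1)=(a^{2}-1)(b^{2}-1)\geq 0$ for $a=\la\cdot'\ra_i\geq 1$, $b=\la\cdot'_*\ra_j\geq 1$ then gives $\la\cdot\ra_i^{2}\leq \la\cdot'\ra_i^{2}\la\cdot'_*\ra_j^{2}$, and taking square roots produces the stated bound. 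For the poly-mono case one applies the same argument twice, once starting from $\la v,I\ra_i$ and once from $\la v_*\ra_j$, swapping the roles of the two summands on the right.

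No serious obstacle is expected; the only care needed is to keep track of which of the three $E_{ij}$ expressions \eqref{m-m en}, \eqref{p-m en}, \eqref{p-p en} is in play and to verify that the constants $m_i/(2\mP)$ appearing in the brackets \eqref{brackets} match the coefficients appearing after the division by $\mP$ in step (a). The proof therefore reduces to three quick verifications modelled on the mono-mono case.
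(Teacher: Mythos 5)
Your proof is correct and follows essentially the same strategy as the paper's: for part (a) you bound the center-of-mass energy via $\frac{\mu_{ij}}{2}|u|^2\leq\frac{m_i}{2}|v|^2+\frac{m_j}{2}|v_*|^2$ (the paper states this inequality without displaying the $|m_iv+m_jv_*|^2$ identity, but it is the same observation) and then compare with the product of brackets, and for part (b) you invoke the conservation law, drop one nonnegative term, and use $(a-1)(b-1)\geq 0$, which is exactly the paper's computation. The only cosmetic difference is that you make the intermediate steps more explicit, while the paper treats the poly-mono case as the representative example and leaves the other two to the reader.
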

\begin{proof}
	Let	 $j \in \left\{1,...,M\right\}$,  $i \in \left\{ M+1,...,P \right\}$.
	From  \eqref{p-m en} it follows
	\begin{equation*}
		\frac{E_{ij}}{\mP} =\frac{1}{\mP} \left( \frac{\mu_{ij}}{2} |u|^2 + I \right) \leq \frac{m_i}{2\, \mP} |v|^2 + \frac{I}{\mP} +  \frac{m_j}{2 \, \mP} |v_*|^2  \leq   \la v, I \ra_i^2 \, \la v_* \ra_j^2.
	\end{equation*}  
	The conservation law of the energy \eqref{p-m coll CL} implies
	\begin{equation*}
		\frac{m_i}{2} |v|^2 + I, \ \frac{m_j}{2} |v_*|^2  \leq  \frac{m_i}{2} |v'|^2 + I' + \frac{m_j}{2} |v'_*|^2, 
	\end{equation*}
	which yields
	\begin{equation*}
		\la v, I \ra_i^2, \ \la v_* \ra_j^2 \leq  \la v', I' \ra_i^2 \, \la v'_* \ra_j^2.
	\end{equation*}
	Taking the square root, we get \eqref{p-m upp bounds}. \\
	
	For other combinations of $i$ and $j$ similar computations can be performed to conclude the proof.
\end{proof}

\begin{lemma}[ODI's Comparison Lemma for Moments Generation, \cite{IG-Alonso-BAMS}] \label{Lemma: comparison}
Let  $A$, $B$ and $c$ be positive constants and consider a function $y(t)$ which is absolute continuous in $t \in (0, \infty)$ and satisfies
\begin{equation*}
	y'(t) \leq B - A \, y(t)^{1+c}, \quad t>0.
\end{equation*}
Then
\begin{equation*}
	y(t) \leq z(t) := E \left(1 + \frac{K}{t^\beta}\right), \quad t>0,
\end{equation*}
for the choice
\begin{equation*}
	E = \left( \frac{B}{A} \right)^{\frac{1}{1+c}}, \quad \beta=\frac{1}{c}, \quad K= \left( c A \right)^{-1/c} E^{-1}.
\end{equation*}
\end{lemma}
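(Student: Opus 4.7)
The plan is a textbook super-/sub-solution comparison, exploiting the fact that the proposed upper envelope $z(t)=E(1+K/t^{\beta})$ blows up as $t\to 0^{+}$ so that no hypothesis on $y(0^+)$ is needed.  First I rewrite $z$ in a form better suited to computation: with the stated choices of $E,\beta,K$, one has $EK = (cA)^{-1/c}$ and $\beta=1/c$, so
\begin{equation*}
z(t) \;=\; E + (cAt)^{-1/c},\qquad t>0.
\end{equation*}
The core claim is then that this $z$ is a pointwise super-solution of the associated ODE, namely $z'(t) \ge B - A\,z(t)^{1+c}$ for every $t>0$.

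Granted this super-solution property, the comparison is standard.  I consider $w(t):=y(t)-z(t)$; by absolute continuity, $w$ is continuous on $(0,\infty)$, and since $z(t)\to+\infty$ as $t\to 0^+$ while $y$ stays finite on compact subsets of $(0,\infty)$, I have $w(t)\to-\infty$ as $t\to 0^+$.  Suppose, toward contradiction, that $w(t^{*})>0$ for some $t^{*}>0$, and set $t_{0}:=\sup\{s\in(0,t^{*}]:w(s)\le0\}$.  The supremum is finite, $t_0<t^*$, $w(t_0)=0$ by continuity, and $w>0$ on $(t_0,t^{*}]$ by the definition of the supremum.  On this interval both $y$ and $z$ are strictly positive and
\begin{equation*}
w'(t)\;\le\;\bigl(B-Ay^{1+c}\bigr)-\bigl(B-Az^{1+c}\bigr)\;=\;-A\bigl(y^{1+c}-z^{1+c}\bigr) \;\le\; -A(1+c)z(t)^{c}\,w(t)
\end{equation*}
a.e., where the last step uses the mean-value bound $y^{1+c}-z^{1+c}\ge(1+c)z^{c}(y-z)$ valid when $y\ge z>0$.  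Multiplying by the integrating factor $\exp\bigl(A(1+c)\int_{t_0}^{t}z(s)^{c}\,\mathrm{d}s\bigr)>0$ shows that this product is non-increasing, starts at $0$ at $t_{0}$, and must remain $\le 0$, which forces $w\le 0$ on $(t_{0},t^{*}]$, contradicting $w(t^{*})>0$.  This closes the argument.

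The main substantive step is therefore the verification of the super-solution property.  Differentiating gives
\begin{equation*}
z'(t) \;=\; -\frac{(cAt)^{-1/c}}{c\,t} \;=\; -A\,(cAt)^{-\frac{1+c}{c}},
\end{equation*}
while the elementary inequality $(a+b)^{p}\ge a^{p}+b^{p}$ for $a,b\ge 0$ and $p=1+c\ge 1$, applied with $a=E$ and $b=(cAt)^{-1/c}$, together with the defining relation $AE^{1+c}=B$, yields
\begin{equation*}
A\,z(t)^{1+c}\;\ge\; AE^{1+c}+A\,(cAt)^{-\frac{1+c}{c}}\;=\;B-z'(t),
\end{equation*}
which is exactly the super-solution inequality.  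The hard part — though not really hard in this setting — is handling the fact that the hypothesis on $y$ is an a.e. differential \emph{inequality} with only absolute continuity, not a classical ODE: this is why I invoke the integrating-factor form of the Gronwall step (valid for absolutely continuous $w$) rather than, say, appealing to uniqueness of ODE solutions.  Everything else is algebraic manipulation to align the exponents $1/c$ and $(1+c)/c$ and to recognize the cancellation $AE^{1+c}=B$.
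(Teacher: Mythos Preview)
Your approach is essentially the same as the paper's: verify that $z$ is a super-solution via the binomial inequality $(a+b)^{1+c}\ge a^{1+c}+b^{1+c}$, and then run a comparison argument that exploits the blow-up of $z$ at $t\to 0^+$ in lieu of any initial datum for $y$. The algebra and the super-solution verification are correct.

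There is, however, a gap in your comparison step. You claim $w(t)\to -\infty$ as $t\to 0^+$ because ``$y$ stays finite on compact subsets of $(0,\infty)$''; but absolute continuity on $(0,\infty)$ only gives finiteness on compact subsets \emph{of} $(0,\infty)$, not a bound as $t\to 0^+$. Nothing in the hypotheses rules out $y(t)\to +\infty$ there --- indeed $y(t)=(cAt)^{-1/c}$ satisfies the ODI and blows up at the origin --- so the set $\{s\in(0,t^*]:w(s)\le 0\}$ is not obviously nonempty, and your $t_0$ is not yet well defined. The paper sidesteps this by working with the translate $z_\varepsilon(t):=z(t-\varepsilon)$: for any fixed $\varepsilon>0$ the value $y(\varepsilon)$ is finite (since $\varepsilon$ lies in the interior of the domain) while $z_\varepsilon(\varepsilon+\delta)\to+\infty$ as $\delta\to 0^+$, so one can initialize the comparison at $t=\varepsilon+\delta$ and then send $\delta\to 0$ followed by $\varepsilon\to 0$. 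With that modification (or any other argument guaranteeing $w\le 0$ somewhere on $(0,t^*)$) your Gronwall step goes through unchanged.
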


\begin{proof} 
We start noticing that
\begin{equation*}
z'(t) = -\beta E K t^{-(\beta+1)}\,,
\end{equation*}
and that
\begin{equation*}
B-A\,z(t)^{1+c} = B - AE^{1+c}\bigg(1+\frac{K}{t^\beta}\bigg)^{1+c}\leq B - AE^{1+c}\bigg(1+\frac{K^{1+c}}{t^{\beta(1+c)}}\bigg)\,.
\end{equation*}
For the latter we invoked the binomial inequality  $( 1+ \frac{K}{t^\beta} )^{1+c} \geq  1+ (\frac{K}{t^\beta})^{1+c}$.  With the choice of $E,\,\beta$ and $K$ as suggested in the statement of the lemma it follows then 
\begin{equation}\label{ub}
B-A\,z(t)^{1+c}(t) \leq z'(t)\,,\qquad t>0\,.
\end{equation}
Next, note that for any $\epsilon>0$ the translation $z_\epsilon(t):=z(t-\epsilon)$ satisfies \eqref{ub} for $t>\epsilon$ due to the time invariance of the inequality.  Moreover, since $y(t)$ is absolute continuous in $t\in[\epsilon,\infty)$, there exists $\delta_*>0$ such that $z_{\epsilon}(\epsilon+\delta)\geq y(\epsilon+\delta)$ for any $\delta\in(0,\delta_*]$.  Consequently, we conclude the following setting for any $\epsilon>0$ and $\delta\in(0,\delta_*]$
\begin{align*}
\begin{cases}
&y'(t) \leq B  -  A\,y^{1 + c}(t)\,,\qquad  t>\epsilon+\delta\,,\\
&z'_\epsilon(t) \ge B  -  A\, z_\epsilon^{1 + c}(t)\,,\qquad  t>\epsilon+\delta\,,\\
&+\infty >z_\epsilon(\epsilon+\delta) \geq y(\epsilon+\delta)\,. 
\end{cases}
\end{align*}
Since both $z_\epsilon(t)$ and $y(t)$ are absolutely continuous in $t\in[\epsilon+\delta,\infty)$, we can use a standard comparison in ode to conclude that $z_\epsilon(t)\geq y(t)$ for $t>\epsilon+\delta$, valid for any $\epsilon>0$ and $\delta\in(0,\delta_*]$.  Thus, sending first $\delta$ to zero and then $\epsilon$ to zero it follows that $z(t)\geq y(t)$ for any $t>0$ which is the statement of the lemma.
\end{proof}

\begin{theorem}[Existence and Uniqueness Theory for ODE in Banach spaces]
	\label{Theorem general}
	Let $E:=(E,\left\| \cdot \right\|)$ be a Banach space, $\mathcal{S}$ be a bounded, convex and closed subset of $E$, and $\mathcal{Q}:\mathcal{S}\rightarrow E$ be an operator satisfying the following properties:
	\begin{itemize}
		\item[(a)] H\"{o}lder continuity condition
		\begin{equation*}
			\left\| \mathcal{Q}[u] - \mathcal{Q}[v] \right\| \leq C \left\| u-v \right\|^{\beta}, \ \beta \in (0,1), \ \forall u, v \in \mathcal{S};
		\end{equation*}
		\item[(b)] Sub-tangent condition
		\begin{equation*}
			\lim\limits_{h\rightarrow 0+} \frac{\text{dist}\left(u + h \mathcal{Q}[u], \mathcal{S} \right)}{h} =0, \ \forall u \in \mathcal{S};
		\end{equation*}
		\item[(c)] One-sided Lipschitz condition
		\begin{equation*}  
			\left[ \mathcal{Q}[u] - \mathcal{Q}[v], u - v \right] \leq C \left\| u-v \right\|, \ \forall u, v \in \mathcal{S},
		\end{equation*}
		where 	$\left[\varphi,\phi\right]=\lim_{h\rightarrow 0^-} h^{-1}\left(\left\| \phi + h \varphi \right\| - \left\| \phi \right\| \right)$.
	\end{itemize}
	Then the equation
	\begin{equation*}
		\begin{split}
			\partial_t  u = \mathcal{Q}[u], \ \text{for} \ t\in(0,\infty), \ \text{with initial data}  \
			u(0)= u_0 \ \text{in} \ \mathcal{S},
		\end{split}
	\end{equation*}
	has a unique solution in $C([0,\infty),\mathcal{S})\cap C^1((0,\infty),E)$.
\end{theorem}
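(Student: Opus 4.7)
The plan is to follow the classical Banach-space ODE approach of Martin and Bressan cited in the paper as \cite{Martin-ODE, Bressan}, in which the three structural hypotheses play complementary roles: the sub-tangent condition (b) provides a mechanism to keep discrete approximations inside the invariant set $\mathcal{S}$, the H\"{o}lder continuity (a) makes $\mathcal{Q}$ uniformly bounded and continuous on $\mathcal{S}$ so that limits of approximate solutions genuinely satisfy the equation, and the one-sided Lipschitz condition (c) provides both stability between two candidate solutions and uniqueness. The whole argument proceeds by time discretization, Gronwall, and a limiting procedure; because $\mathcal{S}$ is bounded, global existence will be automatic.

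First I would construct approximate solutions. Fix $T>0$ and a partition $0=t_0<t_1<\dots<t_N=T$ with mesh $h=\max_k(t_{k+1}-t_k)$. Set $u^{h}_0:=u_0$ and use (b) inductively to choose
\begin{equation*}
u^{h}_{k+1}\in\mathcal{S}\quad\text{with}\quad \big\|u^{h}_{k+1}-u^{h}_k-(t_{k+1}-t_k)\,\mathcal{Q}(u^{h}_{k})\big\|\leq (t_{k+1}-t_k)\,\eta_k(h),
\end{equation*}
where $\eta_k(h)\to 0$ as $h\to 0$. Boundedness of $\mathcal{S}$ together with (a) give a uniform bound $M:=\sup_{u\in\mathcal{S}}\|\mathcal{Q}(u)\|<\infty$, so the piecewise-linear interpolant $u^h(t)$ lies in $\mathcal{S}$ (using convexity and closedness) and is Lipschitz in $t$ with a constant depending only on $M$ and $\sup_k\eta_k(h)$.

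The key step is to prove that $\{u^{h}\}$ is Cauchy in $C([0,T],E)$. For meshes $h,h'$, using the bracket identity $\frac{d^+}{dt}\|\phi(t)\|\leq [\phi'(t),\phi(t)]$ along piecewise-$C^1$ curves, together with (c) applied to the pair $(u^h_k,u^{h'}_{k'})$ at grid nodes and the H\"{o}lder modulus of $\mathcal{Q}$ controlling the gap between $u^h(t)$ and its nearest node $u^{h}_k$ via $\|u^h(t)-u^h_k\|\leq Mh$, one obtains
\begin{equation*}
\frac{d^+}{dt}\big\|u^h(t)-u^{h'}(t)\big\|\leq C_L\big\|u^h(t)-u^{h'}(t)\big\|+R(h,h'),\qquad R(h,h')\to 0.
\end{equation*}
Gronwall's inequality then forces $\sup_{t\in[0,T]}\|u^h(t)-u^{h'}(t)\|\to 0$, so $u^h\to u$ uniformly on $[0,T]$. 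Closedness of $\mathcal{S}$ yields $u(t)\in\mathcal{S}$, and H\"{o}lder continuity lets one pass to the limit in the integral identity
\begin{equation*}
u^h(t)=u_0+\int_0^t\mathcal{Q}(u^h(s))\,\md s+o(1)
\end{equation*}
to conclude $u\in C([0,\infty),\mathcal{S})\cap C^1((0,\infty),E)$ and $u'=\mathcal{Q}(u)$. Uniqueness then follows immediately from (c): two solutions $u,v$ with the same initial datum satisfy $\frac{d^+}{dt}\|u(t)-v(t)\|\leq C_L\|u(t)-v(t)\|$, so $u\equiv v$.

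The main obstacle will be the Cauchy estimate. Because $\mathcal{Q}$ is only H\"{o}lder, not Lipschitz, and the sub-tangent step produces an $O(h\eta_k(h))$ deviation from the exact Euler iterate at every node, one has to package three error sources — the sub-tangent residual $\eta_k(h)$, the H\"{o}lder gap $\|u^h(t)-u^h_k\|^\beta\lesssim (Mh)^\beta$, and the bracket inequality residual at the node comparison — in such a way that $R(h,h')$ decays while remaining compatible with Gronwall. The uniform bound $\mathcal{Q}(\mathcal{S})\subset\overline{B(0,M)}$ coming from (a), combined with the closed-convex structure of $\mathcal{S}$, is precisely what makes these contributions absorbable.
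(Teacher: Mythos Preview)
The paper does not actually prove this theorem; after the statement it simply writes ``The proof of this Theorem on ODE flows on Banach spaces can be found in \cite{IG-Alonso-BAMS}.'' Your sketch follows the standard Martin--Bressan scheme (Euler polygonal approximants built via the sub-tangent condition, uniform bound on $\mathcal{Q}$ from H\"older continuity and boundedness of $\mathcal{S}$, Cauchy property and uniqueness from the one-sided Lipschitz bracket plus Gronwall), which is exactly the argument the cited reference carries out, so your proposal is consistent with the paper's intended proof.
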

The proof of this Theorem on ODE flows on Banach spaces  can be found in \cite{IG-Alonso-BAMS}.  	As pointed out in \cite{IG-Alonso-BAMS}, for $E:=L_{2}^1$, the Lipschitz brackets are
\begin{equation}\label{Lipschitz bra}
	[\phi, \psi] = \sum_{i =1}^M \int_{ \bR} \psi_{i}(v) \,  \text{sign}(\phi_i(v)) \la v \ra_i^2 \md v +\sum_{i =M+1}^P \int_{ \bRfp} \psi_{i}(v,I) \,  \text{sign}(\phi_i(v,I)) \la v,I \ra_i^2 \md v \, \md I.
\end{equation}

\bibliography{Alonso-Gamba-Pavic}
\end{document}